\documentclass[11pt,a4paper]{article}

\IfFileExists{lmodern.sty}{%
  \usepackage[T1]{fontenc}
  \usepackage{lmodern}%
}{\usepackage[OT1]{fontenc}}
\usepackage[utf8]{inputenc}
\usepackage[margin=1in]{geometry}
\usepackage{amsmath,amssymb,amsthm,mathtools}
\usepackage{bm}
\usepackage{booktabs}
\usepackage{graphicx}
\usepackage{microtype}
\usepackage{enumitem}
\usepackage[numbers,sort&compress]{natbib}
\usepackage{xcolor}
\usepackage{hyperref}
\usepackage{aliascnt}
\usepackage[nameinlink,noabbrev]{cleveref}

\hypersetup{
  colorlinks=true,
  linkcolor=blue!55!black,
  citecolor=teal!55!black,
  urlcolor=magenta!55!black,
  pdftitle={Two-Regime Risk Summaries under Convex Loss: Stability and Computation},
  pdfauthor={Delia Coculescu; Adriana Nistor; Ionel Popescu},
  pdfkeywords={boundary atoms, binary scalar quantization,
    convex loss, cutoff stability, epi-convergence, psi-weak topology,
    empirical consistency, risk summary}
}

\theoremstyle{plain}
\newtheorem{theorem}{Theorem}[section]
\newaliascnt{proposition}{theorem}
\newtheorem{proposition}[proposition]{Proposition}
\aliascntresetthe{proposition}
\newaliascnt{lemma}{theorem}
\newtheorem{lemma}[lemma]{Lemma}
\aliascntresetthe{lemma}
\newaliascnt{corollary}{theorem}
\newtheorem{corollary}[corollary]{Corollary}
\aliascntresetthe{corollary}
\theoremstyle{definition}
\newaliascnt{definition}{theorem}
\newtheorem{definition}[definition]{Definition}
\aliascntresetthe{definition}
\newaliascnt{assumption}{theorem}
\newtheorem{assumption}[assumption]{Assumption}
\aliascntresetthe{assumption}
\newtheorem{algorithm}{Algorithm}[section]
\theoremstyle{remark}
\newaliascnt{remark}{theorem}
\newtheorem{remark}[remark]{Remark}
\aliascntresetthe{remark}
\newaliascnt{example}{theorem}
\newtheorem{example}[example]{Example}
\aliascntresetthe{example}

\crefname{theorem}{theorem}{theorems}
\Crefname{theorem}{Theorem}{Theorems}
\crefname{proposition}{proposition}{propositions}
\Crefname{proposition}{Proposition}{Propositions}
\crefname{lemma}{lemma}{lemmas}
\Crefname{lemma}{Lemma}{Lemmas}
\crefname{corollary}{corollary}{corollaries}
\Crefname{corollary}{Corollary}{Corollaries}
\crefname{definition}{definition}{definitions}
\Crefname{definition}{Definition}{Definitions}
\crefname{assumption}{assumption}{assumptions}
\Crefname{assumption}{Assumption}{Assumptions}
\crefname{remark}{remark}{remarks}
\Crefname{remark}{Remark}{Remarks}
\crefname{example}{example}{examples}
\Crefname{example}{Example}{Examples}
\crefname{algorithm}{algorithm}{algorithms}
\Crefname{algorithm}{Algorithm}{Algorithms}

\newcommand{\R}{\mathbb R}
\newcommand{\Rbar}{\overline{\mathbb R}}
\newcommand{\E}{\mathbb E}
\newcommand{\1}{\mathbf 1}
\newcommand{\Law}{\mathcal L}
\newcommand{\cP}{\mathcal P}
\newcommand{\cM}{\mathcal M}
\newcommand{\cV}{\mathcal V}
\newcommand{\cC}{\mathcal C}
\newcommand{\cO}{\mathcal O}
\newcommand{\argmin}{\operatorname*{arg\,min}}
\newcommand{\dist}{\operatorname{dist}}
\newcommand{\Var}{\operatorname{Var}}
\newcommand{\supp}{\operatorname{supp}}

\newcommand{\PKlimsup}{\mathop{\mathrm{Limsup}}}
\newcommand{\psito}{\xrightarrow{\psi}}
\newcommand{\W}{\mathsf W}
\newcommand{\restr}[2]{#1\!\restriction_{#2}}
\newcommand{\backlink}[1]{\par\smallskip\noindent\emph{\hyperref[#1]{Return to the statement.}}}

\title{Two-Regime Risk Summaries under Convex Loss:\\
Stability and Computation}
\author{%
Mihaela-Adriana Nistor\thanks{University of Bucharest, Bucharest, Romania. Email: \href{mailto:mihaelaadriana.nistor@gmail.com}{mihaelaadriana.nistor@gmail.com}.}
\and
Ionel Popescu\thanks{University of Bucharest, Bucharest, Romania; Institute of Mathematics of the Romanian Academy, Bucharest, Romania. Emails: \href{mailto:ionel.popescu@fmi.unibuc.ro}{ionel.popescu@fmi.unibuc.ro} and \href{mailto:ionel.popescu@imar.ro}{ionel.popescu@imar.ro}.}%
}
\date{}

\begin{document}
\maketitle

\begin{abstract}
We study a two-regime summary of a real-valued loss distribution.  The two
representative levels and the boundary between them are chosen by minimizing
a convex residual loss.  When the distribution has an atom at the boundary,
assigning that atom to the lower or upper regime can give different optimized
costs.  Taking the better whole-atom assignment yields the
lower-semicontinuous cutoff profile.  We prove epi-convergence of this profile
under a loss-adapted \(\psi\)-weak topology and obtain convergence of minimum
values and outer stability of optimal cutoff sets.  Under a unique finite
atom-free limiting cutoff, positive regime masses, and unique conditional
centers, the two fitted levels and a canonical one-jump representation
converge as well.  We also examine the resulting
summary from the viewpoint of monetary risk axioms, establish explicit
failures of external monotonicity and subadditivity, and give finite-support
algorithms with conditional error control.  A four-scenario model illustrates
the cutoff geometry and the role of ties and atoms.
\end{abstract}

\noindent\textbf{Keywords:} boundary atoms; binary scalar quantization;
two-regime risk summary; convex loss; epi-convergence; \(\psi\)-weak topology;
empirical consistency.

\medskip
\noindent\textbf{MSC 2020:} 60B10; 62C05; 62H30; 65C50; 90C25; 91G70.

\section{Introduction}\label{sec:intro}

Monetary risk measures summarize a loss distribution by one capital amount.
This scalar viewpoint is well suited to ranking, allocation, and regulation,
and is supported by the established theory of coherent and convex risk
measures \citep{Artzner1999,FoellmerSchied2002,Kusuoka2001}.  In applications
where moderate and severe losses have different operational meanings,
however, it can also be useful to retain a boundary between two severity
classes and one representative value within each class.  The object studied
here is such a distributional summary: the law chooses both representative
levels and the boundary separating them.

Let \(X\) be a real-valued loss with law \(\mu\), and let \(G\) be a convex
penalty for residual error.  The closed-left version of the problem is
\begin{equation}\label{eq:intro-problem}
 \inf_{a,b,c\in\R}
 \E\!\left[
   G(X-a)\1_{\{X\le c\}}+G(X-b)\1_{\{X>c\}}
 \right].
\end{equation}
For a fixed cutoff, \(a\) and \(b\) are the optimal \(G\)-centers of the two
restricted laws.  Optimizing over \(c\) then gives a nondecreasing two-level
encoding of the realized loss.  We call the resulting object a
\emph{two-regime risk summary}.  Here ``regime'' refers only to two static
severity cells; it does not posit a latent state or a regime-switching time
series.  The report consists of a cutoff and two representative severities,
rather than a single amount of eligible capital, and it is set-valued when
several cuts or centers are equally good.

The choice of \(G\) determines the meaning of the representative levels.
Squared loss gives conditional means, absolute and check losses give medians
and quantiles, asymmetric squared loss gives expectiles, and Huber loss gives
a robust location
\citep{Huber1964,KoenkerBassett1978,NeweyPowell1987}.
Thus the construction belongs naturally to the expected-loss approach used
in forecast scoring and in the risk-quadrangle formulation
\citep{Gneiting2011,RockafellarUryasev2013}.  It can be read as replacing one
unconditional location by two conditional locations whose partition is also
selected from the law.

From approximation theory, the same optimization is binary scalar
quantization with the translation-invariant distortion
\(d_G(x,y)=G(x-y)\).  Convexity orders the two reconstruction levels and
reduces the search over arbitrary measurable cells to a half-line threshold.
For quadratic loss the levels are the two principal points of the
distribution \citep{Flury1990,Tarpey1994,LiFlury1995}; the same criterion
appears in optimal grouping, piecewise-constant approximation, and Otsu
thresholding \citep{Eubank1988,MeaseNair2006,Otsu1979}.  The Lloyd--Max
conditions and the general theory of optimal quantization provide the basic
approximation framework \citep{Max1960,Lloyd1982,GrayNeuhoff1998,GrafLuschgy2000}.
For ordered discrete data, the corresponding prefix formulations are closely
related to one-dimensional clustering algorithms
\citep{Fisher1958,WangSong2011,NielsenNock2014}.  Least-squares stump
estimation also fits two levels and a split point, although its inferential
questions usually concern a smooth regression function and a continuous
design rather than convergence of a distributional cutoff profile at a
charged boundary \citep{BuehlmannYu2002,BanerjeeMcKeague2007}.

Throughout, we interpret these approximation concepts in the language of a
risk model: codepoints become severity levels, the Voronoi boundary becomes a
reported cutoff, and perturbations of the source law describe changes in, or
approximations to, the loss model.

The present work is also connected to two recent risk-oriented formulations.
\citet{NistorPopescu2026} study the two-step convex-loss problem, establish
the half-line reduction, and, for quadratic loss, investigate uniqueness
under log-concavity and a higher-dimensional elliptical extension.  Here we
retain the one-dimensional threshold representation and concentrate instead
on stability of the cutoff, its behavior at atoms, and its interpretation as
a risk summary.  \citet{FaugerasPages2024} use constrained two-point
quantization to construct a magnitude--propensity risk summary and also
discuss unconstrained two-point compression.  Their principal construction
fixes one codepoint at zero, while both levels are endogenous in
\eqref{eq:intro-problem}; their quadratic analysis also records possible mass
at a threshold.  These perspectives place the present summary within the
existing use of quantization for multicomponent descriptions of risk.

Stability of optimal codebooks under changing laws has a substantial
literature, beginning with consistency results for quadratic \(k\)-means and
extending to more general distortions, algorithms, and set-valued limits
\citep{Pollard1981,Pollard1982,AbayaWise1984,SabinGray1986,LiuPages2020,
Jaffe2025}.  General variational results for integral objectives with changing
measures and discontinuous integrands supply a broader framework
\citep{ArtsteinWets1988,ArtsteinWets1994,VogelLachout2003,
FeinbergKasyanovRoyset2023,RockafellarWets1998}.  For the cutoff coordinate,
the discontinuity has a particular form.  If \(\mu\) charges \(c\), the
closed-left and open-left thresholds produce two different restricted laws;
the center within each restriction is then reoptimized.  A sequence of nearby
discrete cuts can even converge to a fractional allocation of the limiting
boundary mass, although the admissible limiting action must assign that atom
wholly to one regime.

The specific novelty of the paper is the resolution of this boundary-atom
problem in the cutoff coordinate.  We take the better of the two whole-atom
assignments and show that the resulting objective is the greatest
lower-semicontinuous minorant of the raw cutoff profile.  Under the weighted
weak topology associated with
\(\psi(x)=1+G((1+\eta)x)\), these relaxed profiles epi-converge.  The key
limiting argument uses concavity of the optimized one-regime value as a
function of the underlying measure: a fractional limiting allocation cannot
improve on both whole-atom endpoint assignments.  It follows that minimum
values converge and that cluster points of exact or asymptotically optimal
cutoffs remain optimal.  When the population cutoff is unique, finite,
atom-free, and separates two positive-mass cells with unique centers, the
cutoff, both levels, and a canonical one-jump graph converge jointly.  The
same analysis yields consistency for deterministic finite-support and
empirical approximations.  This topology retains the tail integrability
needed by an unbounded residual loss
\citep{KratschmerSchiedZaehle2012,KratschmerSchiedZaehle2014}.

The risk interpretation leads to a complementary question: which properties
of scalar risk functionals remain meaningful for this regime-valued output?
We formulate the answer at the level of optimizer correspondences and encoded
losses.  Law invariance, cash equivariance, positive-scale equivariance for
homogeneous losses, and monotonicity of each fitted encoding in the realized
loss are retained.  In contrast, reoptimizing the partition can destroy
external monotonicity and subadditivity; explicit finite examples show that
the latter can fail even for comonotone losses.  For finite-support laws we
connect the analysis to an exhaustive ordered-prefix scan, provide routines
for several convex losses, and give a conditional objective-gap and
minimizer-retention bound.  A four-scenario model makes cutoff gaps, tied
partitions, and duplicate atoms visible and supplies a reproducible test case.

The paper proceeds as follows.  \Cref{sec:framework} defines the summary and
establishes its structural properties.  \Cref{sec:risk-axioms} studies its
transformation and order properties.  \Cref{sec:stability} develops the
cutoff-profile and optimizer stability theory, and \Cref{sec:approx} applies
it to deterministic and empirical approximation.  \Cref{sec:computation}
gives finite-support algorithms and error control, while
\Cref{sec:fourpoint} presents the four-scenario example.  The final section
collects the interpretation, limitations, and possible extensions.

\section{Mathematical framework}\label{sec:framework}

\subsection{Notation and elementary probability concepts}\label{sec:notation}

Write \(\cP(\R)\) for the Borel probability laws on \(\R\), and
\(\cM_+(\R)\) for the finite nonnegative Borel measures.  For
\(\nu\in\cM_+(\R)\) and a Borel set \(B\), the restriction
\(\restr{\nu}{B}\) is defined by
\((\restr{\nu}{B})(A)=\nu(A\cap B)\).  Weak convergence
\(\nu_n\Rightarrow\nu\) means convergence of
\(\int f\,d\nu_n\) to \(\int f\,d\nu\) for every bounded continuous
\(f\).  We use the compactified line
\(\Rbar=\R\cup\{-\infty,+\infty\}\), with its usual order topology;
thus \(c_n\to+\infty\) has its literal meaning.

For probability laws \(\mu,\nu\), first-order stochastic order
\(\mu\preceq_{\mathrm{st}}\nu\) means
\(\mu((t,\infty))\le\nu((t,\infty))\) for every \(t\), or equivalently
that there is a coupling \(X\sim\mu,Y\sim\nu\) with \(X\le Y\) almost
surely.  We use \(u_+=\max(u,0)\) and \(u_-=\max(-u,0)\).

\subsection{One-regime centers}

\begin{assumption}\label{ass:G}
Unless stated otherwise, \(G:\R\to[0,\infty)\) satisfies:
\begin{enumerate}[label=(G\arabic*),leftmargin=2.5em]
\item \(G\) is finite and convex;
\item \(G(0)=0\), and \(0\) is the unique minimizer of \(G\);
\item \(G\) is coercive: \(G(u)\to\infty\) as \(|u|\to\infty\).
\end{enumerate}
\end{assumption}

A finite convex function on \(\R\) is continuous and has finite left and
right derivatives \(G'_{-}\le G'_{+}\) at every point.  The normalization
in (G2) is natural for residual error.  Uniqueness is needed for the
support-hull and regime-separation assertions below.  Flat-bottom losses
are discussed in \Cref{rem:flat-bottom}.

The following elementary bound will be used repeatedly:
\begin{equation}\label{eq:convex-domination}
 G(x-m)
 \le
 \frac{1}{1+\eta}G((1+\eta)x)
 +\frac{\eta}{1+\eta}
  G\!\left(-\frac{1+\eta}{\eta}m\right).
\end{equation}
It follows because \(x-m\) is the convex combination of
\((1+\eta)x\) and \(-((1+\eta)/\eta)m\), with weights
\(1/(1+\eta)\) and \(\eta/(1+\eta)\).  In particular, translated losses
are dominated by one common integrable envelope while \(m\) ranges over a
compact set.

For \(\nu\in\cM_+(\R)\), define
\begin{align}
 \cV_G(\nu)
   &:=\inf_{m\in\R}\int G(x-m)\,\nu(dx), \label{eq:VG}\\
 \cM_G(\nu)
   &:=\argmin_{m\in\R}\int G(x-m)\,\nu(dx),
      \qquad \nu\ne0, \label{eq:MG}
\end{align}
and set \(\cV_G(0)=0\).  Multiplying a nonzero measure by a positive
constant leaves \(\cM_G\) unchanged, so unnormalized restrictions can be
used without repeatedly forming conditional laws.

\begin{proposition}[Existence and interval structure]\label{prop:center-existence}
Suppose
\(\int G((1+\eta)x)\,\nu(dx)<\infty\) for some \(\eta>0\).
If \(\nu\ne0\), then \(\cM_G(\nu)\) is a nonempty compact interval.
If \(G\) is strictly convex, it is a singleton.  Moreover
\[
 \cM_G(\nu)\subseteq
 \bigl[\inf\supp(\nu),\,\sup\supp(\nu)\bigr],
\]
with the usual interpretation when the support is unbounded.
\end{proposition}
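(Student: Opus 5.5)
The plan is to study the function \(F(m):=\int G(x-m)\,\nu(dx)\) directly: show that it is finite, convex, continuous and coercive on \(\R\), and read off the three assertions from these properties.

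\emph{Finiteness.} By \eqref{eq:convex-domination},
\[
 G(x-m)\le \frac{1}{1+\eta}G((1+\eta)x)+\frac{\eta}{1+\eta}G\!\left(-\tfrac{1+\eta}{\eta}m\right).
\]
The first term is \(\nu\)-integrable by hypothesis. The second is a finite constant in \(x\), and \(\nu\) is a finite measure. Hence \(F(m)<\infty\) for every \(m\).

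\emph{Convexity and continuity.} Convexity of \(F\) is inherited pointwise from \(G\). A finite convex function on \(\R\) is continuous, so \(F\) is continuous. Alternatively, continuity follows by dominated convergence, since the same envelope is uniform for \(m\) in a compact set.

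\emph{Coercivity.} This is the step needing some care, because \(\nu\) may have unbounded support. Since \(\nu\ne0\) is finite, I would choose \(R\) with \(\nu([-R,R])=:\delta>0\). For \(|x|\le R\) we have \(|x-m|\ge |m|-R\). A convex function with unique minimizer \(0\) is nonincreasing on \((-\infty,0]\) and nondecreasing on \([0,\infty)\), so \(G(x-m)\ge \min\{G(|m|-R),G(-(|m|-R))\}\) once \(|m|>R\). Using \(G\ge0\) elsewhere,
\[
 F(m)\ge \delta\,\min\{G(|m|-R),G(R-|m|)\}\to\infty .
\]
Therefore the sublevel sets of \(F\) are bounded. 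They are also closed by continuity, so they are compact, and \(F\) attains its infimum. The argmin is a sublevel set of a convex function, so it is convex; being compact, it is a nonempty compact interval.

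\emph{Strict convexity.} If \(G\) is strictly convex, then for \(m_1\neq m_2\) the inequality \(G(x-\tfrac{m_1+m_2}{2})<\tfrac12(G(x-m_1)+G(x-m_2))\) holds for every \(x\). Integrating against \(\nu\ne0\) gives a strict inequality for \(F\). Hence two distinct minimizers are impossible, and \(\cM_G(\nu)\) is a singleton.

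\emph{Support hull.} Let \(s=\sup\supp(\nu)\), assumed finite; there is nothing to prove if \(s=+\infty\). Take \(m>s\). For \(\nu\)-a.e.\ \(x\) we have \(x\le s<m\), so \(x-s\in[x-m,0]\). Because \(G\) is nonincreasing on \((-\infty,0]\), this gives \(G(x-s)\le G(x-m)\).

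To obtain a strict inequality, note that \(\nu\ne0\), so \(\nu\) charges \((-\infty,s]\) with positive mass. On that set \(x-m<x-s\le0\). Since \(0\) is the unique minimizer, \(G\) is strictly decreasing on \((-\infty,0]\): a convex function that is constant on an interval \([u,v]\) with \(v\le 0\) would, by convexity, have to be constant on \([u,0]\) or reach a value below \(G(0)\), contradicting uniqueness of the minimizer. Hence \(G(x-s)<G(x-m)\) on a set of positive \(\nu\)-measure. This yields \(F(s)<F(m)\), so no \(m>s\) minimizes \(F\). The symmetric argument, using monotonicity of \(G\) on \([0,\infty)\), handles \(m<\inf\supp(\nu)\).

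The main obstacles are the coercivity argument and the strict monotonicity of \(G\) away from \(0\). Both rely on (G2), the assumption that \(0\) is the unique minimizer of \(G\).
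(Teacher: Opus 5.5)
Your proposal is correct and follows essentially the same route as the paper's proof: finiteness from the convex domination bound \eqref{eq:convex-domination}, convexity together with coercivity on a compact set of positive \(\nu\)-mass to obtain a nonempty compact interval of minimizers, pointwise strict convexity for uniqueness, and strict monotonicity of \(G\) away from \(0\) for the support-hull bound. The remaining differences are cosmetic. You obtain continuity from finiteness of the convex function, where the paper uses Fatou's lemma, and you compare \(m\) with the endpoint \(s\) rather than with a shift \(m\pm\varepsilon\). You also justify explicitly the strict monotonicity of \(G\) on each half-line, which the paper simply asserts.
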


\noindent The proof is deferred to
\hyperref[app:center]{\Cref*{app:center}}.

\begin{remark}[Flat-bottom losses]\label{rem:flat-bottom}
The unique-minimum clause in (G2) is substantive.  For example, if
\(G(u)=\dist(u,[-1,1])^2\) and \(\nu=\delta_0\), then
\(\cM_G(\nu)=[-1,1]\), which is not contained in
\(\operatorname{conv}(\supp\nu)=\{0\}\).  More generally, when
\(\argmin G=[\ell,r]\), the existence and continuity results remain valid,
but the support-hull bound expands to
\[
 [\inf\supp(\nu)-r,\ \sup\supp(\nu)-\ell].
\]
All named losses in \Cref{tab:centers}, with parameters in their stated
open ranges, have the unique minimizer \(0\).
\end{remark}

For \(t\in\R\), write \(F_\nu(t):=\nu((-\infty,t])\).  For
finite-support measures, the center condition can be written without
measure-valued subdifferentials.  A point \(m\) is a center if and only if
\begin{equation}\label{eq:subgradient-center}
 \int G'_{-}(x-m)\,\nu(dx)\le0
 \le\int G'_{+}(x-m)\,\nu(dx).
\end{equation}
If \(G\) is differentiable, both bounds coincide and the condition is
\(\int G'(x-m)\,\nu(dx)=0\).  For general measures the same statements
hold whenever subdifferentiation and integration can be interchanged, for
example under an integrable local envelope for the one-sided derivatives;
none of the general convergence proofs below relies on this score formula.

\begin{table}[t]
\centering
\caption{One-regime centers induced by standard residual losses.}
\label{tab:centers}
\begin{tabular}{lll}
\toprule
\(G(u)\) & center & defining condition\\
\midrule
\(u^2\) & mean & \(m=\int x\,d\nu/\nu(\R)\)\\
\(|u|\) & median set & at most half the mass strictly on either side\\
\(\tau u_++(1-\tau)u_-\) & \(\tau\)-quantile set
  & \(F_\nu(m-)\le\tau\nu(\R)\le F_\nu(m)\)\\
\(\tau u_+^2+(1-\tau)u_-^2\) & \(\tau\)-expectile
  & weighted first-moment balance\\
Huber \(G_\delta\) & Huber center & clipped-score balance\\
\bottomrule
\end{tabular}
\end{table}

\subsection{From conditional centers to cutoff profiles}

Let \(\mu\in\cP(\R)\) satisfy the integrability condition of
\Cref{prop:center-existence}.  For \(c\in\R\), set
\[
\begin{array}{ll}
 \mu_c^{\le,-}:=\restr{\mu}{(-\infty,c]},&
 \mu_c^{\le,+}:=\restr{\mu}{(c,\infty)},\\[2mm]
 \mu_c^{<,-}:=\restr{\mu}{(-\infty,c)},&
 \mu_c^{<,+}:=\restr{\mu}{[c,\infty)}.
\end{array}
\]
The superscript records whether a possible atom at \(c\) belongs to the
lower or upper regime.
At the endpoints we use the common convention
\[
 \mu_{+\infty}^{\sigma,-}=\mu,\quad
 \mu_{+\infty}^{\sigma,+}=0,\qquad
 \mu_{-\infty}^{\sigma,-}=0,\quad
 \mu_{-\infty}^{\sigma,+}=\mu,
 \qquad \sigma\in\{\le,<\}.
\]

\begin{definition}[Cutoff profiles]\label{def:profiles}
The closed-left and open-left raw profiles are
\begin{align}
 H_G^\le(\mu,c)
  &:=\cV_G(\mu_c^{\le,-})+\cV_G(\mu_c^{\le,+}),\label{eq:Hle}\\
 H_G^<(\mu,c)
  &:=\cV_G(\mu_c^{<,-})+\cV_G(\mu_c^{<,+}).\label{eq:Hlt}
\end{align}
The boundary-relaxed profile is
\begin{equation}\label{eq:Hrel}
 \underline H_G(\mu,c)
 :=\min\{H_G^\le(\mu,c),H_G^<(\mu,c)\}.
\end{equation}
At \(c=\pm\infty\), all three profiles are defined to equal
\(\cV_G(\mu)\).
\end{definition}

At an atom-free cutoff, the two raw profiles agree.  At an atom,
\eqref{eq:Hrel} assigns the whole boundary mass to whichever regime gives
the smaller optimized cost.  It does \emph{not} split one atom internally.
The relaxation removes an arbitrary choice between \((-\infty,c]\) and
\((-\infty,c)\), and will be seen in \Cref{sec:epi} to be the
epi-stable profile.

\begin{proposition}[Conditional centers]\label{prop:conditional-centers}
For a fixed \(c\), every minimizer of
\[
 \inf_{a,b\in\R}\int
 \left[
   G(x-a)\1_{\{x\le c\}}+G(x-b)\1_{\{x>c\}}
 \right]\mu(dx)
\]
satisfies
\[
 a\in\cM_G(\mu_c^{\le,-}),\qquad
 b\in\cM_G(\mu_c^{\le,+})
\]
whenever the corresponding measure is nonzero.  The analogous statement
holds for the open-left convention.  Under strict convexity, both active
levels are unique.
\end{proposition}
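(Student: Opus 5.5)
The plan is to use the fact that, for a fixed cutoff $c$, the objective separates in $a$ and $b$:
\[
 \Phi_c(a,b):=\int G(x-a)\,\mu_c^{\le,-}(dx)+\int G(x-b)\,\mu_c^{\le,+}(dx),
\]
because $\1_{\{x\le c\}}\mu(dx)=\mu_c^{\le,-}(dx)$ and $\1_{\{x>c\}}\mu(dx)=\mu_c^{\le,+}(dx)$.

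The first step is to check that both summands are finite for every $a,b$, so that splitting the infimum is legitimate. This follows from the convex domination bound \eqref{eq:convex-domination}: with $m=a$ it gives
\[
 \int G(x-a)\,\mu_c^{\le,-}(dx)\le \frac{1}{1+\eta}\int G((1+\eta)x)\,\mu(dx)+\frac{\eta}{1+\eta}G\!\left(-\tfrac{1+\eta}{\eta}a\right)<\infty,
\]
and the same holds for $b$. Hence
\[
 \inf_{a,b}\Phi_c=\cV_G(\mu_c^{\le,-})+\cV_G(\mu_c^{\le,+}),
\]
where both values are finite and nonnegative.

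The second step is to identify the minimizers. Suppose $(a,b)$ minimizes $\Phi_c$ and $\mu_c^{\le,-}\neq0$. If $a\notin\cM_G(\mu_c^{\le,-})$, then $\int G(x-a)\,d\mu_c^{\le,-}>\cV_G(\mu_c^{\le,-})$. By \Cref{prop:center-existence}, which applies since the restriction inherits the integrability, some $a^*$ attains this infimum. Then $\Phi_c(a^*,b)<\Phi_c(a,b)$, and this strict inequality is meaningful because all terms are finite. That contradicts minimality, so $a\in\cM_G(\mu_c^{\le,-})$. The argument for $b$ is symmetric.

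If one of the restricted measures vanishes, its term is identically zero, so the corresponding level is arbitrary. This is why the statement says ``whenever nonzero''. The same argument applies verbatim to the open-left convention with $\mu_c^{<,\pm}$.

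For strict convexity, \Cref{prop:center-existence} gives that $\cM_G$ of each nonzero restriction is a singleton, so the active levels are unique.

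The only mildly delicate point is the finiteness needed for the contradiction step, which \eqref{eq:convex-domination} settles. No real obstacle is expected beyond this.
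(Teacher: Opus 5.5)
Your proof is correct and follows the paper's argument. The objective separates into the two restricted integrals, each active level must minimize its own summand, and strict convexity gives uniqueness via \Cref{prop:center-existence}. Your explicit finiteness check via \eqref{eq:convex-domination} makes the strict-improvement contradiction rigorous; the paper leaves that step implicit.
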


\begin{proof}
For fixed \(c\), the objective is the sum
\[
 \int_{(-\infty,c]}G(x-a)\,\mu(dx)
 +\int_{(c,\infty)}G(x-b)\,\mu(dx).
\]
The variables separate, and each active variable must minimize its own
summand.  Strict convexity invokes
\Cref{prop:center-existence}.
\end{proof}

\subsection{Why two arbitrary regimes reduce to a threshold}

One could start with an arbitrary Borel set \(A\) and the approximation
\(a\1_A+b\1_{A^c}\).  Convexity forces an ordered representative.  This
one-dimensional reduction is part of the two-step construction in
\citet{NistorPopescu2026}; we include the short argument to fix the boundary
conventions needed later.

\begin{proposition}[Half-line reduction (recalled)]\label{prop:halfline}
Fix \(a<b\).  The map
\[
 x\longmapsto G(x-a)-G(x-b)
\]
is nondecreasing.  Consequently, among all Borel sets \(A\), a minimizer of
\[
 \int\!\left[G(x-a)\1_A(x)+G(x-b)\1_{A^c}(x)\right]\mu(dx)
\]
can be chosen as a lower half-line.  If the two losses tie on an interval,
either complete assignment of the tie interval is optimal, so an open-left
or closed-left threshold representative exists.  The same conclusion
holds for \(a>b\) after interchanging labels.
\end{proposition}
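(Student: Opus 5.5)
The plan is to prove monotonicity of the difference \(D(x):=G(x-a)-G(x-b)\) first, and then read off the set-optimization statement pointwise.

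\emph{Monotonicity.} Fix \(x<x'\) and put \(h:=x'-x>0\) and \(d:=b-a>0\). I will show
\[
 G(x'-a)-G(x-a)\ \ge\ G(x'-b)-G(x-b),
\]
which is exactly \(D(x')\ge D(x)\). Both sides are increments of \(G\) over intervals of the same length \(h\). The left one starts at \(x-a\), which lies to the right of the starting point \(x-b=x-a-d\) of the right one. For a convex \(G\), the increment \(u\mapsto G(u+h)-G(u)\) is nondecreasing in \(u\). This is the standard slope-monotonicity (three-chord) property of convex functions. Only finiteness and convexity (G1) are needed; I will cite the chord property rather than derive it.

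\emph{Reduction to a threshold.} Since the integrand is \(G(x-b)+D(x)\1_A(x)\), minimizing over Borel \(A\) amounts to minimizing \(\int_A D\,d\mu\). Integrability of each term follows from \eqref{eq:convex-domination} under the standing \(\psi\)-integrability of \(\mu\). The pointwise minimizer puts \(x\) in \(A\) when \(D(x)<0\), excludes it when \(D(x)>0\), and is indifferent when \(D(x)=0\). Because \(D\) is nondecreasing, the three sets \(\{D<0\}\), \(\{D=0\}\), \(\{D>0\}\) are consecutive intervals. So
\[
 A^\ast_{\min}:=\{D<0\}\qquad\text{and}\qquad A^\ast_{\max}:=\{D\le0\}
\]
are both lower half-lines. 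Any \(A\) differs from a pointwise optimal set in a way that can only add a nonnegative amount to \(\int_A D\,d\mu\). Hence both \(A^\ast_{\min}\) and \(A^\ast_{\max}\) are minimizers, and so is any lower half-line sandwiched between them.

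\emph{Boundary conventions.} A lower half-line has the form \((-\infty,c]\) or \((-\infty,c)\), with \(c\in\Rbar\), the degenerate cases \(\emptyset\) and \(\R\) included. Which form each of \(\{D<0\}\) and \(\{D\le0\}\) takes depends on whether the endpoint belongs to it. This is precisely the closed-left/open-left representative claimed, with the whole tie interval assigned to one side in either case. In the tie case the two full assignments \(\{D<0\}\) and \(\{D\le0\}\) give the same value, because \(D\equiv0\) on the tie interval.

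\emph{The case \(a>b\).} Swap the roles of \(a,b\) and of \(A,A^c\); the same argument then produces an upper half-line for \(A\).

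\emph{Main obstacle.} There is no substantive obstacle. The only care needed is with endpoint membership, since \(D\) need not be continuous from one side in a way that singles out a convention. That is why both threshold forms are allowed, and it foreshadows the atom issue that \eqref{eq:Hrel} addresses.
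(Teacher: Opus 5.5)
Your proof is correct and follows the paper's argument. Both use the same convex-increment monotonicity of \(D(x)=G(x-a)-G(x-b)\): you compare increments of length \(x'-x\) at two base points, while the paper views \(D\) as an increment of length \(b-a\) at the moving base \(x-b\), and these are the same fact. Both then write the objective as \(\int G(x-b)\,\mu(dx)+\int_A D\,d\mu\) and make the same pointwise choice of \(\{D<0\}\) or \(\{D\le0\}\).

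One small correction to your closing remark: \(D\) is continuous, because \(G\) is finite and convex. Hence \(\{D<0\}\) automatically has the open form \((-\infty,c)\) and \(\{D\le0\}\) the closed form \((-\infty,c']\) (or \(\emptyset\), \(\R\)), which is exactly how the two conventions arise.
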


\begin{proof}
Put \(d=b-a>0\) and \(u=x-b\).  The displayed difference is
\(G(u+d)-G(u)\).  Increments of a convex function over an interval of fixed
length are nondecreasing in the base point.  The objective equals
\[
 \int G(x-b)\,\mu(dx)
 +\int_A\!\bigl[G(x-a)-G(x-b)\bigr]\,\mu(dx).
\]
It is minimized pointwise by including the negative-difference set and
excluding the positive-difference set.  The strict sublevel set of a
nondecreasing function is a lower half-line, and its zero set is an
interval.  Assigning the entire zero interval to either side gives one of
the two stated boundary conventions.
\end{proof}

\begin{corollary}[Equivalence with two-state approximation]\label{cor:two-state}
Let
\[
 Q_G(\mu):=\inf_{\substack{a,b\in\R\\A\in\mathcal B(\R)}}
 \int\!\left[G(x-a)\1_A(x)+G(x-b)\1_{A^c}(x)\right]\,\mu(dx).
\]
Then
\[
 Q_G(\mu)
 =\inf_{c\in\Rbar}H_G^\le(\mu,c)
 =\min_{c\in\Rbar}\underline H_G(\mu,c).
\]
\end{corollary}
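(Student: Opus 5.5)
The plan is to prove the chain of inequalities
\[
 \min_{c\in\Rbar}\underline H_G(\mu,c)\;\le\;\inf_{c\in\Rbar}H_G^\le(\mu,c)\;\le\;Q_G(\mu)\;\le\;\min_{c\in\Rbar}\underline H_G(\mu,c),
\]
together with a separate argument that the minimum on the right is actually attained.

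\emph{Inequality $Q_G(\mu)\le \underline H_G(\mu,c)$ for every $c\in\Rbar$.} Take the better convention at $c$, say $A=(-\infty,c]$ or $A=(-\infty,c)$. By \Cref{prop:center-existence}, choose $a\in\cM_G$ of the lower restriction and $b\in\cM_G$ of the upper restriction. If a restriction is zero, any level will do, consistent with $\cV_G(0)=0$. The resulting cost is exactly $H^\le_G(\mu,c)$ or $H^<_G(\mu,c)$. At $c=\pm\infty$, take $A=\R$ or $A=\emptyset$, which gives $\cV_G(\mu)$.

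\emph{Inequality $\inf_c H^\le_G\le Q_G(\mu)$.} Fix any $a,b$ and Borel $A$.
\begin{itemize}
\item If $a=b$, the cost is at least $\cV_G(\mu)=H^\le_G(\mu,+\infty)$.
\item If $a<b$, \Cref{prop:halfline} replaces $A$ by a half-line $(-\infty,c]$ or $(-\infty,c)$ with $c\in\Rbar$ without increasing the cost.
\item If $a>b$, swap the labels and use the same proposition.
\end{itemize}
For fixed $c$, reoptimizing the levels gives cost at least $H^\le_G(\mu,c)$ or $H^<_G(\mu,c)$; I also need to check that the degenerate half-lines $\emptyset$ and $\R$ correspond to $c=\pm\infty$. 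So $Q_G\ge\inf_c\min\{H^\le_G,H^<_G\}=\inf_c\underline H_G$.

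This is not yet the middle inequality, because we must also get below the infimum of $H^\le_G$ alone. For that I use the approximation $H^<_G(\mu,c)\ge\limsup_{c'\uparrow c}H^\le_G(\mu,c')$. Fix optimal centers $a,b$ for the open-left split at $c$. The cost of $(a,b)$ with split $(-\infty,c']$ converges to the open-left cost as $c'\uparrow c$. This holds by dominated convergence: the intervals $(-\infty,c']$ increase to $(-\infty,c)$, and \eqref{eq:convex-domination} gives an integrable envelope. Hence $\inf_c H^\le_G\le\inf_c\underline H_G\le Q_G$.

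\emph{Attainment of the minimum of $\underline H_G$ over $\Rbar$.} This is the main obstacle. I would show that $c\mapsto\underline H_G(\mu,c)$ is lower semicontinuous on the compact space $\Rbar$; since the statement appears before \Cref{sec:epi}, I would argue this directly.
\begin{itemize}
\item \emph{Setup.} Take $c_n\to c$ and near-optimal levels $a_n,b_n$ for the chosen conventions.
\item \emph{Reduction to $c\in\R$.} Pass to a subsequence with $c_n\uparrow c$, $c_n\downarrow c$, or $c_n=c$. The endpoint cases $c=\pm\infty$ reduce to $\cV_G(\mu)$ with one restriction vanishing.
\item \emph{Compactness of the levels.} Levels of a regime with mass bounded below stay bounded, by coercivity (G3) and the support bound. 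A regime whose mass tends to zero contributes a nonnegative cost, so dropping it only helps.
\item \emph{Passage to the limit.} Pass to a limit $a_n\to a$, $b_n\to b$ and apply Fatou's lemma with the envelope from \eqref{eq:convex-domination}. If $c_n\uparrow c$, the limiting split is $(-\infty,c)$ versus $[c,\infty)$; if $c_n\downarrow c$, it is $(-\infty,c]$ versus $(c,\infty)$.
\end{itemize}
In either case the liminf of the costs is at least $H^<_G(\mu,c)$ or $H^\le_G(\mu,c)$, hence at least $\underline H_G(\mu,c)$. Lower semicontinuity on the compact $\Rbar$ then yields a minimizer, completing the chain with equalities.
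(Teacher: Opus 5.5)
Your proof is correct. It has the same skeleton as the paper's proof:
\begin{enumerate}
\item \Cref{prop:halfline} gives \(Q_G(\mu)=\inf_c\underline H_G(\mu,c)\).
\item A left-limit comparison between \(H_G^<\) and \(H_G^\le\) removes the relaxation from the infimum.
\item Lower semicontinuity on the compact \(\Rbar\) gives attainment.
\end{enumerate}
The difference lies in how the last two facts are obtained.

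The paper cites \Cref{prop:profile-regularity}, which is a forward reference. Its proof uses \(\psi\)-weak convergence of the restrictions \(\restr{\mu}{(-\infty,d_k]}\) and the continuity of optimized one-regime costs from \Cref{lem:V-continuity}. That route yields the exact identity \(H_G^<(\mu,c)=H_G^\le(\mu,c-)\), the c\`adl\`ag structure, and the greatest-lower-semicontinuous-minorant characterization, all of which are reused later in the epi-convergence theory.

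You prove only what the corollary needs, by elementary means.
\begin{itemize}
\item Freezing the open-left centers and applying dominated convergence gives the easy half, \(H_G^<(\mu,c)\ge\limsup_{c'\uparrow c}H_G^\le(\mu,c')\). The reverse inequality would require re-optimized levels, and that is exactly where \Cref{lem:V-continuity} enters.
\item Compactness of levels plus Fatou's lemma gives lower semicontinuity of \(\underline H_G\) directly, without identifying one-sided limits.
\end{itemize}
This makes the corollary self-contained at the point in Section~2 where it is stated, at the cost of not producing the finer profile structure.

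Two small points of precision:
\begin{itemize}
\item Fatou's lemma needs only \(G\ge0\). The envelope \eqref{eq:convex-domination} is not required in the lower-semicontinuity step; it is needed only for finiteness and for your dominated-convergence step.
\item For near-optimal levels, boundedness should come from coercivity, combined with a uniform cost bound such as \(\underline H_G(\mu,c_n)\le\cV_G(\mu)\) and a compact set carrying a fixed fraction of the regime mass. The support bound controls only one side when \(\mu\) has unbounded support.
\end{itemize}
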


\begin{proof}
If \(a=b\), every assignment has the same integrand and any threshold is a
valid representative.
For \(a\ne b\), \Cref{prop:halfline} shows that, for every fixed pair, a
threshold assignment is at least as good as any Borel assignment.
Conversely, threshold assignments are Borel assignments.  Minimizing the
active levels at each threshold gives the relaxed profile.  The equality
with the raw closed-left infimum follows from
\(H_G^<(\mu,c)=H_G^\le(\mu,c-)\); see
\Cref{prop:profile-regularity}.  The same proposition gives lower
semicontinuity and attainment of the relaxed minimum.  Thus the relaxation
preserves the optimal value, although it can add a limiting action at an
atom.
\end{proof}

\subsection{Optimizer correspondences}

\begin{definition}[Optimal cuts, actions, and encodings]\label{def:omega}
Define
\begin{equation}\label{eq:cutset}
 \cC_G(\mu):=\argmin_{c\in\Rbar}\underline H_G(\mu,c).
\end{equation}
For an action \(\theta=(a,b,c,\sigma)\), with
\(\sigma\in\{\le,<\}\), write
\[
 q_\theta(x)=
 \begin{cases}
  a\1_{\{x\le c\}}+b\1_{\{x>c\}},&\sigma=\le,\\
  a\1_{\{x<c\}}+b\1_{\{x\ge c\}},&\sigma=<,
 \end{cases}
\]
using the extended-order convention when \(c=\pm\infty\).
The action correspondence \(\mathcal A_G(\mu)\) consists of all such
\(\theta\) for which \(c\in\cC_G(\mu)\), the convention \(\sigma\) attains
\eqref{eq:Hrel}, and each active level belongs to the corresponding center
set.  If one cell is empty, its inactive level is set equal to the active
level.  Finally,
\[
 \cO_G(\mu):=\{q_\theta:\theta\in\mathcal A_G(\mu)\}
\]
is the associated encoding correspondence.
\end{definition}

The set-valued formulation is essential.  Atoms, support gaps, nonunique
centers, or equal split costs can all cause nonuniqueness.  A deterministic
tie-break produces a representative, but it need not vary continuously with
the law unless the limiting optimizer is unique.  The set \(\cO_G(\mu)\) is an optimizer
correspondence of severity encodings; it is not an upper set of acceptable
capital portfolios in the sense of set-valued monetary risk measures.  The
natural risk-measure axioms for this output are formulated in
\Cref{sec:risk-axioms}.

\subsection{Ordering and equivariance}

For a nonzero finite measure \(\nu\), write
\[
 m_G^-(\nu):=\min\cM_G(\nu),\qquad
 m_G^+(\nu):=\max\cM_G(\nu).
\]

\begin{proposition}[Stochastic monotonicity of centers]\label{prop:center-order}
If probability laws \(\mu\preceq_{\mathrm{st}}\nu\) satisfy the standing
integrability assumptions, then
\[
 m_G^-(\mu)\le m_G^-(\nu),\qquad
 m_G^+(\mu)\le m_G^+(\nu).
\]
\end{proposition}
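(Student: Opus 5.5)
We argue through the monotone structure of the one-sided derivative of the objective. Write \(\Phi_\mu(m):=\int G(x-m)\,\mu(dx)\), which is finite, convex and coercive in \(m\) by \eqref{eq:convex-domination} and the standing integrability. \Cref{prop:center-existence} shows that \(\cM_G(\mu)\) is a nonempty compact interval \([m_G^-(\mu),m_G^+(\mu)]\). The one-sided derivatives of \(\Phi_\mu\) are
\[
 \Phi_\mu'^{+}(m)=-\int G'_{-}(x-m)\,\mu(dx),\qquad
 \Phi_\mu'^{-}(m)=-\int G'_{+}(x-m)\,\mu(dx).
\]
These identities follow by monotone convergence applied to difference quotients of the convex function \(G\); integrability holds because the difference quotients are bounded by differences of \(G\) at nearby points, which are integrable by \eqref{eq:convex-domination}.

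The two center endpoints then have the characterizations
\[
 m_G^-(\mu)=\inf\{m:\Phi_\mu'^{+}(m)\ge0\}
 =\inf\Bigl\{m:\int G'_{-}(x-m)\,\mu(dx)\le0\Bigr\},
\]
\[
 m_G^+(\mu)=\sup\{m:\Phi_\mu'^{-}(m)\le0\}
 =\sup\Bigl\{m:\int G'_{+}(x-m)\,\mu(dx)\ge0\Bigr\}.
\]
Since \(G'_{\pm}\) are nondecreasing, \(x\mapsto G'_{-}(x-m)\) and \(x\mapsto G'_{+}(x-m)\) are nondecreasing for every fixed \(m\). Now take the coupling \(X\le Y\) with \(X\sim\mu\) and \(Y\sim\nu\). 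It gives
\[
 \int G'_{\pm}(x-m)\,\mu(dx)\le\int G'_{\pm}(y-m)\,\nu(dy)
 \qquad\text{for every } m.
\]
Consequently
\[
 \Bigl\{m:\int G'_{-}(\cdot-m)\,d\nu\le0\Bigr\}\subseteq\Bigl\{m:\int G'_{-}(\cdot-m)\,d\mu\le0\Bigr\},
\]
so taking infima yields \(m_G^-(\mu)\le m_G^-(\nu)\). In the same way,
\[
 \Bigl\{m:\int G'_{+}(\cdot-m)\,d\mu\ge0\Bigr\}\subseteq\Bigl\{m:\int G'_{+}(\cdot-m)\,d\nu\ge0\Bigr\},
\]
and taking suprema yields \(m_G^+(\mu)\le m_G^+(\nu)\).

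An alternative that avoids derivatives uses the increasing-difference property of \Cref{prop:halfline}: for \(m<m'\), the map \(x\mapsto G(x-m)-G(x-m')\) is nondecreasing. Hence
\[
 \Phi_\nu(m)-\Phi_\nu(m')\ge\Phi_\mu(m)-\Phi_\mu(m'),
\]
that is, \(\Phi\) has increasing differences in \((m,\text{law})\). This is a Topkis-type supermodularity argument. Take \(m=m_G^-(\nu)\) and \(m'=m_G^-(\mu)\), and suppose \(m<m'\). Optimality of \(m'\) for \(\mu\) gives \(\Phi_\mu(m)-\Phi_\mu(m')\ge0\), and then the displayed inequality gives \(\Phi_\nu(m)-\Phi_\nu(m')\ge0\). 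Rewriting both inequalities with the roles of the pair exchanged shows that \(\min(m,m')=m\) is optimal for \(\mu\). This contradicts the minimality of \(m'\), since \(m<m'\). The symmetric argument with maxima handles \(m_G^+\). This route is cleaner, and I would present it as the proof.

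The key inequality needs only that \(G(x-m)-G(x-m')\) is integrable under both laws and monotone in \(x\). Integrability follows from \eqref{eq:convex-domination}. Monotonicity transfers through the coupling because \(X\le Y\) almost surely. The main obstacle is bookkeeping: checking that the strict versus weak inequalities in the exchange argument really produce optimality of the minimum for \(\mu\) and of the maximum for \(\nu\). This is the standard lattice-exchange step. For \(m<m'\), increasing differences give
\[
 \Phi_\mu(m)-\Phi_\mu(m')\le\Phi_\nu(m)-\Phi_\nu(m').
\]
Optimality of \(m\) for \(\nu\) makes the right-hand side \(\le0\), so \(\Phi_\mu(m)\le\Phi_\mu(m')=\min\Phi_\mu\). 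Thus \(m\) is optimal for \(\mu\), which contradicts \(m<m'=m_G^-(\mu)\).
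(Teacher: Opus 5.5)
Your proposal is correct. The argument you choose to present is essentially the paper's proof: couple \(X\le Y\), use the increasing-difference inequality \(\Phi_\mu(m)-\Phi_\mu(m')\le\Phi_\nu(m)-\Phi_\nu(m')\) for \(m<m'\), and derive a contradiction because the smaller \(\nu\)-center \(m_G^-(\nu)\) would also be a \(\mu\)-center (symmetrically, \(m_G^+(\mu)\) would be a \(\nu\)-center). Your score-based first route via one-sided derivatives is also valid, but discard the muddled middle exchange paragraph in favor of the clean version in your final paragraph.
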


\begin{proof}
Couple \(X\sim\mu,Y\sim\nu\) so that \(X\le Y\) almost surely.  If
\(m_1>m_2\), convexity makes
\[
 z\longmapsto G(z-m_1)-G(z-m_2)
\]
nonincreasing.  Therefore
\begin{equation}\label{eq:center-crossing}
 \E[G(X-m_1)-G(X-m_2)]
 \ge \E[G(Y-m_1)-G(Y-m_2)].
\end{equation}
If \(x=m_G^-(\mu)>y=m_G^-(\nu)\), optimality makes the left side of
\eqref{eq:center-crossing}, with \((m_1,m_2)=(x,y)\), nonpositive and the
right side nonnegative.  Both must be zero.  Then \(y\) is also a
\(\mu\)-center, contradicting the minimality of \(x\).  The proof for the
maximal centers is identical: equality would make \(x\) a \(\nu\)-center
larger than \(m_G^+(\nu)\).
\end{proof}

\begin{corollary}[Ordered and cutoff-monotone levels]\label{cor:conditional-order}
Whenever both regimes at \(c\) have positive mass,
\[
 m_G^+(\mu_c^{\le,-})\le c\le m_G^-(\mu_c^{\le,+}).
\]
Both endpoints of the lower and upper conditional center sets are
nondecreasing functions of \(c\).  If
\(\cM_G(\mu)=\{m_G(\mu)\}\), then any conditional-center selections satisfy
\[
 a(c)\to m_G(\mu)\quad(c\to+\infty),\qquad
 b(c)\to m_G(\mu)\quad(c\to-\infty).
\]
\end{corollary}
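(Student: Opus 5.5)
The corollary has three assertions, and I will derive each from \Cref{prop:center-existence} and \Cref{prop:center-order}.

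\emph{Ordering around the cutoff.} The lower restriction \(\mu_c^{\le,-}\) is carried by \((-\infty,c]\). By the support-hull bound in \Cref{prop:center-existence}, its center set lies in \([\inf\supp,\sup\supp]\subseteq(-\infty,c]\), so \(m_G^+(\mu_c^{\le,-})\le c\). The upper restriction \(\mu_c^{\le,+}\) is carried by \((c,\infty)\), so its support lies in \([c,\infty)\) and \(m_G^-(\mu_c^{\le,+})\ge c\). Integrability of each restriction is inherited from \(\mu\), because \(G\ge0\).

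\emph{Monotonicity in \(c\).} Take \(c<c'\) with the relevant cells nonzero. Because \(\cM_G\) is unchanged by scaling, I will compare the normalized laws \(\mu(\cdot\mid X\le c)\) and \(\mu(\cdot\mid X\le c')\). For each \(t\), the tail \(P(X>t\mid X\le c)\) is zero when \(t\ge c\). When \(t<c\) it equals \(1-F(t)/F(c)\), and this is at most \(1-F(t)/F(c')\) since \(F(c)\le F(c')\). So the conditional laws are stochastically ordered, and \Cref{prop:center-order} gives monotonicity of both endpoints \(m_G^\pm\) for the lower cell.

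The upper cell is handled the same way: \(P(X>t\mid X>c)\) is nondecreasing in \(c\). Indeed, for \(t\le c\) it equals \(1\). For \(t>c\) it equals \(\bar F(t)/\bar F(c)\), and \(\bar F(c)\) decreases in \(c\), so \(\bar F(t)/\bar F(c)\) increases in \(c\).

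\Cref{prop:center-order} is stated for probability laws satisfying the integrability condition. That condition holds here because the conditional law is dominated by \(\mu\) up to a constant factor.

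\emph{Limits.} This is the main point requiring care. Suppose \(\cM_G(\mu)=\{m\}\) and let \(c_n\to+\infty\). I will argue by compactness and stability of minimizers. Set \(\nu_n=\mu_{c_n}^{\le,-}\), which increases to \(\mu\).

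First, the selections \(a(c_n)\) are bounded. From below, monotonicity gives \(a(c_n)\ge m_G^-(\mu_{c_0}^{\le,-})\) once \(c_n\ge c_0\), for some \(c_0\) with positive lower mass. From above, \(a(c_n)\le m_G^+(\mu_{c_n}^{\le,-})\), and this quantity is monotone in \(c_n\). To compare with \(\mu\), I note that \(\mu\) is the law conditional on \(X\le+\infty\), and the conditional laws are stochastically dominated by \(\mu\). Hence \Cref{prop:center-order} gives \(m_G^+(\mu_{c}^{\le,-})\le m_G^+(\mu)=m\).

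Next, take any cluster point \(a^*\) of \(a(c_n)\). For each fixed \(m'\), monotone convergence gives \(\int G(x-m')\,\nu_n(dx)\to\int G(x-m')\,\mu(dx)\). On the compact range of the selections, the envelope \eqref{eq:convex-domination} supplies the domination needed for dominated convergence, and continuity of \(G\) then gives \(\int G(x-a(c_n))\,d\nu_n\to\int G(x-a^*)\,d\mu\). Passing to the limit in the optimality inequality \(\int G(x-a(c_n))\,d\nu_n\le\int G(x-m')\,d\nu_n\) shows that \(a^*\) is a \(\mu\)-center. Uniqueness forces \(a^*=m\), so the whole sequence converges to \(m\). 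The case \(b(c)\) as \(c\to-\infty\) is symmetric.
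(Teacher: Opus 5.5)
Your proof is correct. The first two parts match the paper's argument: the support-hull bound from \Cref{prop:center-existence}, then an explicit comparison of conditional distribution functions fed into \Cref{prop:center-order}.

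For the limits you take a different, more self-contained route. The paper notes that the normalized restrictions \(\restr{\mu}{(-\infty,c]}/\mu((-\infty,c])\) converge \(\psi\)-weakly to \(\mu\) as \(c\to+\infty\), and then invokes \Cref{lem:V-continuity}. That lemma gets boundedness of centers from coercivity of \(G\) and a compact set whose mass stays bounded away from zero. You instead get boundedness from the order structure:
\begin{itemize}
\item the monotone lower endpoint from your second part bounds \(a(c)\) from below;
\item the dominance \(\mu(\cdot\mid X\le c)\preceq_{\mathrm{st}}\mu\), with \Cref{prop:center-order}, bounds it above by \(m_G^+(\mu)\).
\end{itemize}
You then identify cluster points by dominated convergence against the fixed law \(\mu\), using the envelope \(C_M\psi\) from \eqref{eq:convex-domination}. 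The paper's version is shorter because it reuses the general continuity lemma. Yours avoids the \(\psi\)-weak machinery and gives the explicit sandwich \(m_G^-(\mu_{c_0}^{\le,-})\le a(c)\le m_G^+(\mu)\) for \(c\ge c_0\). So every lower conditional center lies below the top unconditional center, and under uniqueness the convergence to \(m_G(\mu)\) is from below.

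One small wording point: ``\(\nu_n\) increases to \(\mu\)'' presumes \(c_n\) is increasing. This is harmless. You can pass to a monotone subsequence, or note that your dominated-convergence step already covers arbitrary \(c_n\to+\infty\), since \(\1_{\{x\le c_n\}}\to1\) pointwise.
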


\noindent A proof is given in
\hyperref[app:conditional-order]{\Cref*{app:conditional-order}}.

\begin{proposition}[Translation and positive-scale equivariance]
\label{prop:equivariance}
Let \(T_t\mu\) be the law of \(X+t\).  Then
\[
 \underline H_G(T_t\mu,c+t)=\underline H_G(\mu,c),\qquad
 \cC_G(T_t\mu)=\cC_G(\mu)+t,
\]
and an optimal cut-level triple \((a,b,c)\) becomes
\((a+t,b+t,c+t)\).

If \(G(su)=s^\gamma G(u)\) for every \(s>0\), and \(S_s\mu\) is the law of
\(sX\), then
\[
 \underline H_G(S_s\mu,sc)=s^\gamma\underline H_G(\mu,c),\qquad
 \cC_G(S_s\mu)=s\cC_G(\mu),
\]
with levels \((sa,sb)\).
\end{proposition}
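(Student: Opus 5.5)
The plan is to push everything through at the level of restricted measures and the one-regime value \(\cV_G\), and only afterwards assemble the profiles and the optimizer sets. Write \(\tau_t(x)=x+t\), so that \(T_t\mu=\mu\circ\tau_t^{-1}\). The first step is to verify the restriction identities
\[
 (T_t\mu)_{c+t}^{\sigma,\pm}=\tau_{t\#}\bigl(\mu_c^{\sigma,\pm}\bigr),
 \qquad \sigma\in\{\le,<\}.
\]
These hold because \(\tau_t\) maps \((-\infty,c]\), \((c,\infty)\), \((-\infty,c)\) and \([c,\infty)\) bijectively onto the corresponding intervals with endpoint \(c+t\). The conventions at \(c=\pm\infty\) are preserved, since \(\pm\infty+t=\pm\infty\).

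Next I show how \(\cV_G\) and \(\cM_G\) transform under translation. For any \(\nu\in\cM_+(\R)\), the substitution \(m=m'+t\) gives
\[
 \int G(y-m)\,(\tau_{t\#}\nu)(dy)=\int G(x-m')\,\nu(dx).
\]
Hence \(\cV_G(\tau_{t\#}\nu)=\cV_G(\nu)\) and \(\cM_G(\tau_{t\#}\nu)=\cM_G(\nu)+t\); the zero measure is covered trivially. The integrability hypothesis of \Cref{prop:center-existence} also transfers to \(T_t\mu\) via \eqref{eq:convex-domination}, so the centers still exist. I only need finiteness of \(\int G((1+\eta')y)\,T_t\mu(dy)\) for some \(\eta'>0\). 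I get this by taking \(\eta'<\eta\) and bounding \(G((1+\eta')(x+t))\) by a convex combination of \(G((1+\eta)x)\) and a constant.

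Substituting into \eqref{eq:Hle}--\eqref{eq:Hlt} gives \(H_G^\sigma(T_t\mu,c+t)=H_G^\sigma(\mu,c)\) for each \(\sigma\). Taking the minimum gives the relaxed identity, and the identity at \(\pm\infty\) holds because \(\cV_G(T_t\mu)=\cV_G(\mu)\). Since \(c\mapsto c+t\) is a bijection of \(\Rbar\) onto itself, the argmin sets correspond, so \(\cC_G(T_t\mu)=\cC_G(\mu)+t\). The same convention \(\sigma\) attains \eqref{eq:Hrel} at \(c\) and at \(c+t\). The active levels shift by \(t\) by the center identity, which yields the action map \((a,b,c,\sigma)\mapsto(a+t,b+t,c+t,\sigma)\). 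The inactive-level convention of \Cref{def:omega} is also preserved, because an inactive level copies the active one, which has itself been shifted by \(t\).

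The scale case follows the same pattern with \(s_\#\) in place of \(\tau_{t\#}\). Since \(s>0\), the map \(x\mapsto sx\) preserves order and sends each of the four half-lines at \(c\) to the matching half-line at \(sc\). With \(m=sm'\), homogeneity gives
\[
 \int G(y-m)\,(s_\#\nu)(dy)=s^\gamma\int G(x-m')\,\nu(dx).
\]
Therefore \(\cV_G(s_\#\nu)=s^\gamma\cV_G(\nu)\) and \(\cM_G(s_\#\nu)=s\cM_G(\nu)\). Because \(s^\gamma>0\) is a common factor, argmins are preserved. Integrability transfers since \(G((1+\eta)sx)=s^\gamma G((1+\eta)x)\).

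I expect no step to be a genuine obstacle. The only points needing care are bookkeeping: the boundary conventions at atoms and at \(\pm\infty\), and the integrability transfer in the translation case. For the latter I would first try the convex-combination bound just described before looking for anything more elaborate.
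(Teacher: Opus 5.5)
Your proposal is correct and follows the paper's proof. Both arguments transform \(\cV_G\) and \(\cM_G\) through the substitutions \(m=m'+t\) and \(m=sm'\), observe that each half-line restriction at \(c\) is carried to the matching restriction at the transformed cutoff, and then assemble the two profile terms. Your integrability transfer is harmless but not needed: the substitution identity already presents the transformed center objective as a reparametrization of the original one, so finiteness and attainment carry over directly.
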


\begin{proof}
The substitutions \(m'=m-t\) and \(m=su\), respectively, give
\[
 \cV_G(T_t\nu)=\cV_G(\nu),\quad
 \cM_G(T_t\nu)=\cM_G(\nu)+t,
\]
and
\[
 \cV_G(S_s\nu)=s^\gamma\cV_G(\nu),\quad
 \cM_G(S_s\nu)=s\cM_G(\nu).
\]
Translation and positive scaling carry every half-line restriction to its
corresponding restriction at the transformed cutoff.  Applying the
identities to both profile terms proves the claims.
\end{proof}

\section{Transformation properties and limits of monetary interpretation}
\label{sec:risk-axioms}

We interpret the two-regime summary through classical risk-measure axioms.
Optimal quantization has been used as a multicomponent risk summary
\citep{FaugerasPages2024}; for the endogenous-level formulation considered
here, we determine which monetary and coherent-type properties are preserved.
We use the loss convention: larger
random variables mean worse outcomes, so a scalar monetary risk measure
\(\rho\) is monotone when \(X\le Y\) implies
\(\rho(X)\le\rho(Y)\), and cash additive when
\(\rho(X+t)=\rho(X)+t\).  Because the present output is not scalar, its
order and aggregation rule must first be made explicit.  This is also
necessary in abstract-cone, vector-valued, and set-valued theories
\citep{JouiniMeddebTouzi2004,CascosMolchanov2007,HamelHeyde2010}.

\subsection{Actions and encoded losses}

For a loss \(X\) satisfying the standing integrability condition, let
\(\mathcal A_X:=\mathcal A_G(\Law(X))\) be its action correspondence from
\Cref{def:omega}.  For \(\theta\in\mathcal A_X\), write
\begin{equation}\label{eq:encoded-loss}
 \widehat X_\theta:=q_\theta(X).
\end{equation}
Thus actions and deterministic encodings are functionals of the law, while
\(\widehat X_\theta\) is a random variable on the probability space carrying
\(X\).
It is useful to distinguish \emph{internal monotonicity}, meaning that a
single fitted encoding does not assign a smaller level to a larger realized
loss, from \emph{external monotonicity}, which compares outputs fitted
separately to two ordered losses.

\begin{proposition}[Retained axioms and state order]\label{prop:retained-axioms}
The two-regime correspondence has the following properties.
\begin{enumerate}[label=(\roman*),leftmargin=2.2em]
\item It is law invariant.
\item It is cash equivariant.  If \(t\in\R\), then
  \[
   \mathcal A_G(\Law(X+t))=\mathcal A_G(\Law(X))+t,
  \]
  where \(\theta+t=(a+t,b+t,c+t,\sigma)\), with the natural convention at
  \(c=\pm\infty\).  Corresponding encodings satisfy
  \(\widehat{X+t}_{\,\theta+t}=\widehat X_\theta+t\) almost surely.
\item If \(G(su)=s^\gamma G(u)\) for \(s>0\), then, for every \(s>0\),
  the action correspondence is positive-scale equivariant:
  \[
   \mathcal A_G(\Law(sX))=s\mathcal A_G(\Law(X)),\qquad
   \widehat{sX}_{\,s\theta}=s\widehat X_\theta ,
  \]
  where \(s\theta=(sa,sb,sc,\sigma)\).
\item Every \(\theta\in\mathcal A_G(\Law(X))\) has \(a\le b\), and its
  deterministic encoding is nondecreasing.  Moreover
  \[
   \operatorname*{ess\,inf}X
   \le \widehat X_\theta
   \le \operatorname*{ess\,sup}X
   \quad\text{almost surely},
  \]
  with the usual extended-real interpretation.
\item If \(X\) has at most two support points, then every optimal encoding
  is exact on the loss: \(\widehat X_\theta=X\) almost surely.
\end{enumerate}
\end{proposition}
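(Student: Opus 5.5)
The plan is to prove the five items in order, reducing each to results already stated.

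\emph{Item (i).} \(\mathcal A_G(\Law(X))\) is defined entirely through the profiles \(H_G^\le,H_G^<,\underline H_G\) and the center sets \(\cM_G\). These are functionals of the half-line restrictions of \(\mu=\Law(X)\) (\Cref{def:profiles,def:omega}), so law invariance holds by construction.

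\emph{Items (ii) and (iii).} Both follow from \Cref{prop:equivariance} applied at the level of actions.
\begin{itemize}[leftmargin=2em]
\item Translation maps the restriction \(\mu_c^{\sigma,\pm}\) onto \((T_t\mu)_{c+t}^{\sigma,\pm}\), preserving the convention \(\sigma\).
\item We have \(\cV_G(T_t\nu)=\cV_G(\nu)\) and \(\cM_G(T_t\nu)=\cM_G(\nu)+t\).
\item Hence \(c\) is optimal for \(\mu\) iff \(c+t\) is optimal for \(T_t\mu\), and \(\sigma\) attains the minimum in \eqref{eq:Hrel} for one iff for the other.
\item Active levels shift by \(t\). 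The convention for inactive levels (set equal to the active one) is preserved.
\end{itemize}
The encoding identity is pointwise: \(X+t\le c+t\iff X\le c\), so \(q_{\theta+t}(x+t)=q_\theta(x)+t\). Scaling works identically, using \(s>0\) so that order and strictness of the half-lines are preserved. At \(c=\pm\infty\) both cutoffs stay infinite and the single center shifts or scales.

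\emph{Item (iv).} If one cell is empty, then \(a=b\) by convention. If both cells have positive mass, \Cref{cor:conditional-order} gives \(a\le c\le b\) for the closed-left convention. For the open-left convention I would rerun that argument: the lower cell \((-\infty,c)\) has support in \((-\infty,c]\) and the upper cell \([c,\infty)\) in \([c,\infty)\). The support-hull inclusion of \Cref{prop:center-existence} then gives \(a\le c\le b\) directly.

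With \(a\le b\), the map \(q_\theta\) is a nondecreasing step function. For the essential bounds, every active level lies in the hull of the support of its cell, and that hull lies within \([\operatorname*{ess\,inf}X,\operatorname*{ess\,sup}X]\), again by \Cref{prop:center-existence}. Since \(\widehat X_\theta\) only takes active levels almost surely, the bound follows.

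\emph{Item (v).} Suppose \(\mu\) is supported on \(\{x_1<x_2\}\).
\begin{itemize}[leftmargin=2em]
\item Any cut \(c\in[x_1,x_2)\) with \(\sigma={\le}\) gives restrictions that are single atoms. Their centers are exactly \(x_i\) by uniqueness in (G2), with cost \(0\).
\item So the optimal value is \(0\). Every optimal action therefore has cost \(0\) on each cell.
\item A cell of positive mass with zero optimal cost forces \(G(x-m)=0\) for \(\mu\)-a.e.\ \(x\) in the cell. By (G2) this means the cell is a single point and \(m\) equals it.
\end{itemize}
Consequently any optimal cut separates the two atoms, or else \(\mu\) is a point mass. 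In every case the encoding is the identity on the support, so \(\widehat X_\theta=X\) almost surely. The one-atom case is included: there the value is \(\cV_G(\delta_{x})=0\) with center \(x\).

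\emph{Main obstacle.} The step I expect to need care is item (iv) under the open-left convention and at atoms. I have to check that the support-hull argument really covers the open-left cell, and that \(\operatorname{ess\,inf}/\operatorname{ess\,sup}\) bounds correctly control levels even with unbounded support. This relies on the support-hull inclusion from (G2), which \Cref{rem:flat-bottom} shows cannot be dropped.
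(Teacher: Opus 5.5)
Your proposal is correct and follows the paper's proof essentially step for step: law invariance holds by construction, (ii)--(iii) reduce to \Cref{prop:equivariance}, (iv) combines \Cref{cor:conditional-order} with the support-hull bound of \Cref{prop:center-existence}, and (v) is the zero-cost argument using the unique minimizer of \(G\). You also treat the open-left convention in (iv) explicitly through the support-hull inclusion, which fills in a detail the paper leaves implicit, since \Cref{cor:conditional-order} is stated only for the closed-left cells.
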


\begin{proof}
Law invariance follows because the profile, center sets, and optimal actions
are defined from \(\Law(X)\).  The translation and scaling statements are
\Cref{prop:equivariance}, including the corresponding transformation of the
boundary convention.  For two nonempty regimes,
\Cref{cor:conditional-order} gives \(a\le c\le b\); an empty-regime output
is constant by definition.  The support-hull part of
\Cref{prop:center-existence} places each active level in the closed hull of
its regime support, proving the range bound and internal monotonicity.
Finally, a law with at most two support points admits a two-level encoding
of cost zero.  Since \(0\) is the unique minimizer of \(G\), every
zero-cost optimum agrees with \(X\) almost surely.
\end{proof}

\begin{corollary}[Additivity along an affine ray]\label{cor:ray-additivity}
Suppose \(G\) is positively homogeneous as in
\Cref{prop:retained-axioms}.  Choose the encodings compatibly with
translation and positive scaling.  Here \(\widehat Z\) denotes the chosen
optimal encoding of a loss \(Z\).  For \(\lambda>0\) and \(t\in\R\),
\[
 \widehat{\lambda X+t}=\lambda\widehat X+t.
\]
The same identity holds at \(\lambda=0\) at the encoding level, because a
constant loss is reproduced exactly.
In particular, when \(Y=\lambda X+t\),
\begin{equation}\label{eq:ray-additivity}
 \widehat{X+Y}=\widehat X+\widehat Y .
\end{equation}
\end{corollary}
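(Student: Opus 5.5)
The plan is to reduce the corollary to the equivariance in \Cref{prop:retained-axioms}(ii)--(iii), using the compatible-selection hypothesis to pass from correspondences to single encodings.

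\emph{Step 1: the case \(\lambda>0\).} Let \(\theta\) be the chosen action for \(X\). By \Cref{prop:retained-axioms}(iii), \(\lambda\theta\in\mathcal A_G(\Law(\lambda X))\) and \(\widehat{\lambda X}_{\,\lambda\theta}=\lambda\widehat X_\theta\). By (ii), applied to \(\lambda X\), the action \(\lambda\theta+t\) lies in \(\mathcal A_G(\Law(\lambda X+t))\) and \(\widehat{\lambda X+t}_{\,\lambda\theta+t}=\lambda\widehat X_\theta+t\) almost surely. Since the selection is compatible with translation and scaling, the chosen action for \(\lambda X+t\) is exactly \(\lambda\theta+t\). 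This gives \(\widehat{\lambda X+t}=\lambda\widehat X+t\).

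\emph{Step 2: the case \(\lambda=0\).} The loss \(t\) is constant, so it has one support point. By \Cref{prop:retained-axioms}(v), every optimal encoding reproduces it: \(\widehat t=t=0\cdot\widehat X+t\) almost surely. No compatibility assumption is needed here.

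\emph{Step 3: the identity \eqref{eq:ray-additivity}.} Write \(X+Y=(1+\lambda)X+t\). Since \(1+\lambda>0\), Step 1 (applied with \(\lambda=1\), \(t=0\) to \(X\) itself, and with \(1+\lambda\), \(t\)) gives
\[
 \widehat{X+Y}=(1+\lambda)\widehat X+t=\widehat X+(\lambda\widehat X+t)=\widehat X+\widehat Y,
\]
using Step 1 or Step 2 for \(\widehat Y\).

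\emph{Main obstacle.} The delicate point is making ``compatible'' precise so that the selections for \(X\), \(Y\) and \(X+Y\) are simultaneously images of one base action under the affine maps \(x\mapsto\lambda x+t\). I would define the selection on each affine orbit by fixing one representative action and transporting it. The orbit structure is consistent because the affine maps with positive slope form a group acting on laws, and (ii)--(iii) show that the action correspondence is equivariant under this group.

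There is one subtlety. If \(\Law(X)\) is fixed by a nontrivial affine map, for example a two-point symmetric law under reflection, only positive slopes are used, so no reflection arises. A nontrivial positive-slope map fixing a law must have a fixed point and scale about it. A probability law can be invariant under such a scaling only if it is the point mass at that fixed point. In that degenerate case the encoding is exact by (v), so consistency holds trivially. The boundary convention \(\sigma\) is preserved because positive affine maps preserve the order, including at \(c=\pm\infty\).
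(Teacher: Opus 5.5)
Your proof is correct and follows the paper's argument. You reduce the case \(\lambda>0\) to \Cref{prop:retained-axioms}(ii)--(iii), apply it to \(Y=\lambda X+t\) and to \(X+Y=(1+\lambda)X+t\), and handle \(\lambda=0\) through exact reproduction of constants and translation equivariance; your orbit-transport and stabilizer discussion goes a little further than the paper, which simply assumes that a compatible choice of encodings is available rather than showing that one exists.
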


\begin{proof}
For \(\lambda>0\), the first identity is
\Cref{prop:retained-axioms}(ii)--(iii).  Apply it to
\(Y=\lambda X+t\) and to \(X+Y=(1+\lambda)X+t\) to obtain
\eqref{eq:ray-additivity}.  If \(\lambda=0\), then \(Y=t\) has the constant
encoding \(t\), and translation equivariance gives the same conclusion.
\end{proof}

This is additivity along a common affine risk ray, not general comonotone
additivity.  The counterexample in \Cref{prop:subadd-fails} is comonotone
and shows that the distinction is substantive.

\begin{proposition}[Order and additivity on a fixed partition]
\label{prop:fixed-partition}
Let \(A\) be an exogenously fixed event with
\(\mathbb P(A),\mathbb P(A^c)>0\), and summarize a loss by one \(G\)-center
on \(A\) and one on \(A^c\).
\begin{enumerate}[label=(\roman*),leftmargin=2.2em]
\item If \(X\le Y\) almost surely, then the minimal and maximal conditional
center selections for \(X\) do not exceed their counterparts for \(Y\), in
either regime.
\item If \(G(u)=u^2\), then the two conditional means are additive:
\[
 \bigl(\E[X+Y\mid A],\E[X+Y\mid A^c]\bigr)
 =
 \bigl(\E[X\mid A],\E[X\mid A^c]\bigr)
 +
 \bigl(\E[Y\mid A],\E[Y\mid A^c]\bigr).
\]
\end{enumerate}
\end{proposition}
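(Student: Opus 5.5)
Part (i) will follow from Proposition~\ref{prop:center-order} applied to the restricted laws. Fix the event \(A\). The \(G\)-center of \(X\) on \(A\) is a center of the unnormalized measure \(\nu_X^A(B):=\mathbb P(\{X\in B\}\cap A)\). Since \(\cM_G\) is invariant under positive rescaling, it is also a center of the conditional law \(\Law(X\mid A)=\nu_X^A/\mathbb P(A)\), and likewise for \(Y\) and for \(A^c\). If \(X\le Y\) almost surely, then \(X\le Y\) almost surely under \(\mathbb P(\cdot\mid A)\), because \(\mathbb P(A)>0\). The pair \((X,Y)\) under \(\mathbb P(\cdot\mid A)\) is therefore a coupling witnessing \(\Law(X\mid A)\preceq_{\mathrm{st}}\Law(Y\mid A)\).

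To invoke Proposition~\ref{prop:center-order} we must check the standing integrability hypothesis for the conditional laws. We have \(\E[G((1+\eta)X)\mid A]\le \E[G((1+\eta)X)]/\mathbb P(A)<\infty\), since \(G\ge0\). The proposition then gives \(m_G^\pm(\Law(X\mid A))\le m_G^\pm(\Law(Y\mid A))\). Repeating the argument verbatim with \(A^c\), which also has positive probability, gives the same inequality on the other regime. This proves (i) for both the minimal and maximal selections.

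Part (ii) reduces to a calculation. For \(G(u)=u^2\), Table~\ref{tab:centers}, or directly the first-order condition in \eqref{eq:subgradient-center}, shows that the unique center of a nonzero measure with finite second moment is its normalized mean. On \(A\) this center is \(\E[X\1_A]/\mathbb P(A)=\E[X\mid A]\). Conditional expectation on a fixed event is linear, because \(\E[(X+Y)\1_A]=\E[X\1_A]+\E[Y\1_A]\), and the same holds on \(A^c\). This yields the displayed identity, provided \(X\), \(Y\) and \(X+Y\) all have finite second moments. That holds because \((x+y)^2\le 2x^2+2y^2\).

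There is no real obstacle here. The only point needing care is in (i): stochastic order of the conditional laws comes from the almost-sure inequality restricted to the same fixed event. The proof uses that both losses live on one probability space and that \(A\) does not depend on the loss. This is precisely why the result contrasts with the later failures of external monotonicity, where the partition is reoptimized for each loss.
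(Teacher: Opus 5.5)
Your proposal is correct and follows the paper's proof. Conditioning \(X\le Y\) on the fixed event gives first-order stochastic order of the conditional laws, so (i) is \Cref{prop:center-order}, and (ii) is linearity of conditional expectation; your explicit checks of conditional integrability and of square-integrability of \(X+Y\) are details the paper leaves implicit.
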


\begin{proof}
Conditioning the inequality \(X\le Y\) on either common event gives
first-order stochastic order of the two conditional laws, so (i) follows
from \Cref{prop:center-order}.  Part (ii) is linearity of conditional
expectation.
\end{proof}

Thus neither conditioning nor the block centers themselves cause the
failures below.  They arise because \(X\), \(Y\), and \(X+Y\) generally
select different endogenous threshold partitions.

\begin{proposition}[Mean-preserving scalar shadow]\label{prop:mean-shadow}
Let \(G(u)=u^2\) and \(X\in L^2\).  Every optimal regime encoding preserves
the unconditional mean:
\[
 \E\widehat X_\theta=\E X,\qquad
 \theta\in\mathcal A_G(\Law(X)).
\]
Consequently the probability-weighted barycenter of the two reported levels
is the monotone, cash-additive, and additive scalar risk functional
\(\E X\), independently of the selected optimal cutoff.
\end{proposition}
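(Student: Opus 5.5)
The plan is to show that for every \(\theta=(a,b,c,\sigma)\in\mathcal A_G(\Law(X))\) the encoded loss is a conditional expectation on the two cells, and then average. Write \(\mu=\Law(X)\). Let \(B_-\) be the lower cell selected by \(\theta\): it is \((-\infty,c]\) if \(\sigma={\le}\) and \((-\infty,c)\) if \(\sigma={<}\). Let \(B_+=\R\setminus B_-\), with the endpoint conventions at \(c=\pm\infty\).

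First, the case of two nonempty cells. By \Cref{def:omega}, each active level lies in the center set of its restricted measure. For \(G(u)=u^2\) we have \(X\in L^2\), so the integrability hypothesis of \Cref{prop:center-existence} holds. The function \(m\mapsto\int (x-m)^2\,\nu(dx)\) is a strictly convex quadratic, so its unique minimizer is \(\int x\,d\nu/\nu(\R)\), as in \Cref{tab:centers}. Hence
\[
 a=\frac{\E[X\1_{\{X\in B_-\}}]}{\mathbb P(X\in B_-)},\qquad
 b=\frac{\E[X\1_{\{X\in B_+\}}]}{\mathbb P(X\in B_+)}.
\]
Since \(\widehat X_\theta=a\1_{\{X\in B_-\}}+b\1_{\{X\in B_+\}}\), we get
\[
 \E\widehat X_\theta=a\,\mathbb P(X\in B_-)+b\,\mathbb P(X\in B_+)
 =\E[X\1_{\{X\in B_-\}}]+\E[X\1_{\{X\in B_+\}}]=\E X .
\]

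Second, the case of one empty cell. This happens, for example, when \(c=\pm\infty\) or when \(c\) lies outside the support. By convention the encoding is then the constant active level, and that level is the \(G\)-center of \(\mu\) itself, namely \(\E X\). Only cells of positive \(\mu\)-mass matter: a \(\mu\)-null cell contributes nothing to \(\E\widehat X_\theta\), and by the convention in \Cref{def:omega} its level equals the active one anyway. The identity therefore holds for every \(\theta\), whichever optimal cutoff or boundary convention is chosen.

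For the consequence, note that the barycenter \(a\,\mathbb P(X\in B_-)+b\,\mathbb P(X\in B_+)\) equals \(\E X\) by the computation above. It is therefore a functional of the law alone and does not depend on the choice of optimal action. The expectation is monotone, cash additive and linear, which gives the listed properties.

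I expect no real obstacle. The only point needing care is the bookkeeping of atoms at \(c\) through \(\sigma\), which the cell-based notation handles uniformly.
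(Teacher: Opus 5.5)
Your proof is correct and takes essentially the same route as the paper. Each active level is the conditional mean of its cell, the law of total expectation then gives \(\E\widehat X_\theta=\E X\), and a null cell is handled through the canonical convention of \Cref{def:omega}; your cell notation \(B_\pm\) just makes the atom and boundary-convention bookkeeping more explicit than the paper's brief argument.
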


\begin{proof}
On every nonempty regime, the quadratic center is the corresponding
conditional mean.  Applying the law of total expectation to the two regime
events gives \(\E\widehat X_\theta=\E X\).  If one regime is empty, the encoding
is the unconditional mean and the same identity holds.
\end{proof}

\begin{remark}[Subadditivity on a common partition]
Suppose a single-valued one-regime center \(m_G\) is subadditive on every
conditional probability space, and the same event \(A\) is optimal for
\(X\), \(Y\), and \(X+Y\).  Applying subadditivity under
\(\mathbb P(\,\cdot\mid A)\) and
\(\mathbb P(\,\cdot\mid A^c)\) gives
\[
 a_{X+Y}\le a_X+a_Y,\qquad b_{X+Y}\le b_X+b_Y.
\]
For quadratic loss these are equalities by
\Cref{prop:fixed-partition}; upper expectiles provide a subadditive
one-regime example for \(\tau\in[1/2,1)\), their coherent parameter range
\citep{BelliniKlarMuellerRosazza2014}.  This result is deliberately
regime-aligned: endogenous repartitioning invalidates its premise.
\end{remark}

\subsection{Bayes actions and the inactive-coordinate caveat}

For a report \(\theta=(a,b,c,\sigma)\), define
\[
 I_{\le,c}(x)=\1_{\{x\le c\}},\qquad
 I_{<,c}(x)=\1_{\{x<c\}},
\]
and
\begin{equation}\label{eq:joint-score}
 S_G(\theta,x)
 =G(x-a)I_{\sigma,c}(x)
  +G(x-b)\bigl(1-I_{\sigma,c}(x)\bigr).
\end{equation}
On its natural integrability class, let
\begin{equation}\label{eq:bayes-action}
 \mathcal B_G(\mu):=
 \argmin_\theta\int S_G(\theta,x)\,\mu(dx)
\end{equation}
be the raw Bayes-action correspondence.  If one cell is empty, its level
does not enter the score and is therefore unidentified.  Let
\(\operatorname{can}_\mu\) replace such an inactive level by the active one,
as in \Cref{def:omega}.  The correct relation is
\[
 \mathcal A_G(\mu)
 =\operatorname{can}_\mu\!\bigl(\mathcal B_G(\mu)\bigr),
\]
not a literal equality between \(\mathcal A_G(\mu)\) and the raw argmin in
\eqref{eq:bayes-action}.  The score is selectively consistent for
\(\mathcal B_G\).  On a specified class of laws with two active cells, a
fixed boundary convention, and a unique Bayes report, it is strictly
consistent for the resulting point-valued action functional
\citep{FisslerZiegel2016}.  In general, inactive coordinates, boundary
labels at atom-free cutoffs, and cost ties prevent strict identification of
an arbitrary selector; exhaustive elicitation of the whole optimizer set is
a different question \citep{FisslerFrongilloHlavinovaRudloff2021}.

\subsection{Failure of external monotonicity}

A natural regime-valued analogue of scalar monotonicity would require that
ordered losses admit ordered reports,
\[
 X\le Y\quad\Longrightarrow\quad
 \widehat X\le\widehat Y\quad\text{almost surely}.
\]
A weaker coordinate test asks only that the fitted lower and upper levels
do not decrease.  Both versions fail.  The example below has unique optimal
partitions and unique levels, so neither a cutoff representative nor a
tie-breaking rule can repair it.

\begin{proposition}[Global monotonicity fails]\label{prop:monotonicity-fails}
Let \(G(u)=u^2\) and let the three states be equiprobable.  Set
\[
 \begin{array}{c|ccc}
       &1&2&3\\ \hline
 X     &0&1&3\\
 Y     &0&2&3
 \end{array}
 \qquad\text{so that }X\le Y.
\]
The unique optimal support partition of \(X\) is
\(\{0,1\}\mid\{3\}\), with levels \((1/2,3)\), whereas the unique optimal
support partition of \(Y\) is
\(\{0\}\mid\{2,3\}\), with levels \((0,5/2)\).  Hence both optimized
levels decrease under the deterioration \(X\le Y\), and the encoded losses
are not ordered.
\end{proposition}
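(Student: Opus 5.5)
The plan is a direct finite computation, using the half-line reduction to cut the search down to a handful of threshold partitions.

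\emph{Reduction to finitely many partitions.} By \Cref{prop:halfline} and \Cref{cor:two-state}, the optimal value over all Borel regimes equals the minimum of the relaxed profile over cutoffs. Since both laws are uniform on three distinct support points with no repeated atoms, only three essentially different threshold partitions exist: the trivial one, the cut after the first support point, and the cut after the second. All cutoffs inside a support gap give the same restricted measures, and boundary conventions at atoms only reproduce one of these partitions. \Cref{prop:conditional-centers} says that for quadratic loss the active levels are the conditional means, so each candidate cost is a sum of within-cell squared deviations, weighted by \(1/3\).

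\emph{Computing the costs.} For \(X\) with support \(\{0,1,3\}\):
\begin{itemize}
\item the partition \(\{0,1\}\mid\{3\}\) costs \(\tfrac13\bigl(\tfrac14+\tfrac14\bigr)=\tfrac16\);
\item the partition \(\{0\}\mid\{1,3\}\) costs \(\tfrac13(1+1)=\tfrac23\);
\item the trivial partition costs \(\tfrac13\cdot\tfrac{14}{3}\), which is larger still.
\end{itemize}
So the unique optimum is \(\{0,1\}\mid\{3\}\), with conditional means \((1/2,3)\).

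For \(Y\) with support \(\{0,2,3\}\), the situation is mirror-symmetric:
\begin{itemize}
\item \(\{0\}\mid\{2,3\}\) costs \(\tfrac16\);
\item \(\{0,2\}\mid\{3\}\) costs \(\tfrac23\);
\item the trivial partition again costs more.
\end{itemize}
Hence the unique optimum is \(\{0\}\mid\{2,3\}\), with levels \((0,5/2)\).

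\emph{Uniqueness of the optimizers.} The optimal cutoff set may still be a whole gap interval, namely \([1,3)\) for \(X\) and \([0,2)\) for \(Y\). Every such cutoff, under either boundary convention, induces the same support partition, so the partition is unique. Because quadratic centers are single points, the levels are unique as well. The conclusion therefore does not depend on any tie-break.

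\emph{Conclusion.} The lower level drops from \(1/2\) to \(0\) and the upper level drops from \(3\) to \(5/2\). In state \(1\), \(\widehat X=1/2>0=\widehat Y\), while in state \(3\), \(\widehat X=3>5/2=\widehat Y\). So neither the encodings nor the coordinates are ordered, even though \(X\le Y\).

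I expect no step to be a real obstacle. The only care needed is to argue cleanly that the trivial partitions and the boundary conventions add no further candidates, and this follows from the atoms lying at distinct support points and the cost comparison above.
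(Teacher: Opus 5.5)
Your argument is correct and is essentially the paper's proof. It compares the two nontrivial split costs $1/6$ and $2/3$ for each law, reads off the conditional means, and exhibits the disorder at states $1$ and $3$. You also check the one-regime cost $14/9$, which the paper leaves to \Cref{prop:finite-exact}.

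One minor slip does not affect the conclusion. By \Cref{prop:finite-exact}, the optimal cutoff sets are the closed gaps $[1,3]$ and $[0,2]$, not $[1,3)$ and $[0,2)$. At a gap endpoint, only the boundary convention attaining the relaxed minimum realizes the optimal split, not either convention. That is all the uniqueness of the partition requires.
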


\begin{proof}
For \(X\), the costs of the two nontrivial threshold partitions are
\[
 \frac{(0-\tfrac12)^2+(1-\tfrac12)^2}{3}=\frac16,
 \qquad
 \frac{(1-2)^2+(3-2)^2}{3}=\frac23.
\]
Thus \(\{0,1\}\mid\{3\}\) is the unique optimal partition.  For \(Y\),
the same two costs occur in the opposite order:
\[
 \frac{(2-\tfrac52)^2+(3-\tfrac52)^2}{3}=\frac16,
 \qquad
 \frac{(0-1)^2+(2-1)^2}{3}=\frac23.
\]
The corresponding reported values are
\[
 \widehat X=(1/2,1/2,3),\qquad
 \widehat Y=(0,5/2,5/2).
\]
The desired inequality fails in both the first and third states.  The full
closed cutoff gaps do not affect any value on the support.
\end{proof}

This does not contradict \Cref{prop:center-order}.  Within either law, the
two conditional centers straddle the cutoff and their extremal selections
are nondecreasing as that law's cutoff moves; see
\Cref{cor:conditional-order}.  Comparing \(X\) and \(Y\) after conditioning
at the same numerical value would use different events and is not covered
by that result.  External conditional order is guaranteed on a common
exogenous event, as in \Cref{prop:fixed-partition}.  Here the failure occurs
because the globally optimal support partition changes.

\subsection{Failure of subadditivity and the remaining bound}

For reports on one probability space, the direct analogue of scalar
subadditivity is
\[
 \widehat{X+Y}\le\widehat X+\widehat Y
 \quad\text{almost surely}.
\]
At coordinate level this would at least require the upper severity level of
\(X+Y\) not to exceed the sum of the two upper levels.  Even this fails for
comonotone nonnegative losses.

\begin{proposition}[Subadditivity fails, even comonotonically]
\label{prop:subadd-fails}
Let \(G(u)=u^2\) on three equiprobable states and set
\[
 \begin{array}{c|ccc}
       &1&2&3\\ \hline
 X     &0&0&2\\
 Y     &0&2&3\\
 X+Y   &0&2&5 .
 \end{array}
\]
The variables \(X\) and \(Y\) are comonotone.  Their unique optimal support
partitions induce the level pairs
\[
 (0,2),\qquad (0,5/2),\qquad (1,5)
\]
for \(X,Y,X+Y\), respectively.  Consequently
\[
 5>2+\frac52,
\]
and the scenario-wise subadditivity inequality fails.
\end{proposition}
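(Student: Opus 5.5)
This is a finite computation, organized as in the proof of \Cref{prop:monotonicity-fails}.

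First, comonotonicity. I check directly that \((X(\omega)-X(\omega'))(Y(\omega)-Y(\omega'))\ge0\) for every pair of states. Both \(X=(0,0,2)\) and \(Y=(0,2,3)\) are nondecreasing in the state index, so every product is nonnegative; for states 1 and 2 it is \(0\cdot2=0\). Equivalently, both are nondecreasing functions of \(Y\). Hence \(X+Y=(0,2,5)\) is also comonotone with each summand.

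Second, the optimal partitions. By \Cref{prop:halfline} and \Cref{cor:two-state}, the problem reduces to threshold partitions of the ordered support. Each law has at most three support points, so only the two nontrivial splits need comparing. The trivial cut \(c=\pm\infty\) has strictly larger cost: it equals the full variance, which exceeds any split cost by the variance decomposition. Atom-assignment issues are harmless here, because with distinct support values every whole-atom assignment is some support split.

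For \(X\), the support is \(\{0,2\}\) with masses \(2/3\) and \(1/3\). The split \(\{0\}\mid\{2\}\) has cost \(0\), and by \Cref{prop:retained-axioms}(v) it is exact. Its levels are \((0,2)\), and uniqueness holds since only a zero-cost two-cell partition can be optimal. For \(Y\), the two splits cost \(\tfrac13[(2-\tfrac52)^2+(3-\tfrac52)^2]=\tfrac16\) for \(\{0\}\mid\{2,3\}\) and \(\tfrac13[(0-1)^2+(2-1)^2]=\tfrac23\) for \(\{0,2\}\mid\{3\}\). This gives levels \((0,5/2)\). For \(X+Y\), the split \(\{0,2\}\mid\{5\}\) costs \(\tfrac23\) and \(\{0\}\mid\{2,5\}\) costs \(\tfrac13\cdot2\cdot(3/2)^2=\tfrac32\). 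So the optimum is \(\{0,2\}\mid\{5\}\) with levels \((1,5)\). The levels are conditional means by \Cref{prop:conditional-centers}, and they are unique by strict convexity.

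Third, the conclusion. In state 3 the encodings are \(\widehat{X+Y}=5\) and \(\widehat X+\widehat Y=2+\tfrac52=\tfrac92<5\), so scenario-wise subadditivity fails. Coordinatewise, the upper level \(5\) also exceeds \(2+\tfrac52\). I will also note state 2: there \(\widehat{X+Y}=1\) while \(\widehat X+\widehat Y=0+\tfrac52\), so the partitions genuinely differ.

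The only delicate point is making the uniqueness claim airtight given cutoff gaps and the relaxed profile. I will handle it, as in \Cref{prop:monotonicity-fails}, by noting that every cutoff in a closed support gap induces the same support partition. So uniqueness at the level of support partitions, and hence of levels and encoded values on the support, is what is asserted.
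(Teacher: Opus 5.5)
Your proposal is correct and follows the paper's proof essentially verbatim. It uses the same zero-cost exact reproduction for \(X\), the same split costs \(\tfrac16,\tfrac23\) for \(Y\) and \(\tfrac23,\tfrac32\) for \(X+Y\), and the same violation \(5>2+\tfrac52\) in state~3. Your extra checks (comonotonicity, and the strictly worse one-regime cut via the variance decomposition) are sound. The paper additionally notes that state~1 fails, since \(1>0+0\); at state~2, by contrast, \(1\le 0+\tfrac52\), so that state is not a violation.
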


\begin{proof}
The two-point loss \(X\) is reproduced exactly, so its optimized cost is
zero.  The calculation for \(Y\) is the second half of the proof of
\Cref{prop:monotonicity-fails}: its optimal partition is
\(\{0\}\mid\{2,3\}\), with levels \((0,5/2)\) and cost \(1/6\).
For \(X+Y\), splitting \(\{0,2\}\mid\{5\}\) gives levels \((1,5)\) and
cost
\[
 \frac{(0-1)^2+(2-1)^2}{3}=\frac23,
\]
whereas splitting \(\{0\}\mid\{2,5\}\) gives cost
\[
 \frac{(2-\tfrac72)^2+(5-\tfrac72)^2}{3}=\frac32.
\]
Hence the stated partition is unique.  At the third state the combined
report is \(5\), while the sum of the separate reports is \(2+5/2\);
at the first state the combined lower report \(1\) also exceeds \(0+0\).
\end{proof}

The minimized distortion is not a monetary risk measure either.  To state
the useful remnant in the correct units, for \(p\ge1\) define the
\(k\)-level \(L^p\) approximation error
\begin{equation}\label{eq:Dkp}
 D_{k,p}(X):=
 \inf_{z_1,\ldots,z_k\in\R}
 \left\|\min_{1\le i\le k}|X-z_i|\right\|_{L^p}.
\end{equation}
For \(G(u)=|u|^p\),
\(D_{2,p}(X)^p=Q_G(\Law(X))\).

\begin{proposition}[One-regime aggregation envelope]\label{prop:D-envelope}
For \(p\ge1\) and any jointly defined \(X,Y\in L^p\),
\begin{equation}\label{eq:D-envelope}
 D_{2,p}(X+Y)
 \le
 \min\!\left\{
  D_{2,p}(X)+D_{1,p}(Y),\
  D_{1,p}(X)+D_{2,p}(Y)
 \right\}.
\end{equation}
In general neither \(D_{2,p}\) nor \(Q_G\) is subadditive with its own
two-level value on the right-hand side.
\end{proposition}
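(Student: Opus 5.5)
The inequality will follow from a direct construction plus Minkowski's inequality in \(L^p\). By symmetry between \(X\) and \(Y\) it suffices to prove
\[
 D_{2,p}(X+Y)\le D_{2,p}(X)+D_{1,p}(Y).
\]
Fix \(\varepsilon>0\). Choose levels \(z_1,z_2\) with
\(\|\min_i|X-z_i|\|_{L^p}\le D_{2,p}(X)+\varepsilon\). Choose a single level \(w\) with \(\|Y-w\|_{L^p}\le D_{1,p}(Y)+\varepsilon\). The infima are in fact attained by \Cref{prop:center-existence} and the existence results for the two-level problem, but near-optimal choices avoid needing this.

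Next, build a measurable selector \(\iota(\omega)\in\{1,2\}\) attaining \(\min_i|X-z_i|\). One choice is \(\iota=1\) on \(\{|X-z_1|\le|X-z_2|\}\) and \(\iota=2\) otherwise. Use the two candidate levels \(z_1+w,\,z_2+w\) for \(X+Y\). Pointwise,
\[
 \min_i|X+Y-z_i-w|
 \le|X+Y-z_\iota-w|
 \le|X-z_\iota|+|Y-w|
 =\min_i|X-z_i|+|Y-w|.
\]
Take \(L^p\) norms and apply Minkowski's inequality (valid since \(p\ge1\); this is the only place the restriction is used). This gives
\(D_{2,p}(X+Y)\le D_{2,p}(X)+D_{1,p}(Y)+2\varepsilon\). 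Letting \(\varepsilon\downarrow0\) proves \eqref{eq:D-envelope}.

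The underlying idea is that a single-level approximation of \(Y\) can be added to any two-level codebook of \(X\) without increasing the number of levels. Joint definability of \(X\) and \(Y\) is needed only so that the pointwise triangle inequality makes sense. There is no real obstacle in this half.

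The substantive part is the negative claim. I would exhibit an explicit pair for which
\(D_{2,p}(X+Y)>D_{2,p}(X)+D_{2,p}(Y)\), and equivalently for \(Q_G\) with \(G=|u|^p\). A failure for \(D_{2,p}\) with \(p=1\) transfers to \(Q_G\), since \(Q_G=D_{2,p}^p\) and \(Q_G=D_{2,1}\) when \(p=1\). The cleanest route is to take two losses that each have at most two support points, so that \(D_{2,p}(X)=D_{2,p}(Y)=0\) by \Cref{prop:retained-axioms}(v), while \(X+Y\) has three distinct support points.

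For example, use three equiprobable states with \(X=(0,1,1)\) and \(Y=(0,0,1)\), so that \(X+Y=(0,1,2)\). Any two levels leave at least one of the three distinct values uncovered, so \(D_{2,p}(X+Y)>0\). For quadratic \(G\), \(Q_G(X+Y)=\tfrac16\), computed as in \Cref{prop:monotonicity-fails}. This strict positivity against a right-hand side of zero shows that neither \(D_{2,p}\) nor \(Q_G\) is subadditive for any \(p\ge1\).

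The only care needed is to state that the failure holds for every \(p\), which the zero-versus-positive argument gives uniformly. It is also worth noting that the pair is even comonotone, matching the spirit of \Cref{prop:subadd-fails}.
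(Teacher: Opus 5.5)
Your proof of the inequality is the paper's argument. You shift a two-point codebook by a single level, use the pointwise bound \(\dist(X+Y,A+m)\le\dist(X,A)+|Y-m|\), apply Minkowski, and minimize. Your \(\varepsilon\)-optimal choices and explicit selector only spell out the minimization step.

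The genuine difference is in the negative claim. The paper reuses the comonotone example of \Cref{prop:subadd-fails} and computes \(D_{2,2}(X)=0\), \(D_{2,2}(Y)=\sqrt{1/6}\), \(D_{2,2}(X+Y)=\sqrt{2/3}\). That covers only \(p=2\), though it recycles an example already worked out for the level-coordinate failure. Your pair \(X=(0,1,1)\), \(Y=(0,0,1)\) consists of two-point losses, so it sets a zero right-hand side against a positive left-hand side. It therefore disproves subadditivity of \(D_{2,p}\) and of \(Q_G\) for every \(p\ge1\) at once, with essentially no computation.

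Two small tightenings are worth making.
\begin{itemize}
\item Showing that each fixed codebook leaves a value uncovered does not by itself make the infimum positive. Either invoke attainment via \Cref{prop:finite-exact}: both nontrivial splits of \(\{0,1,2\}\) have a block with two distinct points, hence positive cost. Or use pigeonhole: for every two-point codebook, some value in \(\{0,1,2\}\) lies at distance at least \(1/2\) from it, so \(D_{2,p}(X+Y)\ge\tfrac12(1/3)^{1/p}\).
\item Your remark that a failure for \(D_{2,p}\) transfers to \(Q_G\) only when \(p=1\) is too restrictive. Since \((s+t)^p\ge s^p+t^p\) for \(s,t\ge0\) and \(p\ge1\), any failure for \(D_{2,p}\) is also a failure for \(Q_G=D_{2,p}^p\).
\end{itemize}
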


\begin{proof}
For any two-point codebook \(A=\{z_1,z_2\}\) and \(m\in\R\),
\[
 \dist(X+Y,A+m)\le \dist(X,A)+|Y-m|
\]
pointwise.  Minkowski's inequality and then minimization over \(A\) and
\(m\) give the first bound inside the minimum; interchanging \(X,Y\) gives
the second.  In \Cref{prop:subadd-fails},
\[
 D_{2,2}(X)=0,\qquad
 D_{2,2}(Y)=\sqrt{\frac16},\qquad
 D_{2,2}(X+Y)=\sqrt{\frac23},
\]
which disproves two-level subadditivity.  The corresponding squared
distortions are \(0,1/6,2/3\), so \(Q_G\) fails it as well.
\end{proof}

\begin{remark}[What can and cannot be inherited from the block center]
For quadratic loss, the one-regime center is the mean and is monotone and
additive.  Upper expectiles generated by asymmetric quadratic loss satisfy
coherence properties for \(\tau\in[1/2,1)\), whereas quantiles
need not be subadditive
\citep{BelliniKlarMuellerRosazza2014,Artzner1999,RockafellarUryasev2002}.
The counterexamples
above use the mean inside every block.  They show that favorable axioms of
the one-regime center do not automatically pass through endogenous
partition optimization.  For essentially bounded losses, the support-hull
bound still gives the coarse control
\[
 b_{X+Y}\le\operatorname*{ess\,sup}(X+Y)
 \le\operatorname*{ess\,sup}X+\operatorname*{ess\,sup}Y
\]
for every active upper level \(b_{X+Y}\), but this is a worst-case envelope,
not subadditivity of the two-regime summary.
\end{remark}

\section{Stability of the two-regime summary}\label{sec:stability}

This section separates three questions.  Which topology on probability laws
controls an unbounded residual loss?  In which sense do cutoff profiles
converge when atoms create jumps?  Under what conditions do optimal cutoffs,
levels, and one-jump graph representatives converge?

\subsection{Why weak convergence is too weak for tail-sensitive centers}

\begin{example}[A vanishing tail with persistent loss]\label{ex:weak-fails}
Let \(G(u)=|u|\) and
\[
 \mu_n=\left(1-\frac1n\right)\delta_0+\frac1n\delta_n,
 \qquad \mu=\delta_0,\qquad n\ge2.
\]
Then \(\mu_n\Rightarrow\mu\).  For each fixed \(c<0\), however, the split has
only one active regime, and
\[
 H_G^\le(\mu_n,c)
 =\inf_m\int|x-m|\,\mu_n(dx)=1,
 \qquad
 H_G^\le(\mu,c)=0.
\]
Even \(\sup_n\int G((1+\eta)x)\,\mu_n(dx)<\infty\) for every fixed
\(\eta>0\).  The missing property is uniform integrability: the entire
first moment sits in a tail event whose probability tends to zero.
\end{example}

\subsection{The standard \texorpdfstring{\(\psi\)-weak}{psi-weak} topology}

Fix one \(\eta>0\) for which every law under comparison has the moment
specified below, and define
\begin{equation}\label{eq:psi}
 \psi(x):=1+G((1+\eta)x).
\end{equation}
By \Cref{ass:G}, \(\psi\) is continuous, at least one, and coercive.  Put
\[
\cP_\psi(\R):=
 \left\{\mu\in\cP(\R):\int\psi\,d\mu<\infty\right\}.
\]
Let \(\cM_\psi(\R)\) denote the corresponding finite nonnegative measures.

\begin{definition}[\(\psi\)-weak convergence]\label{def:psiweak}
For \(\nu_n,\nu\in\cM_\psi(\R)\), write \(\nu_n\psito\nu\) if
\begin{equation}\label{eq:psi-characterization}
 \nu_n\Rightarrow\nu,\qquad
 \int\psi\,d\nu_n\longrightarrow\int\psi\,d\nu.
\end{equation}
\end{definition}

\begin{lemma}[Useful characterizations]\label{lem:psi-characterization}
Condition \eqref{eq:psi-characterization} is equivalent to weak convergence
plus uniform \(\psi\)-integrability,
\[
 \lim_{M\to\infty}\sup_n
 \int_{\{\psi>M\}}\psi\,d\nu_n=0.
\]
It is also equivalent to convergence of integrals for every continuous
\(f\) satisfying \(|f|\le C\psi\).  These equivalences follow by truncating
at compact sublevel sets of the continuous coercive function \(\psi\).
\end{lemma}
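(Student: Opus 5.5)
We prove the three conditions equivalent in the order (a) \(\Leftrightarrow\) (b) and then (a) \(\Leftrightarrow\) (c). Here (a) is \eqref{eq:psi-characterization}; (b) is weak convergence plus uniform \(\psi\)-integrability; (c) is \(\int f\,d\nu_n\to\int f\,d\nu\) for every continuous \(f\) with \(|f|\le C\psi\).

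\textbf{(c) \(\Rightarrow\) (a).} This is immediate. Bounded continuous functions satisfy \(|f|\le \|f\|_\infty\psi\) because \(\psi\ge1\), and \(f=\psi\) is admissible.

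\textbf{(a) \(\Rightarrow\) (b).} We truncate with continuous cutoffs rather than indicators, to avoid boundary-mass issues. For \(M>0\) put \(\psi_M:=\min(\psi,M)\). This function is bounded and continuous, so weak convergence gives \(\int\psi_M\,d\nu_n\to\int\psi_M\,d\nu\). Subtracting from (a),
\[
\int(\psi-M)_+\,d\nu_n\to\int(\psi-M)_+\,d\nu .
\]
The right side tends to \(0\) as \(M\to\infty\) by dominated convergence, since \(\int\psi\,d\nu<\infty\).

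To pass to indicators, use the pointwise bound
\[
\psi\1_{\{\psi>2M\}}\le 2(\psi-M)_+ .
\]
Given \(\varepsilon>0\), pick \(M\) with \(\int(\psi-M)_+\,d\nu<\varepsilon\). Then \(\int(\psi-M)_+\,d\nu_n<2\varepsilon\) for all \(n\ge n_0\). The finitely many \(n<n_0\) are handled by enlarging \(M\), since each \(\nu_n\) integrates \(\psi\).

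\textbf{(b) \(\Rightarrow\) (c).} Let \(f\) be continuous with \(|f|\le C\psi\). Use a continuous cutoff \(\chi_M\) with \(0\le\chi_M\le1\), equal to \(1\) on \(\{\psi\le M\}\) and to \(0\) on \(\{\psi\ge 2M\}\), for instance \(\chi_M=\min\{1,(2-\psi/M)_+\}\).

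Since \(\psi\) is coercive and continuous, \(\{\psi\le 2M\}\) is compact. Hence \(f\chi_M\) is continuous with compact support, so it is bounded, and weak convergence handles it.

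The remainder satisfies
\[
|f(1-\chi_M)|\le C\psi\1_{\{\psi>M\}} .
\]
Its \(\nu_n\)-integrals are uniformly small by (b). Its \(\nu\)-integral is small because \(\int\psi\,d\nu<\infty\); this is either part of the standing assumption \(\nu\in\cM_\psi\) or follows from the portmanteau theorem applied to \(\psi_M\). A standard \(\varepsilon/3\) argument concludes.

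\textbf{Finite measures.} Weak convergence of finite measures includes convergence of total masses (take \(f\equiv1\)), so \(\sup_n\nu_n(\R)<\infty\) and nothing changes.

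The main care point is the indicator-versus-continuous-truncation issue: \(\{\psi>M\}\) may carry limiting mass on its boundary, so one should work throughout with \(\min(\psi,M)\) and \((\psi-M)_+\), and pass to indicators only through the pointwise comparison \(\psi\1_{\{\psi>2M\}}\le2(\psi-M)_+\).
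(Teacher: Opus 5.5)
Your proof is correct and follows essentially the same route as the paper. Both arguments truncate through \(\psi\wedge M\) and \((\psi-M)_+\), use the pointwise bound \(\psi\1_{\{\psi>2M\}}\le 2(\psi-M)_+\) with the finitely many initial indices treated separately, and truncate on the compact sublevel sets of \(\psi\) for the test-function class. Closing the cycle as (a)\(\Rightarrow\)(b)\(\Rightarrow\)(c)\(\Rightarrow\)(a) and using the explicit continuous cutoff \(\chi_M\) only make rigorous what the paper states tersely.
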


\begin{proof}
Weak convergence controls \(\psi\wedge M\).  Together with convergence of
the \(\psi\)-moments, this makes the tails uniformly small; for example,
\[
 y\1_{\{y>2M\}}\le2(y-M)_+,\qquad
 \int(\psi-M)_+\,d\nu_n
 =\int\psi\,d\nu_n-\int(\psi\wedge M)\,d\nu_n.
\]
The finitely many initial indices are handled separately.  Conversely,
uniform tail control and weak convergence give moment convergence.
Truncating a continuous \(f\) with \(|f|\le C\psi\) on the compact sets
\(\{\psi\le M\}\) proves the test-function characterization.
\end{proof}

This is the standard weighted weak topology, used here with a gauge adapted
to \(G\); see
\citet{KratschmerSchiedZaehle2012,KratschmerSchiedZaehle2014}.

The choice \eqref{eq:psi} is dictated by the convex bound
\eqref{eq:convex-domination}: translated losses are controlled by
\(\psi\), uniformly when the center ranges over a compact set.

\begin{remark}[Wasserstein special cases]\label{rem:wasserstein}
For probability laws and \(G(u)=|u|^p\), \(p\ge1\), \(\psi\)-weak
convergence is exactly \(p\)-Wasserstein convergence: weak convergence plus
convergence of the \(p\)th absolute moments \citep{Villani2009}.  The same holds for losses
whose two tails are comparable to \(|u|^p\).  In particular, check,
absolute, and Huber losses induce the \(1\)-Wasserstein topology, whereas
quadratic and asymmetric quadratic losses induce the \(2\)-Wasserstein
topology.
\end{remark}

\subsection{Continuity of one-regime optimization}

\begin{lemma}[Continuity of \(G\)-centering]\label{lem:V-continuity}
Let \(\nu_n,\nu\in\cM_+(\R)\) have mass at most one and finite
\(\psi\)-moments.  If \(\nu_n\psito\nu\), then
\[
 \cV_G(\nu_n)\longrightarrow\cV_G(\nu).
\]
If \(\nu\ne0\), then \(\nu_n\ne0\) eventually; every sequence
\(m_n\in\cM_G(\nu_n)\) is then bounded, and every cluster point belongs to
\(\cM_G(\nu)\).  If the limiting center is unique, \(m_n\) converges to it.
\end{lemma}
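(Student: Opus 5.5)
We treat the zero limit first and then the standard argument for a nonzero limit: an upper bound through a fixed competitor, a lower bound through compactness of near-optimal centers, and identification of cluster points. All integral convergence will come from the test-function characterization in \Cref{lem:psi-characterization}, applied with the envelope \eqref{eq:convex-domination}.

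\emph{Case \(\nu=0\).} Testing \(\psi\ge1\) gives \(\nu_n(\R)\le\int\psi\,d\nu_n\to0\). For the center \(m=0\) we have
\[
 0\le\cV_G(\nu_n)\le\int G(x)\,\nu_n(dx)\le\int\psi\,d\nu_n\to0=\cV_G(0).
\]
This uses \(G(x)\le G((1+\eta)x)\), which holds by convexity and \(G(0)=0\). If \(\nu_n=0\) for some \(n\), the value is \(0\) by convention.

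\emph{Case \(\nu\ne0\), upper bound.} Weak convergence gives \(\nu_n(\R)\to\nu(\R)>0\), so \(\nu_n\ne0\) eventually. Fix \(m^\ast\in\cM_G(\nu)\), which exists by \Cref{prop:center-existence}. The function \(x\mapsto G(x-m^\ast)\) is continuous and, by \eqref{eq:convex-domination}, bounded by \(C\psi\). Hence
\[
 \limsup_n\cV_G(\nu_n)\le\lim_n\int G(x-m^\ast)\,d\nu_n=\cV_G(\nu).
\]

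\emph{Case \(\nu\ne0\), boundedness of centers.} This is the step I expect to be the main obstacle, because it must hold uniformly in \(n\).
\begin{itemize}
\item Choose \(R\) with \(\nu((-R,R))>\nu(\R)/2\). The portmanteau theorem on the open set \((-R,R)\) gives \(\nu_n([-R,R])\ge\delta:=\nu(\R)/2\) for all large \(n\).
\item For \(|m|\ge R+r\) and \(x\in[-R,R]\), we have \(|x-m|\ge r\), so
\[
 \int G(x-m)\,d\nu_n\ge\delta\min\{G(r),G(-r)\}.
\]
\item By coercivity (G3), this minimum exceeds \(\sup_n\cV_G(\nu_n)+1\) for \(r\) large.
\item The right-hand side is finite by the upper-bound step and the finite \(\psi\)-moments.
\end{itemize}
Therefore every minimizer, and every \(1/n\)-minimizer, lies in the fixed compact set \([-R-r,R+r]\).

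\emph{Case \(\nu\ne0\), lower bound and identification.} Take near-optimal \(m_n\) in this compact set and a subsequence converging to some \(\bar m\). To pass to the limit in \(\int G(x-m_n)\,d\nu_n\), I will avoid Fatou-type arguments in the moving measure and use a uniform estimate instead.
\begin{itemize}
\item By convexity and continuity, \(G(\cdot-m)\) is locally Lipschitz in \(m\), uniformly for \(x\) in compacts.
\item The tail \(\int_{\{\psi>M\}}\) is uniformly small over \(n\) and over \(m\) in the compact set. This follows from \eqref{eq:convex-domination} combined with uniform \(\psi\)-integrability and \(\nu_n(\{\psi>M\})\to0\) uniformly.
\item Together these give \(\sup_{|m|\le K}\bigl|\int G(x-m)\,d\nu_n-\int G(x-m)\,d\nu\bigr|\to0\).
\item The limit \(m\mapsto\int G(x-m)\,d\nu\) is continuous.
\end{itemize}
Consequently \(\int G(x-m_n)\,d\nu_n\to\int G(x-\bar m)\,d\nu\ge\cV_G(\nu)\). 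This yields the \(\liminf\) bound and hence \(\cV_G(\nu_n)\to\cV_G(\nu)\).

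Applying the same uniform convergence to exact centers \(m_n\in\cM_G(\nu_n)\) shows that each cluster point \(\bar m\) attains \(\cV_G(\nu)\), so \(\bar m\in\cM_G(\nu)\). If \(\cM_G(\nu)\) is a singleton, the sequence is bounded and all its subsequential limits coincide, so \(m_n\) converges to the unique center.
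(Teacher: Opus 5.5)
Your proposal is correct and follows the paper's proof essentially step for step. It obtains the upper bound from fixed-center convergence via the envelope \eqref{eq:convex-domination}, using the \(m=0\) competitor when \(\nu=0\); it confines the centers by coercivity on a set whose limiting mass stays bounded away from zero; and it passes to the limit and identifies cluster points using continuity on compact sets together with uniform \(\psi\)-tail control. The differences are cosmetic: you prove convergence uniformly over a compact set of centers (your equicontinuity bound combined with fixed-\(m\) convergence on a finite net) rather than along the extracted center sequence, and you apply the portmanteau bound to an open interval rather than to a \(\nu\)-continuity set \(K\).
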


\noindent See \hyperref[app:Vcontinuity]{\Cref*{app:Vcontinuity}} for the
proof.

\subsection{Moving cutoffs and profile regularity}

\begin{lemma}[Moving restrictions]\label{lem:moving-trunc}
Suppose \(\mu_n\psito\mu\), \(c_n\to c\in\R\), and \(\mu(\{c\})=0\).
Then
\[
 \restr{\mu_n}{(-\infty,c_n]}\psito
 \restr{\mu}{(-\infty,c]},\qquad
 \restr{\mu_n}{(c_n,\infty)}\psito
 \restr{\mu}{(c,\infty)}.
\]
The same statement holds for the open-left convention.
\end{lemma}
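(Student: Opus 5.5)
By \Cref{def:psiweak} and \Cref{lem:psi-characterization}, it suffices to prove two things for each restriction: weak convergence as finite measures, and convergence of the \(\psi\)-integrals. Both follow from one statement: for every continuous \(f\) with \(|f|\le C\psi\),
\[
 \int f\,\1_{(-\infty,c_n]}\,d\mu_n\longrightarrow\int f\,\1_{(-\infty,c]}\,d\mu .
\]
Bounded continuous \(f\) give weak convergence, and \(f=\psi\) gives moment convergence. The upper restriction is then the difference of the full integral, which converges by \Cref{lem:psi-characterization}, and the lower one. The open-left convention is handled identically, since \(\1_{(-\infty,c_n)}\) and \(\1_{(-\infty,c_n]}\) differ only at \(c_n\), which the sandwich below treats in the same way.

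The plan is a sandwich argument using a continuous cutoff. Fix \(\varepsilon>0\). Since \(\mu(\{c\})=0\) and \(\psi\) is \(\mu\)-integrable, choose \(\delta>0\) with
\[
 \int_{[c-2\delta,c+2\delta]}\psi\,d\mu<\varepsilon .
\]
Let \(\chi\) be a continuous function with \(0\le\chi\le1\), equal to \(1\) on \((-\infty,c-\delta]\) and to \(0\) on \([c+\delta,\infty)\). For \(n\) large, \(|c_n-c|<\delta/2\). Then \(\1_{(-\infty,c_n]}\) and \(\chi\) agree outside \(J=[c-\delta,c+\delta]\), so
\[
 \Bigl|\int f\1_{(-\infty,c_n]}d\mu_n-\int f\chi\,d\mu_n\Bigr|\le C\int_J\psi\,d\mu_n .
\]
The same bound holds for \(\mu\) with \(c\) in place of \(c_n\). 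Because \(f\chi\) is continuous and dominated by \(C\psi\), \(\int f\chi\,d\mu_n\to\int f\chi\,d\mu\).

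The main obstacle is controlling \(\limsup_n\int_J\psi\,d\mu_n\). Weak convergence only bounds masses of closed sets from above, so I would use a continuous bump instead. Take \(\phi\) continuous with \(\1_J\le\phi\le\1_{[c-2\delta,c+2\delta]}\). Then \(\psi\phi\) is continuous and dominated by \(\psi\), so by \Cref{lem:psi-characterization}
\[
 \int_J\psi\,d\mu_n\le\int\psi\phi\,d\mu_n\longrightarrow\int\psi\phi\,d\mu\le\varepsilon .
\]
Combining the three estimates gives
\[
 \limsup_n\Bigl|\int f\1_{(-\infty,c_n]}d\mu_n-\int f\1_{(-\infty,c]}d\mu\Bigr|\le 2C\varepsilon ,
\]
and letting \(\varepsilon\downarrow0\) proves the claim. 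The hypothesis \(\mu(\{c\})=0\) is used exactly once, in choosing \(\delta\) so that the \(\mu\)-integral of \(\psi\) over the shrinking window is small. This is where a boundary atom would break the argument, which is consistent with the relaxation studied later.
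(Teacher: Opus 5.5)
Your argument is correct and is the same continuous-cutoff sandwich the paper uses. The only difference is in how the moments are handled. You weight the boundary window by \(\psi\) and invoke the test-function part of \Cref{lem:psi-characterization}, which gives weak and \(\psi\)-moment convergence in one pass. The paper instead first treats bounded \(f\), choosing \(c\pm2\delta\) as continuity points of \(\mu\), and then gets the \(\psi\)-moments by applying the same argument to \(\psi\wedge M\), using that the restrictions inherit uniform \(\psi\)-integrability from \(\mu_n\).
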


\noindent See \hyperref[app:movingtrunc]{\Cref*{app:movingtrunc}}.

\begin{theorem}[Continuous convergence at atom-free cutoffs]
\label{thm:moving-cut}
Suppose \(\mu_n\psito\mu\) and \(c_n\to c\in\Rbar\).
\begin{enumerate}[label=(\roman*),leftmargin=2.2em]
\item If \(c\in\R\) and \(\mu(\{c\})=0\), then
\[
 H_G^\le(\mu_n,c_n)\to
 H_G^\le(\mu,c)=H_G^<(\mu,c).
\]
The same limit holds for \(H_G^<\) and \(\underline H_G\).
\item If \(c=\pm\infty\), all three profiles converge to
\(\cV_G(\mu)\).
\item In case (i), suppose both limiting restrictions have positive mass
and a unique \(G\)-center.  Then every choice of optimal levels for the
moving restrictions converges to the corresponding limiting centers.
Strict convexity of \(G\) is a sufficient, but not necessary, condition.
\end{enumerate}
\end{theorem}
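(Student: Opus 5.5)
The plan is to reduce everything to \Cref{lem:moving-trunc} and \Cref{lem:V-continuity}.

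\emph{Part (i).} Since \(\mu(\{c\})=0\), \Cref{lem:moving-trunc} gives
\(\restr{\mu_n}{(-\infty,c_n]}\psito\restr{\mu}{(-\infty,c]}\) and
\(\restr{\mu_n}{(c_n,\infty)}\psito\restr{\mu}{(c,\infty)}\).
All these measures have mass at most one and finite \(\psi\)-moments, being restrictions of laws in \(\cP_\psi(\R)\). \Cref{lem:V-continuity} therefore applies to each summand of \eqref{eq:Hle}, giving \(H_G^\le(\mu_n,c_n)\to H_G^\le(\mu,c)\).
Because \(\mu\) does not charge \(c\), the closed- and open-left restrictions of \(\mu\) coincide, so \(H_G^\le(\mu,c)=H_G^<(\mu,c)\). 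The open-left clause of \Cref{lem:moving-trunc} gives \(H_G^<(\mu_n,c_n)\to H_G^<(\mu,c)\) in the same way. The relaxed profile \(\underline H_G\) converges as the minimum of two sequences with a common limit.

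\emph{Part (ii).} Take \(c_n\to+\infty\); the case \(-\infty\) is symmetric. Terms with \(c_n=+\infty\) equal \(\cV_G(\mu_n)\to\cV_G(\mu)\) by \Cref{lem:V-continuity}, so we may assume \(c_n\) is finite. We show
\[
\restr{\mu_n}{(-\infty,c_n]}\psito\mu,\qquad \restr{\mu_n}{(c_n,\infty)}\psito0,
\]
and likewise for the open-left convention.

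\begin{itemize}[leftmargin=2em]
\item Since \(\psi\ge1\), the upper piece satisfies \(\int\psi\,d\restr{\mu_n}{(c_n,\infty)}\le\int_{\{x>c_n\}}\psi\,d\mu_n\).
\item For any fixed \(K\), eventually \(c_n>K\), so this is at most \(\int_{\{|x|>K\}}\psi\,d\mu_n\).
\item By \Cref{lem:psi-characterization}, the uniform \(\psi\)-integrability makes this small uniformly in \(n\) for \(K\) large, since \(\{|x|>K\}\subseteq\{\psi>M\}\) up to a compact set of small \(\mu_n\)-mass. The coercivity of \(\psi\) lets us pass between \(\{|x|>K\}\) and \(\{\psi>M\}\).
\item Hence the upper piece tends to \(0\) in total \(\psi\)-mass, which implies \(\psi\)-weak convergence to \(0\).
\item The lower piece is \(\mu_n\) minus the upper piece. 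Testing against \(f\) with \(|f|\le C\psi\) gives convergence by the test-function characterization.
\end{itemize}

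\Cref{lem:V-continuity} then gives \(\cV_G\) of the lower piece tending to \(\cV_G(\mu)\) and \(\cV_G\) of the upper piece tending to \(\cV_G(0)=0\). This covers all three profiles.

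\emph{Part (iii).} Both limiting restrictions are nonzero, and by part (i) their approximants converge \(\psi\)-weakly. Any optimal levels at \((\mu_n,c_n)\) are, by \Cref{prop:conditional-centers}, centers of the moving restrictions, at least once those restrictions are nonzero, which \Cref{lem:V-continuity} guarantees eventually. The final clause of \Cref{lem:V-continuity} gives convergence to the unique limiting centers. Strict convexity suffices by \Cref{prop:center-existence}. It is not necessary: for example, \(|u|\) has a unique median whenever the distribution function crosses \(1/2\) strictly.

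\emph{Main obstacle.} The only delicate step is part (ii). A moving cutoff escaping to infinity must be shown to strip off only a uniformly \(\psi\)-negligible tail. This is exactly where \Cref{ex:weak-fails} shows that plain weak convergence fails, and where the uniform integrability from \Cref{lem:psi-characterization} is used.
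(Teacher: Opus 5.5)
Your proposal is correct and follows the paper's own argument. At a finite atom-free cutoff it combines \Cref{lem:moving-trunc} with \Cref{lem:V-continuity}, using the unique-center clause of the latter for (iii). At \(\pm\infty\) it uses coercivity of \(\psi\) together with uniform \(\psi\)-integrability to show that the upper restriction vanishes in \(\psi\)-mass, so the lower one converges \(\psi\)-weakly to \(\mu\), exactly as in the endpoint paragraph of the appendix proof of \Cref{thm:epi}.

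One small tidy-up: the hedge ``up to a compact set of small \(\mu_n\)-mass'' is unnecessary. Coercivity makes \(\{\psi\le M\}\) bounded, which gives the exact inclusion \(\{|x|>K\}\subseteq\{\psi>M\}\) once \(K\) is large enough.
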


\begin{proof}
For finite atom-free \(c\), apply \Cref{lem:moving-trunc} to both
restrictions and then \Cref{lem:V-continuity}.  The center assertion is the
unique-center part of the same lemma.  At \(+\infty\), tightness and uniform
\(\psi\)-integrability make the upper restriction converge
\(\psi\)-weakly to zero and the lower restriction converge to \(\mu\); at
\(-\infty\) the roles reverse.  A detailed endpoint argument is included
in \hyperref[app:epi-endpoints]{the endpoint part of
\Cref*{app:epi}}.
\end{proof}

\begin{proposition}[Cutoff profiles at boundary atoms]\label{prop:profile-regularity}
For fixed \(\mu\), \(c\mapsto H_G^\le(\mu,c)\) is c\`adl\`ag on
\(\R\), with
\begin{equation}\label{eq:left-limit}
 H_G^\le(\mu,c-)=H_G^<(\mu,c).
\end{equation}
Its jumps can occur only at atoms of \(\mu\).  Moreover
\[
 \underline H_G(\mu,c)
 =\min\{H_G^\le(\mu,c-),H_G^\le(\mu,c)\}
\qquad(c\in\R).
\]
Together with the endpoint values
\(\underline H_G(\mu,\pm\infty)=H_G^\le(\mu,\pm\infty)=\cV_G(\mu)\),
this is the greatest lower-semicontinuous minorant of
\(H_G^\le(\mu,\cdot)\) on \(\Rbar\).  In particular,
\(\cC_G(\mu)\) is a nonempty compact subset of \(\Rbar\).
\end{proposition}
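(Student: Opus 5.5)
The plan is to reduce everything to one-sided continuity of the restricted measures in the cutoff, and then invoke \Cref{lem:V-continuity}. Fix \(\mu\) and \(c\in\R\).

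\emph{Right limits.} As \(c_n\downarrow c\), the restrictions \(\restr{\mu}{(-\infty,c_n]}\) decrease setwise to \(\restr{\mu}{(-\infty,c]}\), and \(\restr{\mu}{(c_n,\infty)}\) increase to \(\restr{\mu}{(c,\infty)}\). All of these are dominated by \(\mu\), which has a finite \(\psi\)-moment. Dominated convergence therefore gives convergence of \(\int f\) for every continuous \(f\) with \(|f|\le C\psi\), so by \Cref{lem:psi-characterization} both restrictions converge \(\psi\)-weakly. \Cref{lem:V-continuity} then gives \(H_G^\le(\mu,c_n)\to H_G^\le(\mu,c)\), which is right continuity.

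\emph{Left limits.} For \(c_n\uparrow c\), the sets \((-\infty,c_n]\) increase to \((-\infty,c)\) and \((c_n,\infty)\) decrease to \([c,\infty)\). The same domination argument yields convergence to \(\mu_c^{<,\mp}\), hence \eqref{eq:left-limit}. If \(\mu(\{c\})=0\), the open and closed restrictions coincide, so \(H_G^<=H_G^\le\) at \(c\) and there is no jump. The formula for \(\underline H_G\) is then the definition \eqref{eq:Hrel} rewritten using \eqref{eq:left-limit}.

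\emph{Endpoints.} As \(c\to+\infty\), the lower restriction increases to \(\mu\) and the upper one decreases to \(0\), again dominated by \(\mu\). \Cref{lem:V-continuity} with \(\cV_G(0)=0\) gives convergence to \(\cV_G(\mu)\), and symmetrically at \(-\infty\). So \(H_G^\le(\mu,\cdot)\) is continuous at \(\pm\infty\) on \(\Rbar\), and it is c\`adl\`ag on \(\R\).

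\emph{Greatest lsc minorant.} For a c\`adl\`ag function \(h\) on \(\R\) that is continuous at \(\pm\infty\), the lsc envelope is \(h_*(c)=\liminf_{c'\to c}h(c')\), with \(c'\) ranging over all points including \(c\) itself. The only available limit values are \(h(c)\), \(h(c+)=h(c)\) and \(h(c-)\), so \(h_*(c)=\min\{h(c-),h(c)\}\). This matches \(\underline H_G\) on \(\R\), and at \(\pm\infty\) the envelope equals \(\cV_G(\mu)\) by the endpoint continuity above. For completeness I will check that \(\underline H_G\) is indeed lsc and that every lsc minorant lies below it:
\begin{itemize}[leftmargin=2em]
\item \emph{Lsc.} Along any sequence \(c_k\to c\), the values \(\underline H_G(\mu,c_k)\) are at least \(\min\) of values of \(h\) at or near \(c_k\). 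Their liminf is therefore at least \(\min\{h(c-),h(c)\}\).
\item \emph{Maximality.} An lsc minorant \(g\le h\) satisfies \(g(c)\le\liminf_{c'\uparrow c}h(c')=h(c-)\) and \(g(c)\le h(c)\).
\end{itemize}
Finally, \(\underline H_G(\mu,\cdot)\) is lsc on the compact space \(\Rbar\) and takes finite values, bounded by \(\cV_G(\mu)\) via the endpoints. Hence its argmin \(\cC_G(\mu)\) is nonempty and closed, and therefore compact.

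\emph{Main obstacle.} I expect the main difficulty to be justifying that setwise monotone convergence of restrictions dominated by \(\mu\) implies \(\psi\)-weak convergence, including when the limit is the zero measure (the case \(\nu=0\) in \Cref{lem:V-continuity}). Dominated convergence applied to the test functions \(f\) with \(|f|\le C\psi\) against the indicator functions handles this directly. The weak convergence part is the special case of bounded \(f\), and moment convergence is the case \(f=\psi\).
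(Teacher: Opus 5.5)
Your proposal is correct and follows essentially the same route as the paper. One-sided convergence of the restrictions, which you obtain by dominated convergence against test functions \(|f|\le C\psi\) where the paper uses continuity from above and monotone convergence of the \(\psi\)-moments, is fed into \Cref{lem:V-continuity} to get the c\`adl\`ag property, the left-limit identity, and continuity at \(\pm\infty\). As in the paper, the relaxed profile \(\min\{h(c-),h(c)\}\) is then shown to be lower semicontinuous and maximal among lower-semicontinuous minorants by approaching \(c\) from the left, and its minimum is attained on the compact space \(\Rbar\).
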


\noindent See
\hyperref[app:profile-regularity]{\Cref*{app:profile-regularity}}.

The lower-semicontinuous envelope is essential even in a three-point
example.  For \(G(u)=u^2\) and
\(\mu=(\delta_0+\delta_9+\delta_{10})/3\),
\[
 H_G^\le(\mu,9-)=\frac16,\qquad
 H_G^\le(\mu,9)=\frac{27}{2}.
\]
Thus the raw profile can jump upward at an atom; reversing the boundary
assignment gives the complementary one-sided value.  The relaxed profile
retains the better whole-atom assignment and is lower semicontinuous.

\begin{corollary}[Pointwise, local \(L^p\), and uniform profile limits]
\label{cor:profile-modes}
If \(\mu_n\psito\mu\), then:
\begin{enumerate}[label=(\alph*),leftmargin=2.2em]
\item \(H_G^\le(\mu_n,c)\to H_G^\le(\mu,c)\) at every atom-free \(c\);
\item for every compact \(K\subset\R\) and \(1\le p<\infty\),
\[
 \int_K|H_G^\le(\mu_n,c)-H_G^\le(\mu,c)|^p\,dc\to0;
\]
\item if \(\mu\) is atomless, then convergence is uniform on \(\Rbar\).
\end{enumerate}
\end{corollary}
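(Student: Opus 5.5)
Part (a) is immediate from \Cref{thm:moving-cut}(i) applied with the constant sequence \(c_n=c\).

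For part (b), the plan is dominated convergence on \(K\). The set of atoms of \(\mu\) is countable, hence Lebesgue-null, so (a) gives convergence for Lebesgue-a.e.\ \(c\in K\). It then suffices to exhibit a uniform bound. Taking \(m=0\) in \eqref{eq:VG} for each regime gives
\[
 0\le H_G^\le(\nu,c)\le\int G(x)\,\nu(dx)\le\int\psi\,d\nu ,
\]
because \(G(x)\le G((1+\eta)x)\) by convexity and \(G(0)=0\). Since \(\int\psi\,d\mu_n\to\int\psi\,d\mu\), these constants are bounded in \(n\). Hence \(|H_G^\le(\mu_n,c)-H_G^\le(\mu,c)|^p\) is bounded uniformly on \(K\), and bounded convergence finishes (b).

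For part (c), argue by contradiction. Suppose there are \(\varepsilon>0\), a subsequence, and points \(c_n\in\Rbar\) with \(|H_G^\le(\mu_n,c_n)-H_G^\le(\mu,c_n)|\ge\varepsilon\). By compactness of \(\Rbar\), pass to a further subsequence with \(c_n\to c\in\Rbar\). We then need two facts.
\begin{itemize}
\item Since \(\mu\) is atomless, \(H_G^\le(\mu,\cdot)\) is continuous on \(\R\): by \Cref{prop:profile-regularity} it is c\`adl\`ag with jumps only at atoms. It is also continuous at \(\pm\infty\) by \Cref{thm:moving-cut}(ii) applied with the constant sequence \(\mu_n=\mu\). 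Therefore \(H_G^\le(\mu,c_n)\to H_G^\le(\mu,c)\).
\item \Cref{thm:moving-cut}(i) (when \(c\in\R\), which is atom-free automatically) and (ii) (when \(c=\pm\infty\)) give \(H_G^\le(\mu_n,c_n)\to H_G^\le(\mu,c)\).
\end{itemize}
The difference therefore tends to zero, a contradiction. The endpoints themselves are harmless: by \Cref{def:profiles} all profiles equal \(\cV_G\) there, and \(\cV_G(\mu_n)\to\cV_G(\mu)\) by \Cref{lem:V-continuity}.

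The only point needing care is the uniform bound in (b), and the displayed estimate handles it. The compactness argument in (c) is the substantive step, and it relies entirely on the moving-cut theorem, so I expect no real obstacle.
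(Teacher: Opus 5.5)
Your proof is correct and follows the paper's argument: (a) from \Cref{thm:moving-cut} with a fixed cutoff, (b) by convergence off the countable atom set of \(\mu\) plus dominated convergence on \(K\), and (c) by the same compactness-and-contradiction argument using continuity of the atomless limiting profile. The differences are minor. In (b) the paper uses one common center to get \(H_G^\le(\mu_n,c)\le\cV_G(\mu_n)\) and then \Cref{lem:V-continuity}, whereas you take the center \(0\) and bound by \(\int\psi\,d\mu_n\); and your explicit check of continuity at \(\pm\infty\) via \Cref{thm:moving-cut}(ii) spells out a step the paper leaves implicit.
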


\noindent See \hyperref[app:profile-modes]{\Cref*{app:profile-modes}}.

\subsection{Epi-convergence at atoms}\label{sec:epi}

Continuity-point convergence alone is not an optimizer theorem.  General
stability and epi-convergence theory for integrals and expectation
functionals under varying measures, including discontinuous integrands, is
available
\citep{ArtsteinWets1988,ArtsteinWets1994,VogelLachout2003,
FeinbergKasyanovRoyset2023}; standard variational consequences are developed
by \citet{RockafellarWets1998}.  The issue specific to the present profile is
the moving discontinuous threshold and its two possible whole-atom boundary
assignments, with the center in each moving restriction itself optimized.  At
a limiting atom, the lower-bound argument must also compare a possible
fractional limiting split with the two admissible endpoint assignments.  We
recall the sequential definition needed here.

\begin{definition}[Epi-convergence]\label{def:epi}
Functions \(f_n:\Rbar\to\R\cup\{+\infty\}\) epi-converge to \(f\) if:
\begin{enumerate}[label=(\roman*),leftmargin=2.2em]
\item for every \(c_n\to c\),
\(\liminf_n f_n(c_n)\ge f(c)\);
\item for every \(c\), there exists \(c_n\to c\) such that
\(\limsup_n f_n(c_n)\le f(c)\).
\end{enumerate}
\end{definition}

The first condition prevents spurious low valleys; the second supplies a
recovery sequence for every limiting point.

\begin{theorem}[Cutoff-profile epi-convergence with boundary atoms]\label{thm:epi}
If \(\mu_n\psito\mu\), then
\begin{equation}\label{eq:epi-conv}
 \underline H_G(\mu_n,\cdot)
 \xrightarrow{\mathrm{epi}}
 \underline H_G(\mu,\cdot)
 \qquad\text{on }\Rbar.
\end{equation}
\end{theorem}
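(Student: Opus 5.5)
We verify the two conditions of \Cref{def:epi} separately. Atom-free finite limits and the endpoints \(\pm\infty\) are handled by \Cref{thm:moving-cut}. The genuine work is the liminf inequality at a limiting atom \(c\) with \(\mu(\{c\})>0\).

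\emph{Recovery sequences.} Fix \(c\in\Rbar\). If \(c=\pm\infty\) or \(\mu(\{c\})=0\), the constant sequence \(c_n=c\) works, since \Cref{thm:moving-cut} gives convergence of all three profiles. If \(c\) is an atom, suppose first that \(H_G^\le(\mu,c)\) attains the minimum in \eqref{eq:Hrel}. Atoms of \(\mu\) are countable, so we choose atom-free points \(d_k\downarrow c\). By the càdlàg property in \Cref{prop:profile-regularity}, \(H_G^\le(\mu,d_k)\to H_G^\le(\mu,c)\). For each \(k\), \Cref{thm:moving-cut}(i) gives \(\underline H_G(\mu_n,d_k)\to H_G^\le(\mu,d_k)\). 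A diagonal choice \(c_n=d_{k(n)}\) with \(k(n)\to\infty\) slowly enough then yields \(\limsup_n\underline H_G(\mu_n,c_n)\le\underline H_G(\mu,c)\). The open-left case is symmetric, using atom-free \(d_k\uparrow c\) and \eqref{eq:left-limit}.

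\emph{Liminf inequality.} Let \(c_n\to c\). For \(c=\pm\infty\) or an atom-free \(c\), \Cref{thm:moving-cut} gives actual convergence, so assume \(\mu(\{c\})=p>0\). For each \(n\), pick the convention \(\sigma_n\) attaining \(\underline H_G(\mu_n,c_n)\), and pass to a subsequence realizing the liminf. Write the lower restriction as \(\nu_n^-\) and the upper one as \(\nu_n^+=\mu_n-\nu_n^-\). Fix atom-free points \(c-\delta<c<c+\delta\). The part of \(\nu_n^-\) outside \([c-\delta,c+\delta]\) converges \(\psi\)-weakly to \(\restr{\mu}{(-\infty,c-\delta)}\) by \Cref{lem:moving-trunc}. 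The part inside has mass tending to \(\mu([c-\delta,c+\delta])\), and the lower restriction puts it entirely to the left of \(c_n\), which lies near \(c\). Using uniform \(\psi\)-integrability (\Cref{lem:psi-characterization}) for tightness, we extract a further subsequence with \(\nu_n^-\psito\nu^-\). Letting \(\delta\downarrow0\) along a diagonal identifies the limit as
\[\nu^-=\restr{\mu}{(-\infty,c)}+\lambda p\,\delta_c,\qquad \nu^+=\restr{\mu}{(c,\infty)}+(1-\lambda)p\,\delta_c\]
for some \(\lambda\in[0,1]\). \Cref{lem:V-continuity} applies to both pieces, since they have mass at most one, and gives
\[\liminf_n\underline H_G(\mu_n,c_n)=\cV_G(\nu^-)+\cV_G(\nu^+)=:\Phi(\lambda).\]

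\emph{Main obstacle: the fractional split.} It remains to show \(\Phi(\lambda)\ge\min\{\Phi(0),\Phi(1)\}\). We have \(\Phi(1)=H_G^\le(\mu,c)\) and \(\Phi(0)=H_G^<(\mu,c)\). Each map \(\lambda\mapsto\cV_G(\alpha+\lambda p\delta_c)\) is an infimum over \(m\) of functions affine in \(\lambda\), hence concave. So \(\Phi\) is a sum of two concave functions, is concave on \([0,1]\), and attains its minimum at an endpoint. This gives \(\liminf\ge\underline H_G(\mu,c)\). The delicate point is the identification of the limit \(\nu^-\). One must check that mass of \(\mu_n\) accumulating near \(c\) from either side contributes only to the atom at \(c\), with nothing lost to other locations. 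This is where the atom-free auxiliary cutoffs \(c\pm\delta\) and the \(\psi\)-tightness are used.
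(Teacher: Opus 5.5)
Your proposal is correct and follows essentially the same route as the paper: a diagonal recovery sequence through atom-free \(d_k\downarrow c\) or \(d_k\uparrow c\); subsequence extraction of \(\psi\)-weakly convergent lower and upper restrictions; identification of the limit as a fractional split \(\lambda p\,\delta_c\) of the atom; and concavity of \(\cV_G\) in its measure argument, which reduces the bound to the two whole-atom assignments. The differences are cosmetic. You use constant recovery sequences at atom-free points and at \(\pm\infty\), and you identify the limit through auxiliary atom-free cutoffs \(c\pm\delta\) and \Cref{lem:moving-trunc} rather than by testing against \(C_c((-\infty,c))\) and \(C_c((c,\infty))\); no diagonal in \(\delta\) is actually needed once \(\nu^-\) is fixed.
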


\noindent The complete proof, including the atom and endpoint arguments, is
in \hyperref[app:epi]{\Cref*{app:epi}}.  Its key step is worth recording.
If \(p=\mu(\{c\})\), a moving sequence of discrete cuts can leave a fraction
\(\theta\in[0,1]\) of the limiting atom below the cut.  The limiting cost is
\[
 J_c(\theta)=
 \cV_G\!\left(\restr{\mu}{(-\infty,c)}+\theta p\delta_c\right)
 +\cV_G\!\left(\restr{\mu}{(c,\infty)}+(1-\theta)p\delta_c\right).
\]
The map \(\nu\mapsto\cV_G(\nu)\) is concave, being an infimum of linear
maps of \(\nu\).  Hence \(J_c\) is concave and
\[
 J_c(\theta)\ge \min\{J_c(0),J_c(1)\}
 =\underline H_G(\mu,c).
\]
Thus the two endpoint assignments, not a fractional split of a physical
atom, determine the stable variational envelope.

For sets \(A_n\subset\Rbar\), define the Painlev\'e--Kuratowski outer limit
\[
 \PKlimsup_{n\to\infty}A_n
 :=\{c:\text{some }n_k\uparrow\infty
       \text{ and }c_{n_k}\in A_{n_k}\text{ satisfy }c_{n_k}\to c\}.
\]

\begin{corollary}[Outer stability of exact and approximate cutoffs]
\label{cor:argmin}
Let
\[
 v_n:=\min_{c\in\Rbar}\underline H_G(\mu_n,c),\qquad
 v:=\min_{c\in\Rbar}\underline H_G(\mu,c).
\]
Under \(\mu_n\psito\mu\),
\[
 v_n\to v,\qquad
 \PKlimsup_{n\to\infty}\cC_G(\mu_n)\subseteq\cC_G(\mu).
\]
More generally, if \(\varepsilon_n\downarrow0\) and
\[
 \underline H_G(\mu_n,\widetilde c_n)\le v_n+\varepsilon_n,
\]
then every cluster point of \((\widetilde c_n)\) belongs to
\(\cC_G(\mu)\).  If \(\cC_G(\mu)=\{c^\star\}\), every exact or
\(\varepsilon_n\)-optimal selection converges to \(c^\star\).
\end{corollary}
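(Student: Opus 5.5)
The corollary is the standard variational consequence of \Cref{thm:epi} on the compact space \(\Rbar\). I would prove it in four short steps, taking from \Cref{prop:profile-regularity} that each \(\underline H_G(\mu_n,\cdot)\) is lower semicontinuous on \(\Rbar\). Consequently \(v_n\) is attained and \(\cC_G(\mu_n)\ne\emptyset\).

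\emph{Step 1 (upper bound).} Pick \(c^\star\in\cC_G(\mu)\). Condition (ii) of \Cref{def:epi} supplies a recovery sequence \(c_n\to c^\star\) with
\[
\limsup_n \underline H_G(\mu_n,c_n)\le\underline H_G(\mu,c^\star)=v.
\]
Since \(v_n\le\underline H_G(\mu_n,c_n)\), this gives \(\limsup_n v_n\le v\).

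\emph{Step 2 (lower bound).} Choose any \(\varepsilon_n\)-optimal \(\widetilde c_n\), for instance exact minimizers. Take a subsequence realizing \(\liminf_n v_n\). Because \(\Rbar\) is sequentially compact, a further subsequence \(\widetilde c_{n_k}\to c\in\Rbar\). The liminf inequality (i) of \Cref{def:epi} applies along subsequences: one pads the subsequence into a full sequence converging to \(c\), which can only raise the liminf along it. This yields
\[
v\le\underline H_G(\mu,c)\le\liminf_k \underline H_G(\mu_{n_k},\widetilde c_{n_k})\le\liminf_k (v_{n_k}+\varepsilon_{n_k})=\liminf_n v_n .
\]
Combined with Step 1, \(v_n\to v\).

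\emph{Step 3 (cluster points).} Let \(c\) be any cluster point of \((\widetilde c_n)\), say \(\widetilde c_{n_k}\to c\). The same subsequence liminf argument gives
\[
\underline H_G(\mu,c)\le\liminf_k (v_{n_k}+\varepsilon_{n_k})=v
\]
by Step 2, so \(c\in\cC_G(\mu)\). Taking \(\varepsilon_n=0\) covers exact minimizers. Every point of \(\PKlimsup_{n\to\infty}\cC_G(\mu_n)\) arises this way: it is the limit of \(c_{n_k}\in\cC_G(\mu_{n_k})\), and one extends to a full sequence of exact selections by choosing arbitrary minimizers at the other indices. This proves the outer-limit inclusion.

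\emph{Step 4 (uniqueness).} Suppose \(\cC_G(\mu)=\{c^\star\}\). Every subsequence of a selection has a further subsequence converging in compact \(\Rbar\), and by Step 3 its limit must be \(c^\star\). Hence the whole sequence converges to \(c^\star\).

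The only delicate point is Step 2. It relies on the liminf condition holding along subsequences, which I would justify by the padding argument, and on compactness of \(\Rbar\) rather than \(\R\), which supplies convergent subsequences with no coercivity argument. The endpoints \(\pm\infty\) need no separate treatment, since \Cref{thm:epi} is stated on \(\Rbar\).
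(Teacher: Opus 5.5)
Your proof is correct and follows essentially the same route as the paper. A recovery sequence at a limiting minimizer gives $\limsup_n v_n\le v$. Compactness of $\Rbar$, combined with the epi-liminf inequality along subsequences, then yields the matching lower bound, the cluster-point inclusion for both exact and $\varepsilon_n$-optimal selections, and convergence when the minimizer is unique. Your padding argument for the subsequential liminf, and your extension of subsequential minimizers to full exact selections for the Painlev\'e--Kuratowski inclusion, simply spell out two steps the paper's proof uses implicitly.
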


\noindent See \hyperref[app:argmin]{\Cref*{app:argmin}}.

Here and below, unqualified \(L^p\) spaces for functions of the cutoff or
state variable use Lebesgue measure.

\subsection{Convergence of selected one-jump representatives}

At an atom-free limiting cutoff, the value assigned at that point has no
effect on the criterion.  To compare graphs in a standard path space, use
the right-continuous representative
\begin{equation}\label{eq:cadlag-regime}
 r_{a,b,c}(x):=a\1_{\{x<c\}}+b\1_{\{x\ge c\}}.
\end{equation}
On a compact interval \(K=[L,U]\), the Skorokhod \(J_1\) topology can be
described through increasing homeomorphisms \(\lambda:K\to K\): convergence
means that one can choose \(\lambda_n\) with
\[
 \|\lambda_n-\mathrm{id}\|_\infty\to0,\qquad
 \|r_n\circ\lambda_n-r\|_\infty\to0.
\]
Local \(J_1\) convergence means this on every compact interval whose
endpoints are continuity points of the limit.  This definition is
sufficient for the one-jump functions used below; see
\citet{Whitt2002} for the general path-space theory.

\begin{theorem}[Joint convergence of cutoffs, levels, and step graphs]
\label{thm:regime-conv}
Assume \(\mu_n\psito\mu\), and the limiting relaxed profile has a unique
finite minimizer \(c^\star\) satisfying
\[
 \mu(\{c^\star\})=0,\qquad
 0<\mu((-\infty,c^\star))<1.
\]
Assume also that the two limiting restrictions have unique \(G\)-centers
\(a^\star,b^\star\).  Let \(c_n\in\cC_G(\mu_n)\), and choose an action
\(\theta_n=(a_n,b_n,c_n,\sigma_n)\in\mathcal A_G(\mu_n)\).  Then
\[
 c_n\to c^\star,\qquad a_n\to a^\star,\qquad b_n\to b^\star.
\]
For the canonical right-continuous graph representatives
\(r_n=r_{a_n,b_n,c_n}\) and
\(r=r_{a^\star,b^\star,c^\star}\):
\begin{enumerate}[label=(\roman*),leftmargin=2.2em]
\item \(r_n(x)\to r(x)\) at every \(x\ne c^\star\);
\item \(r_n\to r\) in \(L^p_{\mathrm{loc}}(\R)\) for every finite
  \(p\ge1\);
\item \(r_n\to r\) in the local Skorokhod \(J_1\) topology.
\end{enumerate}
If the cost-realizing convention assigns an atom at \(c_n\) to the lower
cell, its exact optimal encoding need not equal \(r_n\) at that one point.
Nevertheless \(\mu_n(\{c_n\})\to0\), so the discrepancy vanishes in
\(\mu_n\)-probability and in expected loss.  Strict convexity of \(G\)
implies the required uniqueness of the limiting centers.
\end{theorem}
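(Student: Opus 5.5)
The argument has three steps: convergence of the cutoffs, then of the levels, then of the graphs, each reduced to results already stated.

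\emph{Cutoffs.}
Since \(\cC_G(\mu)=\{c^\star\}\), \Cref{cor:argmin} applied to the exact selection \(c_n\in\cC_G(\mu_n)\) gives \(c_n\to c^\star\) in \(\Rbar\). Because \(c^\star\) is finite, \(c_n\) is finite for all large \(n\).

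\emph{Levels.}
By \Cref{lem:moving-trunc}, since \(\mu(\{c^\star\})=0\), both restrictions of \(\mu_n\) at the moving cutoff \(c_n\) converge \(\psi\)-weakly to \(\restr{\mu}{(-\infty,c^\star]}\) and \(\restr{\mu}{(c^\star,\infty)}\). This holds under either boundary convention, so it covers whichever \(\sigma_n\) the action uses; one can pass to subsequences on which \(\sigma_n\) is constant.

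The limiting masses are \(\mu((-\infty,c^\star))\in(0,1)\) and its complement, both positive. Hence both cells of \(\mu_n\) are eventually nonempty and the inactive-level convention of \Cref{def:omega} never applies.

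By \Cref{def:omega}, \(a_n\) and \(b_n\) are \(G\)-centers of these restricted measures. \Cref{lem:V-continuity} then shows they are bounded and that every cluster point is a center of the limit; uniqueness of \(a^\star,b^\star\) gives \(a_n\to a^\star\) and \(b_n\to b^\star\). This is exactly \Cref{thm:moving-cut}(iii). Strict convexity enters only through \Cref{prop:center-existence} to guarantee this uniqueness.

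\emph{Graphs.}
For (i), fix \(x\ne c^\star\). Eventually \(x\) lies on the same side of \(c_n\) as of \(c^\star\), so \(r_n(x)\) equals \(a_n\) or \(b_n\) accordingly and converges to \(r(x)\).

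For (ii), the \(r_n\) are uniformly bounded on compacts because \(a_n,b_n\) converge. Dominated convergence, with the exceptional null set \(\{c^\star\}\), gives \(L^p_{\mathrm{loc}}\) convergence. Alternatively one can bound directly:
\[
\int_K|r_n-r|^p\le |K|\max\{|a_n-a^\star|,|b_n-b^\star|\}^p+|c_n-c^\star|\,(|a_n|+|b_n|+|a^\star|+|b^\star|)^p .
\]

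For (iii), take a compact \(K=[L,U]\) with \(L,U\) continuity points of \(r\). If \(c^\star\notin K\), uniform convergence on \(K\) follows as in (i), since \(c_n\notin K\) eventually. If \(L<c^\star<U\), let \(\lambda_n\) be the piecewise-linear increasing homeomorphism of \(K\) with \(\lambda_n(L)=L\), \(\lambda_n(c^\star)=c_n\), \(\lambda_n(U)=U\). Then \(\|\lambda_n-\mathrm{id}\|_\infty=|c_n-c^\star|\to0\), and \(r_n\circ\lambda_n\) jumps exactly at \(c^\star\). Since \(r_n\) is right-continuous with \(r_n(y)=a_n\) for \(y<c_n\) and \(r_n(y)=b_n\) for \(y\ge c_n\), we get \(\|r_n\circ\lambda_n-r\|_\infty\le\max\{|a_n-a^\star|,|b_n-b^\star|\}\to0\).

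\emph{Atom at \(c_n\).}
It remains to show \(\mu_n(\{c_n\})\to0\). By the portmanteau theorem applied to the closed set \([c^\star-\varepsilon,c^\star+\varepsilon]\),
\[
\limsup_n\mu_n(\{c_n\})\le\limsup_n\mu_n([c^\star-\varepsilon,c^\star+\varepsilon])\le\mu([c^\star-\varepsilon,c^\star+\varepsilon])
\]
for every \(\varepsilon>0\), since \(c_n\) eventually lies in this interval. Letting \(\varepsilon\downarrow0\) and using \(\mu(\{c^\star\})=0\) gives the claim.

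The exact encoding differs from \(r_n\) only on \(\{c_n\}\), by at most \(|a_n-b_n|\), which is bounded. So the discrepancy has \(\mu_n\)-probability at most \(\mu_n(\{c_n\})\to0\). Its expected loss contribution is at most \(\mu_n(\{c_n\})\sup_n\bigl(G(c_n-a_n)+G(c_n-b_n)\bigr)\to0\), since all arguments of \(G\) stay bounded.

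The main obstacle is bookkeeping rather than depth. One must make sure \Cref{lem:moving-trunc} covers the varying conventions \(\sigma_n\), which is handled by the subsequence split. One must also check that the action's levels are genuine centers of nonzero restrictions, which is why the positive limiting masses are needed.
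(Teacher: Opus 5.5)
Your proposal is correct and follows essentially the same route as the paper's proof: \Cref{cor:argmin} for the cutoffs, \Cref{lem:moving-trunc} with \Cref{lem:V-continuity} for the levels under either boundary convention, a portmanteau bound giving \(\mu_n(\{c_n\})\to0\), and the piecewise-linear time change for local \(J_1\) convergence. Your subsequence split on \(\sigma_n\) and your explicit \(L^p\) estimate only make explicit steps that the paper states more briefly.
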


\noindent See \hyperref[app:regime-conv]{\Cref*{app:regime-conv}}.

\begin{proposition}[Equivalent descriptions of a genuine one-jump limit]
\label{prop:step-equivalence}
Let \(a<b\), let \(a_n\le b_n\), and let \(r_n,r\) have the form
\eqref{eq:cadlag-regime}.  Suppose the parameter sequences are bounded.
On every compact interval containing \(c\) in its interior, the following
are equivalent:
\begin{enumerate}[label=(\alph*),leftmargin=2.2em]
\item \((a_n,b_n,c_n)\to(a,b,c)\);
\item \(r_n(x)\to r(x)\) at every continuity point of \(r\);
\item \(r_n\to r\) in the \(J_1\) topology.
\end{enumerate}
Each condition implies Lebesgue \(L^p\) convergence for every finite \(p\).
If
\(a=b\), the cutoff is not identifiable and only convergence of the
constant function is meaningful.
\end{proposition}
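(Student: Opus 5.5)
We work on a fixed compact interval \(K=[L,U]\) with \(L<c<U\). Boundedness of the parameters lets us extract convergent subsequences. We prove the cycle (a)\(\Rightarrow\)(c)\(\Rightarrow\)(b)\(\Rightarrow\)(a), and then derive \(L^p\) convergence from (b).

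\emph{(a)\(\Rightarrow\)(c).} Since \(c_n\to c\in(L,U)\), we have \(c_n\in(L,U)\) for large \(n\). Let \(\lambda_n\) be the piecewise-linear increasing homeomorphism of \(K\) fixing \(L\) and \(U\) and sending \(c\) to \(c_n\). Then \(\|\lambda_n-\mathrm{id}\|_\infty\le|c_n-c|\to0\). Moreover \(r_n\circ\lambda_n\) equals \(a_n\) on \([L,c)\) and \(b_n\) on \([c,U]\), because \(\lambda_n(x)<c_n\) if and only if \(x<c\). Hence \(\|r_n\circ\lambda_n-r\|_\infty\le\max\{|a_n-a|,|b_n-b|\}\to0\).

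\emph{(c)\(\Rightarrow\)(b).} This is the standard fact that \(J_1\) convergence implies pointwise convergence at continuity points. We check it directly. Fix \(x\ne c\) in \(K\). Because \(\lambda_n\) is a homeomorphism close to the identity, \(y_n:=\lambda_n^{-1}(x)\to x\). Then \(|r_n(x)-r(y_n)|\le\|r_n\circ\lambda_n-r\|_\infty\to0\). Since \(r\) is locally constant near \(x\), we have \(r(y_n)=r(x)\) eventually.

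\emph{(b)\(\Rightarrow\)(a).} This is the step needing the hypothesis \(a<b\) and the sandwich \(a_n\le b_n\). Taking \(x=L\) and \(x=U\) gives \(a_n\to a\) and \(b_n\to b\). The points \(L,U\) are continuity points of \(r\), since \(c\) is interior; here we use that \(c_n\to c\) will follow, or more simply that \(c_n\) lies eventually inside \((L,U)\) along any subsequence, argued next.

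We identify \(c_n\) by a subsequence argument. Let \(c'\) be a cluster point of \((c_n)\) in \(\Rbar\), which exists by boundedness. Suppose \(c'<c\). Pick \(x\in(c',c)\). Along the subsequence, eventually \(x>c_n\), so \(r_n(x)=b_n\). But \(r(x)=a\). So pointwise convergence would force \(b_n\to a\). Since eventually \(c_n<c\), evaluating at points of \((c_n\vee c',c)\) doesn't yet give \(a_n\); instead use the point \(U\) (a continuity point of \(r\) with \(U>c\)). There \(r_n(U)=b_n\) whenever \(c_n<U\), which holds eventually. Thus \(b_n\to b\). Combining, \(a=b\), a contradiction. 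The case \(c'>c\) is symmetric: at \(x\in(c,c')\) we have \(r_n(x)=a_n\to b\), while \(r_n(L)=a_n\to a\).

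Therefore \(c_n\to c\), and then \(a_n=r_n(L)\to a\) and \(b_n=r_n(U)\to b\) eventually. This proves (a). Notice that the monotonicity \(a_n\le b_n\) is not even essential here; distinctness \(a\ne b\) is what identifies the jump location. The main obstacle is exactly this identification, which fails when \(a=b\). In that case every \(c_n\) produces the same limit, which explains the final sentence of the statement: one can simply exhibit \(c_n\) oscillating while \(r_n\equiv a\) to show non-identifiability.

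\emph{\(L^p\) consequence.} Under (b), \(r_n\to r\) Lebesgue-a.e. on \(K\), since \(\{c\}\) is null. The functions \(r_n\) are uniformly bounded by \(\sup_n\max\{|a_n|,|b_n|\}<\infty\). Dominated convergence on the finite-measure set \(K\) then gives \(\int_K|r_n-r|^p\to0\) for every finite \(p\ge1\).
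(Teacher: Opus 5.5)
Your proof is correct and follows the paper's route. You use the same piecewise-linear time change with \(\lambda_n(c)=c_n\) for (a)\(\Rightarrow\)(c), the continuity-point property of \(J_1\) convergence for (c)\(\Rightarrow\)(b) (verified directly where the paper simply cites it), and dominated convergence for the \(L^p\) claim. The one small difference is in (b)\(\Rightarrow\)(a): the paper pins the jump directly, since \(r_n(c-\varepsilon)\to a\) and \(r_n(c+\varepsilon)\to b\) with \(a\ne b\) force \(c-\varepsilon<c_n\le c+\varepsilon\) eventually, whereas your cluster-point contradiction gets there a little less directly (and, as you note, neither argument needs \(a_n\le b_n\)).
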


\begin{proof}
(a) implies (b) directly.  For (a) implies (c), on
\([L,U]\) choose an increasing piecewise-linear homeomorphism fixing
\(L,U\) and satisfying \(\lambda_n(c)=c_n\).  Then
\(\|\lambda_n-\mathrm{id}\|_\infty\to0\), and
\[
 \sup_{x\in[L,U]}|r_n(\lambda_n(x))-r(x)|
 \le\max\{|a_n-a|,|b_n-b|\}\to0.
\]

Assume (b).  For every \(\varepsilon>0\) small enough that
\(c-\varepsilon,c+\varepsilon\in(L,U)\), the two limiting values at these
points are the distinct numbers \(a,b\).  For all sufficiently large \(n\),
the evaluations therefore lie on different plateaus, so
\[
 c-\varepsilon<c_n\le c+\varepsilon.
\]
Hence \(c_n\to c\).  Evaluating at any fixed points on the two sides then
gives \(a_n\to a\) and \(b_n\to b\), so (b) implies (a).  General \(J_1\)
convergence implies convergence at every
continuity point of the limit, hence (c) implies (b).  Finally, boundedness
and pointwise convergence away from one point give \(L^p\) convergence by
dominated convergence.
\end{proof}

\section{Discrete and empirical approximation}
\label{sec:approx}

The stability results turn finite-support calculations into consistent
plug-in approximations of the severity boundary and both conditional
centers for a general loss law.

Say that \(G\) has at most \(p\)-power growth if
\begin{equation}\label{eq:p-growth}
 G(u)\le K(1+|u|^p),\qquad u\in\R,
\end{equation}
for some \(K<\infty\) and \(p\ge1\).  The standard losses considered in
\Cref{sec:computation} satisfy this with \(p=1\) or \(p=2\).

\subsection{Deterministic finite-range approximation}

Let \(\pi_N:\R\to\R\) have finite range and put \(X^{(N)}=\pi_N(X)\).  A
concrete construction clips at radius \(R_N\) and rounds to a grid of mesh
\(\Delta_N\):
\begin{equation}\label{eq:quantizer}
 \pi_N(x)=\Delta_N\,\operatorname{round}\!\left(
 \frac{\max(-R_N,\min(x,R_N))}{\Delta_N}\right),
 \qquad R_N\uparrow\infty,\quad\Delta_N\downarrow0.
\end{equation}

\begin{theorem}[Consistency of finite-support approximation]
\label{thm:det-discrete}
Let \(\mu=\Law(X)\in\cP_\psi(\R)\), and suppose
\begin{equation}\label{eq:quantizer-ui}
 \pi_N(X)\to X\ \text{in probability},\qquad
 \{\psi(\pi_N(X)):N\ge1\}\ \text{is uniformly integrable}.
\end{equation}
Then, for \(\mu_N=\Law(\pi_N(X))\),
\[
 \mu_N\psito\mu.
\]
Consequently all conclusions of
\Cref{thm:moving-cut,cor:profile-modes,thm:epi,cor:argmin} apply to
\(\mu_N\).  Under the assumptions of \Cref{thm:regime-conv}, the optimal
discrete levels and cutoffs converge, as do their canonical right-continuous
graph representatives in the senses stated there.
\end{theorem}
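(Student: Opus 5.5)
The plan is to verify the two defining conditions of \(\psi\)-weak convergence in \eqref{eq:psi-characterization} for \(\mu_N=\Law(\pi_N(X))\). Once these hold, every \(\mu_N\) lies in \(\cP_\psi(\R)\), and the remaining conclusions follow by invoking \Cref{thm:moving-cut,cor:profile-modes,thm:epi,cor:argmin,thm:regime-conv} with \(\mu_n=\mu_N\). No further work is needed for that last step, since those results are stated for an arbitrary \(\psi\)-weakly convergent sequence.

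The first step is weak convergence. Convergence in probability \(\pi_N(X)\to X\) implies convergence in distribution. Concretely, for bounded continuous \(f\), \(f(\pi_N(X))\to f(X)\) in probability, and bounded convergence gives \(\E f(\pi_N(X))\to\E f(X)\). Hence \(\mu_N\Rightarrow\mu\).

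The second step is convergence of \(\psi\)-moments. Since \(\psi\) is continuous, \(\psi(\pi_N(X))\to\psi(X)\) in probability. Together with the uniform integrability in \eqref{eq:quantizer-ui}, Vitali's convergence theorem gives \(\E\psi(\pi_N(X))\to\E\psi(X)\), which is finite because \(\mu\in\cP_\psi(\R)\). Equivalently, one can use \Cref{lem:psi-characterization}: uniform integrability of \(\psi(\pi_N(X))\) is exactly uniform \(\psi\)-integrability of \(\mu_N\), since
\[
\int_{\{\psi>M\}}\psi\,d\mu_N=\E\bigl[\psi(\pi_N(X))\1_{\{\psi(\pi_N(X))>M\}}\bigr].
\]
Combined with weak convergence, this is equivalent to \(\mu_N\psito\mu\). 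Uniform integrability also gives \(\sup_N\E\psi(\pi_N(X))<\infty\), so each \(\mu_N\in\cP_\psi(\R)\), as the earlier theorems require.

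The only point needing care is the application of \Cref{thm:regime-conv}, whose hypotheses concern the limit \(\mu\) alone. The unique finite atom-free minimizer with positive regime masses and unique centers are all assumptions on \(\mu\), so they transfer verbatim, and the convergence of \(c_N,a_N,b_N\) and of \(r_N\) follows directly. I expect no genuine obstacle; the main burden is bookkeeping, namely confirming that the uniform integrability hypothesis is stated for the pushed-forward variables, so that it translates directly into uniform \(\psi\)-integrability of the laws \(\mu_N\).
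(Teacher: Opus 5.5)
Your proposal is correct and follows essentially the same route as the paper: weak convergence from convergence in probability, then $\psi(\pi_N(X))\to\psi(X)$ in probability upgraded to $L^1$ convergence by uniform integrability, followed by direct application of the cited stability results. Your side remarks are also sound, namely the equivalent route through \Cref{lem:psi-characterization} and the check that each $\mu_N\in\cP_\psi(\R)$.
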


\begin{proof}
Convergence in probability implies convergence in distribution, hence
\(\mu_N\Rightarrow\mu\).  Continuity of \(\psi\) gives
\(\psi(\pi_N(X))\to\psi(X)\) in probability.  Uniform integrability turns
this into convergence in \(L^1\), so
\[
 \int\psi\,d\mu_N=\E[\psi(\pi_N(X))]
 \longrightarrow\E[\psi(X)]=\int\psi\,d\mu.
\]
This is \(\psi\)-weak convergence.  The remaining claims are direct
applications of the cited stability results.
\end{proof}

If \(G\) satisfies \eqref{eq:p-growth}, convergence
\(\|\pi_N(X)-X\|_{L^p}\to0\) implies \eqref{eq:quantizer-ui}.  Indeed,
\(|\pi_N(X)|^p\to|X|^p\) in \(L^1\), and \(\psi(\pi_N(X))\) is dominated in
growth by \(1+|\pi_N(X)|^p\).  The clipping-and-rounding construction
\eqref{eq:quantizer} has this property whenever \(\E|X|^p<\infty\), since
\[
 |\pi_N(X)-X|\le (|X|-R_N)_++\frac{\Delta_N}{2}.
\]

For completeness, the \(p\)-Wasserstein distance is
\[
 \W_p(\mu,\nu):=
 \left(
 \inf_{\pi\in\Pi(\mu,\nu)}
 \int_{\R^2}|x-y|^p\,\pi(dx,dy)
 \right)^{1/p},
\]
where \(\Pi(\mu,\nu)\) denotes all couplings of \(\mu\) and \(\nu\).

\begin{corollary}[Wasserstein approximation]\label{cor:wass-discrete}
If \(G\) satisfies \eqref{eq:p-growth} and
\(\W_p(\mu_N,\mu)\to0\), then \(\mu_N\psito\mu\).  The profile and
optimal-value conclusions, outer optimizer-set stability, and the
unique-limit graph conclusions therefore apply.
\end{corollary}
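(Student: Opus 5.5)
We need \(\mu_N\psito\mu\), meaning \(\mu_N\Rightarrow\mu\) together with \(\int\psi\,d\mu_N\to\int\psi\,d\mu\). Once this holds, the remaining assertions follow by invoking \Cref{thm:moving-cut,cor:profile-modes,thm:epi,cor:argmin,thm:regime-conv}, exactly as in \Cref{thm:det-discrete}. The natural route is to reduce to that theorem, or to its proof pattern, on a common probability space.

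\emph{Coupling.} By the definition of \(\W_p\), attainment of optimal couplings on \(\R\) (or simply \(1/N\)-optimal ones), and the quantile representation, we realize \(X\sim\mu\) and \(X_N\sim\mu_N\) on one space with \(\|X_N-X\|_{L^p}\to0\). In one dimension the comonotone coupling \(X=F_\mu^{-1}(U)\), \(X_N=F_{\mu_N}^{-1}(U)\) is optimal and gives this directly. The earlier argument for \Cref{thm:det-discrete} never used that \(X_N\) is a function of \(X\); it only used convergence in probability and uniform integrability of \(\psi(X_N)\). So we are reduced to checking \eqref{eq:quantizer-ui} with \(\pi_N(X)\) replaced by \(X_N\).

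\emph{Verifying the two conditions.} \(L^p\) convergence implies convergence in probability, hence \(\mu_N\Rightarrow\mu\). For uniform integrability, Minkowski's inequality gives \(\|X_N\|_p\to\|X\|_p\), so \(|X_N|^p\to|X|^p\) in probability with converging means. By Scheffé's lemma (or Vitali) the family \(\{|X_N|^p\}\) is then uniformly integrable. The growth bound \eqref{eq:p-growth} gives
\[
0\le\psi(X_N)\le 1+K\bigl(1+(1+\eta)^p|X_N|^p\bigr),
\]
and domination by a uniformly integrable family yields uniform integrability of \(\psi(X_N)\). Continuity of \(\psi\) gives \(\psi(X_N)\to\psi(X)\) in probability, and Vitali's theorem upgrades this to \(\int\psi\,d\mu_N\to\int\psi\,d\mu\). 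This is the required \(\psi\)-weak convergence; note that \(\int\psi\,d\mu<\infty\) follows from the same bound, since \(\mu\) has a finite \(p\)th moment.

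\emph{Expected obstacle.} The only mild subtlety is the passage from distance convergence to a single coupled sequence. If one prefers to avoid the quantile coupling, an alternative is the characterization that \(\W_p\) convergence is equivalent to weak convergence plus convergence of \(p\)th moments (\Cref{rem:wasserstein}), followed by Skorokhod representation. A cleaner alternative avoids couplings entirely: use \Cref{lem:psi-characterization}, since \(|x|^p\)-uniform integrability of \(\mu_N\) together with \(\psi\le C(1+|x|^p)\) gives uniform \(\psi\)-integrability directly. Either route is routine.
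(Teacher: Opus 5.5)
Your proof is correct, but your main route differs from the paper's. You put \(X\sim\mu\) and \(X_N\sim\mu_N\) on one probability space via the comonotone quantile coupling, so that \(\|X_N-X\|_{L^p}=\W_p(\mu_N,\mu)\to0\). You then rerun the random-variable argument of \Cref{thm:det-discrete}: convergence in probability, uniform integrability of \(\psi(X_N)\) by domination with \(1+K(1+(1+\eta)^p|X_N|^p)\), and Vitali.

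The paper never builds a coupling. It uses the characterization of \(\W_p\)-convergence as weak convergence plus convergence of \(p\)th moments, and deduces uniform \(p\)-integrability of \((\mu_N)\). It then truncates \(\psi\) on compact sets and controls the tails via \(\psi\le C(1+|x|^p)\). This is essentially the alternative in your last paragraph via \Cref{lem:psi-characterization}.

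Your route makes the reduction to \Cref{thm:det-discrete} literal and avoids any truncation estimate. It does lean on the one-dimensional fact that the quantile coupling is \(\W_p\)-optimal. Separately chosen \(1/N\)-optimal couplings do not by themselves give one sequence converging in probability to a fixed \(X\). You would need to glue them over the common first marginal, e.g.\ \(X_N=g_N(X,U)\) with an independent uniform \(U\). The paper's measure-level argument is coupling-free and dimension-independent, at the cost of invoking the standard moment characterization of \(\W_p\)-convergence.
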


\begin{proof}
\(\W_p\)-convergence is equivalent to weak convergence plus convergence
of \(p\)th moments.  It implies uniform \(p\)-moment integrability.  Since
\(\psi\) is continuous and has at most \(p\)-power growth, truncation on
compact sets followed by uniform tail control gives
\(\int\psi\,d\mu_N\to\int\psi\,d\mu\).  Apply
\Cref{def:psiweak}.
\end{proof}

\subsection{Empirical approximation}

Let \(X_1,X_2,\ldots\) be independent with common law \(\mu\), and define
\[
 \widehat\mu_n:=\frac1n\sum_{i=1}^n\delta_{X_i}.
\]

\begin{theorem}[Strong consistency of empirical regimes]
\label{thm:empirical}
If \(\E[\psi(X_1)]<\infty\), then
\[
 \widehat\mu_n\psito\mu
 \qquad\text{almost surely}.
\]
More precisely, there is one probability-one event on which
\[
 \min_c\underline H_G(\widehat\mu_n,c)
 \longrightarrow
 \min_c\underline H_G(\mu,c),\qquad
 \PKlimsup_{n\to\infty}\cC_G(\widehat\mu_n)
 \subseteq\cC_G(\mu).
\]
If the population cutoff is unique and the hypotheses of
\Cref{thm:regime-conv} hold, then on the same event every pathwise choice
of an exact empirical cutoff and active levels yields convergent cutoffs and
levels, together with convergent right-continuous graph representatives.
\end{theorem}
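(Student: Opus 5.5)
The argument proves \(\widehat\mu_n\psito\mu\) almost surely on one explicit probability-one event. The remaining conclusions are then pathwise applications of \Cref{cor:argmin} and \Cref{thm:regime-conv}, applied for each fixed \(\omega\) in that event. Those results are deterministic statements about sequences of laws, so no further measurability or exceptional-set bookkeeping is needed.

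\textbf{Step 1: weak convergence.} By Varadarajan's theorem, \(\widehat\mu_n\Rightarrow\mu\) almost surely. For a self-contained route, take a countable convergence-determining class, for instance bounded Lipschitz functions with rational parameters, or indicators \(\1_{(-\infty,q]}\) with \(q\in\mathbb Q\). Apply the strong law of large numbers to each member and intersect the countably many probability-one events. On \(\R\), convergence of the empirical distribution functions at all rationals gives weak convergence. Call the resulting event \(\Omega_1\).

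\textbf{Step 2: convergence of \(\psi\)-moments.} Since \(\E\psi(X_1)<\infty\), the strong law applied to \(\psi(X_i)\) gives
\[
 \int\psi\,d\widehat\mu_n=\frac1n\sum_{i=1}^n\psi(X_i)\to\E\psi(X_1)=\int\psi\,d\mu
\]
on an event \(\Omega_2\) of probability one. On \(\Omega_0:=\Omega_1\cap\Omega_2\), both parts of \eqref{eq:psi-characterization} hold, so \(\widehat\mu_n\psito\mu\). We also need \(\widehat\mu_n,\mu\in\cP_\psi(\R)\). This holds trivially for the finitely supported \(\widehat\mu_n\) and by hypothesis for \(\mu\).

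\textbf{Step 3: pathwise conclusions.} Fix \(\omega\in\Omega_0\). \Cref{cor:argmin}, applied to the deterministic sequence \(\widehat\mu_n(\omega)\), yields convergence of the minimum values and the outer-limit inclusion for the cutoff sets. If the hypotheses of \Cref{thm:regime-conv} hold for \(\mu\), that theorem applies to any choice \(c_n(\omega)\in\cC_G(\widehat\mu_n(\omega))\) and any action \(\theta_n(\omega)\in\mathcal A_G(\widehat\mu_n(\omega))\). It gives convergence of the cutoffs and active levels, together with the pointwise, \(L^p_{\mathrm{loc}}\) and local \(J_1\) convergence of the graph representatives. Because the selections are arbitrary for each \(\omega\), measurability of the selections plays no role.

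The only step needing any care is Step 1. The point is that the null set must not depend on the test function, which is exactly why we use a countable determining class and not the strong law for each bounded continuous function separately. Once \(\Omega_0\) is constructed, everything else is immediate from the earlier deterministic results.
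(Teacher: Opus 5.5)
Your proposal is correct and follows essentially the same route as the paper. The paper also combines almost-sure weak convergence of the empirical laws with the strong law for \(\psi(X_i)\), obtaining \(\widehat\mu_n\psito\mu\) on one probability-one event, and then applies \Cref{thm:epi,cor:argmin,thm:regime-conv} pathwise. Your explicit construction of that single event through a countable convergence-determining class spells out a step the paper leaves implicit.
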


\begin{proof}
The empirical laws converge weakly almost surely.  The strong law of large
numbers gives, on the same probability-one event,
\[
 \int\psi\,d\widehat\mu_n
 =\frac1n\sum_{i=1}^n\psi(X_i)
 \longrightarrow\E[\psi(X_1)]
 =\int\psi\,d\mu.
\]
Thus \(\widehat\mu_n\psito\mu\) almost surely.  Apply
\Cref{thm:epi,cor:argmin,thm:regime-conv} pathwise.
\end{proof}

For quadratic loss this is an ordered two-cell counterpart of classical
strong consistency for \(k\)-means \citep{Pollard1981}.  The result also
covers nonsmooth and asymmetric residual losses and specifies the topology
needed to control the optimized tail loss.  General consistency results for
exact and approximate stochastic-program minimizers provide the broader
framework \citep{ArtsteinWets1995}; their application here uses the cutoff
profile at atoms.

\subsection{Approximation error beyond qualitative consistency}

\Cref{cor:argmin} allows the discrete profile itself to be minimized only
approximately: if the solver tolerance \(\varepsilon_N\) tends to zero,
every cluster point of an \(\varepsilon_N\)-optimal cutoff is
population-optimal.  Thus total error separates into

\begin{enumerate}[label=(\alph*),leftmargin=2.2em]
\item a law approximation error, controlled qualitatively by
  \(\psi\)-weak or Wasserstein convergence; and
\item a finite-support optimization error, certified by
  \Cref{prop:numerical-certificate} below.
\end{enumerate}

These results are consistency statements, not rates.  A quantitative rate
for a unique finite cutoff requires additional separation or margin
assumptions, such as a specified growth modulus
\[
 \underline H_G(\mu,c)-\underline H_G(\mu,c^\star)
 \ge\kappa\,|c-c^\star|^\alpha
\]
near \(c^\star\), together with a quantitative uniform bound for the
profile approximation.  Without such a condition, nearly tied split costs
can make the cutoff statistically unstable even when the minimum value is
well estimated.

\section{Finite-support computation and error control}\label{sec:computation}

Actuarial, reliability, and scenario models often provide a discrete loss
law directly.  With exact block costs, the exhaustive scan below returns all
optimal split intervals.  With approximate costs, it returns an approximate
candidate set and a conditional objective-error bound; completeness requires
the tolerance condition in \Cref{prop:numerical-certificate}.  The scan
coincides with the classical one-dimensional two-level scalar-quantizer
design \citep{Max1960,Lloyd1982,GrayNeuhoff1998}.  We formulate it as a
documented computational interface for loss-based risk summaries and connect
its error control to the stability theorem.

Let
\begin{equation}\label{eq:finite-law}
 \mu=\sum_{i=1}^m p_i\delta_{x_i},\qquad
 x_1<\cdots<x_m,\quad p_i>0,\quad\sum_{i=1}^m p_i=1,\quad m\ge1.
\end{equation}
Repeated support values must first be collapsed by adding their weights:
a measurable threshold cannot divide one atom internally.

For \(j=1,\ldots,m-1\), define
\begin{equation}\label{eq:Cj-general}
 C_j:=
 \cV_G\!\left(\sum_{i\le j}p_i\delta_{x_i}\right)
 +\cV_G\!\left(\sum_{i>j}p_i\delta_{x_i}\right),
\end{equation}
and let \(C_0:=\cV_G(\mu)\) denote the one-regime candidate.

\begin{proposition}[Finite-support prefix characterization]\label{prop:finite-exact}
For \(\mu\) in \eqref{eq:finite-law},
\[
 \min_{c\in\Rbar}\underline H_G(\mu,c)
 =\min_{0\le j\le m-1}C_j.
\]
If \(m=1\), the unique optimal encoding is the constant \(x_1\), and every
cutoff in \(\Rbar\) is optimal.  If \(m\ge2\) and \(j\ge1\) minimizes
\(C_j\), every cutoff in the closed gap
\([x_j,x_{j+1}]\) realizes that split under one of the two boundary
conventions.  Conversely, every nonconstant optimal threshold corresponds
to a minimizing split index.  Thus complete enumeration returns a global,
not merely local, optimum.  For \(m\ge2\), the one-regime cost \(C_0\) is
strictly larger than at least one nontrivial split cost.
\end{proposition}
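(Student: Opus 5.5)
The plan is to compute the relaxed profile explicitly on each region of \(\Rbar\) determined by the support points, and then compare. For \(c\in(x_j,x_{j+1})\) with \(1\le j\le m-1\), there is no atom at \(c\). Both conventions give the same restrictions \(\sum_{i\le j}p_i\delta_{x_i}\) and \(\sum_{i>j}p_i\delta_{x_i}\), so \(\underline H_G(\mu,c)=C_j\).

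For \(c<x_1\) or \(c>x_m\), one restriction vanishes, so by \(\cV_G(0)=0\) the profile equals \(\cV_G(\mu)=C_0\); the same holds at \(c=\pm\infty\) by \Cref{def:profiles}. At an atom \(c=x_j\), the closed-left convention gives the split with prefix \(\{1,\dots,j\}\) and the open-left convention gives prefix \(\{1,\dots,j-1\}\). Hence \(\underline H_G(\mu,x_j)=\min\{C_j,C_{j-1}\}\), where we read \(C_m:=C_0\) for \(j=m\) and note that prefix \(0\) at \(j=1\) also gives \(C_0\). Thus the range of \(\underline H_G(\mu,\cdot)\) on \(\Rbar\) is exactly \(\{C_0,\dots,C_{m-1}\}\), which gives the min identity. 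The same bookkeeping shows that a minimizing \(j\ge1\) is realized on the whole closed gap \([x_j,x_{j+1}]\): in the interior by either convention, at \(x_j\) by the closed-left convention, and at \(x_{j+1}\) by the open-left one.

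For the converse, we take an optimal nonconstant threshold, that is, an action with both cells nonempty. Its cut \(c\) and convention \(\sigma\) induce a prefix partition \(\{1,\dots,j\}\mid\{j+1,\dots,m\}\) with \(1\le j\le m-1\). Its cost is \(C_j\), which must equal the minimum, so \(j\) is a minimizing split index. Since enumeration covers every value the profile takes, the optimum found is global. When \(m=1\), every split has an empty cell and cost \(\cV_G(\delta_{x_1})=0\), attained only at \(x_1\) because \(0\) is the unique minimizer of \(G\). So every cutoff is optimal and the encoding is the constant \(x_1\), consistent with \Cref{prop:retained-axioms}(v).

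The one genuinely non-bookkeeping step is the strict inequality \(C_0>\min_{j\ge1}C_j\) for \(m\ge2\). Take \(m_0\in\cM_G(\mu)\), which exists by \Cref{prop:center-existence}. Since \(m\ge2\), \(m_0\) differs from at least one support point; say \(x_1\neq m_0\) (otherwise \(x_m\neq m_0\), and the argument is symmetric). Split at \(j=1\), using \(m_0\) on the upper block and the level \(x_1\) on \(\{x_1\}\). Then
\[
 C_1\le 0+\sum_{i\ge2}p_iG(x_i-m_0)=C_0-p_1G(x_1-m_0)<C_0,
\]
because \(G>0\) away from \(0\) by (G2). More carefully, if \(m_0=x_1\), then \(m_0\neq x_m\), and we use \(j=m-1\) instead. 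I expect this to be the main point needing care: the strict comparison relies on the unique-minimizer clause (G2), not on convexity.
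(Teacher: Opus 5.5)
Your proof is correct and follows essentially the same route as the paper. You observe that a threshold on the ordered distinct support induces either a prefix/suffix split or a one-regime endpoint, so the relaxed profile takes exactly the values \(C_0,\ldots,C_{m-1}\), and each minimizing split is realized on its whole closed gap through the appropriate boundary convention. Your strict-inequality step is also the paper's argument: you isolate an extreme support point that differs from a full-law center \(m_0\) and evaluate the remaining block at \(m_0\), and this correctly relies on the unique-minimizer clause (G2).
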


\begin{proof}
For \(m=1\), every threshold has one active cell, its center is \(x_1\), and
the cost is zero.  Now let \(m\ge2\).  Every threshold partition of the
ordered distinct support is a prefix and its complementary suffix, except
for the empty/full one-regime endpoints.
For a fixed prefix, \Cref{prop:conditional-centers} makes the two block
optimizations independent, giving exactly \eqref{eq:Cj-general}.  A cutoff
strictly inside \((x_j,x_{j+1})\) realizes split \(j\) under either
convention.  At \(x_j\) or \(x_{j+1}\), the boundary relaxation chooses the
convention that still realizes split \(j\), hence the full closed gap.
No other partition is induced by a threshold.
Finally, let \(m_0\) be a center of the full law.  At least one extreme
support value, say \(x_e\in\{x_1,x_m\}\), differs from \(m_0\).  Isolate
that extreme in a singleton cell and evaluate the other cell at \(m_0\).
The resulting cost is at most
\[
 C_0-p_eG(x_e-m_0)<C_0,
\]
where the strict inequality uses the unique minimum of \(G\) at zero.
\end{proof}

\begin{algorithm}[Two-regime scan with conditional error control]
\label{alg:master}
\begin{enumerate}[leftmargin=2.2em]
\item \emph{Input.} Supply a finite support, convex loss \(G\), block-cost
error bound \(\varepsilon_{\rm block}\), and selection tolerance
\(\varepsilon_{\rm sel}\).  Use the uniform candidate-cost bound
\(\varepsilon_C:=2\varepsilon_{\rm block}\), since a nontrivial candidate
adds two block costs (the one-block candidate satisfies the same bound).
\item Sort values, reject invalid weights, and collapse exact duplicates.
If one support value remains, return its constant encoding and the full
cutoff set \(\Rbar\), then stop.
\item Compute a center and cost estimate \(\widehat C_0\) for the full
block.
\item For \(j=1,\ldots,m-1\), compute lower and upper block centers and
costs, and set \(\widehat C_j\) equal to their sum.
\item Set
\[
\widehat C_{\min}=\min_{0\le j\le m-1}\widehat C_j,\qquad
\widehat J=
\{j:\widehat C_j\le\widehat C_{\min}+\varepsilon_{\rm sel}\}.
\]
\item For every \(j\in\widehat J\setminus\{0\}\), report the candidate
interval \([x_j,x_{j+1}]\) and a display cutoff computed by an
overflow-safe midpoint routine, both regime levels, and the severe-regime
probability \(1-\sum_{i\le j}p_i\).
\item \emph{Output.} Report \(\widehat C_{\min}\), \(\widehat J\), the
candidate intervals, levels, severe-regime probabilities, and the bound from
\Cref{prop:numerical-certificate}.  Call \(\widehat J\) complete only when
its stated tolerance condition holds; exact arithmetic is the special case
\(\varepsilon_{\rm block}=\varepsilon_{\rm sel}=0\).
\end{enumerate}
\end{algorithm}

\subsection{Mean-based severity regimes and Otsu's criterion}

Let
\[
 P_j=\sum_{i\le j}p_i,\qquad
 S_j=\sum_{i\le j}p_ix_i,\qquad
 Q_j=\sum_{i\le j}p_ix_i^2.
\]
For \(G(u)=u^2\), the lower block has
\begin{equation}\label{eq:quad-left}
 a_j=\frac{S_j}{P_j},\qquad
 L_j=Q_j-\frac{S_j^2}{P_j},
\end{equation}
and the upper block has
\begin{equation}\label{eq:quad-right}
 b_j=\frac{S_m-S_j}{1-P_j},\qquad
 R_j=(Q_m-Q_j)-\frac{(S_m-S_j)^2}{1-P_j}.
\end{equation}
All costs are therefore obtained in \(O(m)\) arithmetic operations after
sorting.  The variance decomposition gives
\begin{equation}\label{eq:otsu}
 C_j=\Var_\mu(X)-P_j(1-P_j)(a_j-b_j)^2.
\end{equation}
Minimizing within-regime variance is exactly the two-class Otsu threshold
criterion \citep{Otsu1979}, and the same objective is the binary
Lloyd--Max distortion \citep{Max1960,Lloyd1982}.

The textbook expression \(Q-S^2/P\) can suffer cancellation when all
\(x_i\) have a large common offset but small spread.  In floating-point
implementations, compute moments after subtracting a fixed reference
value, or use a weighted Welford recurrence.  A tiny negative result caused
only by roundoff may be clamped to zero after checking it against a
scale-aware error bound.

\subsection{VaR-based severity regimes, stop-loss transforms, and absolute loss}

For \(0<\tau<1\), let
\[
 G_\tau(u)=\tau u_++(1-\tau)u_-.
\]
For a block \(l:r\), write \(W=P_r-P_{l-1}\), and choose the smallest
\(k\in\{l,\ldots,r\}\) satisfying
\begin{equation}\label{eq:quantile-index}
 P_k-P_{l-1}\ge\tau W.
\end{equation}
Then \(x_k\) is the smallest block \(\tau\)-quantile.  With
\[
\begin{array}{ll}
 W_-=P_{k-1}-P_{l-1},&S_-=S_{k-1}-S_{l-1},\\
 W_+=P_r-P_k,&S_+=S_r-S_k,
\end{array}
\]
the exact cost is
\begin{equation}\label{eq:check-cost}
 (1-\tau)(x_kW_--S_-)+\tau(S_+-x_kW_+).
\end{equation}
Binary search gives \(O(\log m)\) per block and \(O(m\log m)\) for all
splits.  In the prefix and suffix sweeps, the relevant weighted-quantile
indices move monotonically, so two maintained pointers reduce the scan to
\(O(m)\).  Absolute loss is twice the \(\tau=1/2\) check loss.

There is an exact actuarial identity.  If a nonzero block measure is
\(\nu=W\Law(Y)\), define the upper expected shortfall by
\[
 \operatorname{ES}_\tau(Y)
 =\inf_{m\in\R}
 \left\{m+\frac{1}{1-\tau}\E[(Y-m)_+]\right\}.
\]
Using
\((m-Y)_+=(Y-m)_++m-Y\) gives
\begin{equation}\label{eq:check-es}
 \cV_{G_\tau}(\nu)
 =W(1-\tau)
   \bigl(\operatorname{ES}_\tau(Y)-\E Y\bigr).
\end{equation}
Thus the block routine can equally be viewed as computing a scaled
expected-shortfall deviation through the stop-loss transform
\(m\mapsto\E[(Y-m)_+]\)
\citep{RockafellarUryasev2002,RockafellarUryasevZabarankin2006}.
For \(\tau=1/2\), the two-regime criterion is also the median-based
within-class absolute-deviation threshold studied by
\citet{XueTitterington2011}.  More generally, its fixed-\(\tau\)
counterpart is the one-dimensional, two-cluster specialization of
quantile-based clustering \citep{HennigViroliAnderlucci2019}.

\subsection{Expectile-based severity regimes}

For
\[
 G_\tau(u)=\tau u_+^2+(1-\tau)u_-^2,\qquad 0<\tau<1,
\]
the resulting objective is the ordered, fixed-\(\tau\), two-cluster
specialization of \(K\)-expectiles clustering
\citep{WangLiHaerdle2022}.  For one block, the center is the unique root of
\begin{equation}\label{eq:expectile-score}
 \tau\sum_{x_i>m}p_i(x_i-m)
 =(1-\tau)\sum_{x_i<m}p_i(m-x_i).
\end{equation}
If the root for block \(l:r\) lies in \([x_k,x_{k+1}]\), then
\begin{equation}\label{eq:expectile-piece}
 m=
 \frac{(1-\tau)(S_k-S_{l-1})+\tau(S_r-S_k)}
 {(1-\tau)(P_k-P_{l-1})+\tau(P_r-P_k)}.
\end{equation}
The score is continuous and strictly decreasing, so \(k\) is found by
binary search over support points.  Given \(k\), the cost is
\begin{align*}
 &(1-\tau)\!\left[
   (Q_k-Q_{l-1})-2m(S_k-S_{l-1})
   +m^2(P_k-P_{l-1})\right]\\
 &\quad+\tau\!\left[
   (Q_r-Q_k)-2m(S_r-S_k)+m^2(P_r-P_k)\right],
\end{align*}
which is \(O(1)\) from prefix sums.  Hence all split costs require
\(O(m\log m)\) time after sorting.  Monotone warm starts can improve this
in practice, but the stated bound does not rely on them.

\subsection{Huber loss}

For \(\delta>0\), define
\[
 G_\delta(u)=
 \begin{cases}
 \tfrac12u^2,&|u|\le\delta,\\
 \delta|u|-\tfrac12\delta^2,&|u|>\delta.
 \end{cases}
\]
The center set is the zero set of the continuous nonincreasing score
\begin{equation}\label{eq:huber-score}
 s_{l:r}(m)=
 \sum_{i=l}^r p_i\,\operatorname{clip}(x_i-m,-\delta,\delta).
\end{equation}
It may be an interval, for example for two balanced observations separated
by more than \(2\delta\).  We choose its smallest point for deterministic
output.  A one-sided bisection maintains a lower endpoint with
\(s_{l:r}(m)>0\) and an upper endpoint with \(s_{l:r}(m)\le0\); sending a
zero-score midpoint to the upper endpoint converges to the smallest zero
even when the score vanishes on an interval.  For a trial \(m\), binary
searches locate the indices below
\(m-\delta\), inside the quadratic band, and above \(m+\delta\).  Prefix
sums then evaluate both score and cost in \(O(\log m)\).

Bisection on \([x_l,x_r]\), after returning immediately when an endpoint is
a zero, maintains the stated strict/weak score bracket.  Reaching a center
bracket of width \(\varepsilon_m\) needs at most
\[
 N=
 \left\lceil
 \log_2\!\left(\max\left\{1,
 \frac{x_r-x_l}{\varepsilon_m}\right\}\right)
 \right\rceil,
\]
iterations.  Because the block objective is \(W\delta\)-Lipschitz in its
center, where \(W=\sum_{i=l}^rp_i\), this produces objective error at most
\(W\delta\varepsilon_m\).  The full scan therefore costs
\(O(m\log m\,N)\) with an explicit objective certificate.

\subsection{General smooth or nonsmooth convex loss}

For a differentiable \(G\), the finite-block score
\[
 s_{l:r}(m)=\sum_{i=l}^rp_iG'(x_i-m)
\]
is nonincreasing.  Bisection on \([x_l,x_r]\) is globally valid under
\Cref{ass:G}; a Newton step may be used only with a safeguard that preserves
the sign-changing bracket.  Direct score evaluation gives worst-case
\(O(m^2\log(1/\varepsilon_m))\) work for the full scan.

For a nonsmooth convex \(G\), compute
\[
 s^-_{l:r}(m)=\sum_{i=l}^rp_iG'_-(x_i-m),\qquad
 s^+_{l:r}(m)=\sum_{i=l}^rp_iG'_+(x_i-m).
\]
The exact center condition is
\begin{equation}\label{eq:nonsmooth-score}
 s^-_{l:r}(m)\le0\le s^+_{l:r}(m).
\end{equation}
The smallest satisfying \(m\) gives a deterministic endpoint of the center
interval.  Bisection should stop only when the center bracket or a
subgradient residual is small enough to imply the requested block
objective tolerance.

\begin{table}[t]
\centering
\caption{Arithmetic complexity after sorting and duplicate collapse.
Here \(k\) is a block size and \(N\) is a certified bisection count.}
\label{tab:complexity}
\begin{tabular}{lll}
\toprule
Loss & one block & all split costs\\
\midrule
quadratic & \(O(1)\) from \(P,S,Q\) & \(O(m)\)\\
check/absolute & \(O(\log m)\), or pointer update
  & \(O(m\log m)\), or \(O(m)\)\\
expectile & \(O(\log m)\) from \(P,S,Q\) & \(O(m\log m)\)\\
Huber & \(O(N\log m)\) & \(O(mN\log m)\)\\
general smooth convex & \(O(kN)\) & \(O(m^2N)\)\\
\bottomrule
\end{tabular}
\end{table}

\subsection{A posteriori numerical certification}

Root accuracy, objective accuracy, and tie tolerance are different units
and must not be denoted by one symbol.

\begin{proposition}[Objective-gap and minimizer-retention certificate]
\label{prop:numerical-certificate}
Let the exact candidates be \(C_0,\ldots,C_{m-1}\), and suppose
\[
 |\widehat C_j-C_j|\le\varepsilon_C
 \qquad(0\le j\le m-1).
\]
If an index \(\widehat j\) is selected with
\[
 \widehat C_{\widehat j}
 \le\min_j\widehat C_j+\varepsilon_{\rm sel},
\]
then
\begin{equation}\label{eq:global-certificate}
 C_{\widehat j}
 \le\min_jC_j+2\varepsilon_C+\varepsilon_{\rm sel}.
\end{equation}
If each of the two block costs has error at most
\(\varepsilon_{\rm block}\), one may take
\(\varepsilon_C=2\varepsilon_{\rm block}\).
Moreover, if \(\varepsilon_{\rm sel}\ge2\varepsilon_C\), then every exact
minimizer is retained:
\[
 \argmin_j C_j\subseteq\widehat J.
\]
Without this inequality, an exact minimizer can be omitted from
\(\widehat J\).
\end{proposition}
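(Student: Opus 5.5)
The plan has three parts: the objective-gap bound, the retention claim, and a small example showing that the tolerance condition cannot be dropped. All of them use only the triangle inequality on the candidate costs.

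\textbf{Objective-gap bound.} Let \(j^\star\in\argmin_jC_j\). I will chain three inequalities:
\[
 C_{\widehat j}\le \widehat C_{\widehat j}+\varepsilon_C
 \le \min_j\widehat C_j+\varepsilon_{\rm sel}+\varepsilon_C
 \le \widehat C_{j^\star}+\varepsilon_{\rm sel}+\varepsilon_C
 \le C_{j^\star}+2\varepsilon_C+\varepsilon_{\rm sel}.
\]
Since \(C_{j^\star}=\min_jC_j\), this is \eqref{eq:global-certificate}. For the choice \(\varepsilon_C=2\varepsilon_{\rm block}\), note that each nontrivial candidate \(\widehat C_j\) is the sum of two block estimates. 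Each block estimate is within \(\varepsilon_{\rm block}\) of its exact block value, so the sum is within \(2\varepsilon_{\rm block}\) of \(C_j\). The one-block candidate \(C_0\) is within \(\varepsilon_{\rm block}\le2\varepsilon_{\rm block}\).

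\textbf{Retention.} Take any \(j^\star\in\argmin_jC_j\) and let \(\widehat k\) attain \(\min_j\widehat C_j\). Then
\[
 \widehat C_{j^\star}\le C_{j^\star}+\varepsilon_C\le C_{\widehat k}+\varepsilon_C
 \le \widehat C_{\widehat k}+2\varepsilon_C
 =\widehat C_{\min}+2\varepsilon_C.
\]
If \(\varepsilon_{\rm sel}\ge2\varepsilon_C\), the right-hand side is at most \(\widehat C_{\min}+\varepsilon_{\rm sel}\), so \(j^\star\in\widehat J\) by the definition of \(\widehat J\) in \Cref{alg:master}.

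\textbf{Sharpness.} I expect this to be the only step needing any care, because the perturbation must stay within \(\varepsilon_C\) and the \(C_j\) must be actual candidate costs. I would use an abstract two-candidate instance: exact values \(C_0=C_1=0\) and estimates \(\widehat C_0=\varepsilon_C\), \(\widehat C_1=-\varepsilon_C\), with \(\varepsilon_C>0\) and \(\varepsilon_{\rm sel}<2\varepsilon_C\). Then \(\widehat C_{\min}=-\varepsilon_C\), and \(\widehat C_0=\widehat C_{\min}+2\varepsilon_C>\widehat C_{\min}+\varepsilon_{\rm sel}\), so the exact minimizer \(0\) is omitted.

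To make the example concrete, I would start from a finite law with a genuine tie, for instance a symmetric three-point law under quadratic loss where \(C_1=C_2\). I would shift the two estimates by \(\pm\varepsilon_C\); the candidate \(C_0\) is strictly larger by \Cref{prop:finite-exact}, so it does not interfere once \(\varepsilon_C\) is small. The certificate is stated for arbitrary estimates within \(\varepsilon_C\), so this perturbation is admissible.
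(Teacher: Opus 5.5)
Your proof is correct and follows the paper's argument: the same chain of inequalities gives \eqref{eq:global-certificate}, and applying \(|\widehat C_j-C_j|\le\varepsilon_C\) in both directions gives \(\widehat C_{j^\star}\le\widehat C_{\min}+2\varepsilon_C\), hence retention. Your justification of \(\varepsilon_C=2\varepsilon_{\rm block}\) and your sharpness example expand what the paper states in one line; for sharpness, rely on the tied three-point law (e.g.\ \((\delta_0+\delta_1+\delta_2)/3\), where \(C_1=C_2=1/6<C_0=2/3\)), because the abstract instance \(C_0=C_1=0\) cannot come from a finite law: for \(m=2\), \Cref{prop:finite-exact} forces \(C_1=0<C_0\).
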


\begin{proof}
Let \(j^\star\) minimize the exact costs.  Then
\[
 C_{\widehat j}
 \le\widehat C_{\widehat j}+\varepsilon_C
 \le\widehat C_{j^\star}
   +\varepsilon_{\rm sel}+\varepsilon_C
 \le C_{j^\star}
   +\varepsilon_{\rm sel}+2\varepsilon_C.
\]
For completeness, let \(\widehat k\) minimize \(\widehat C_j\).  If
\(j^\star\) is any exact minimizer, then
\[
 \widehat C_{j^\star}-\widehat C_{\widehat k}
 \le (C_{j^\star}+\varepsilon_C)
      -(C_{\widehat k}-\varepsilon_C)
 \le2\varepsilon_C.
\]
Hence \(j^\star\in\widehat J\) under the stated condition.  Equality can
occur in both error bounds, so no smaller universal tolerance guarantees
inclusion.
\end{proof}

For a general convex loss on a block of diameter \(D\), let
\[
 L_D:=\max_{|u|\le D}
 \max\{|G'_-(u)|,|G'_+(u)|\}<\infty.
\]
The block objective is \(W L_D\)-Lipschitz on its support hull.  A center
bracket of width \(\varepsilon_m\) therefore contributes at most
\(WL_D\varepsilon_m\) to
\(\varepsilon_{\rm block}\).  With the conservative choice
\(\varepsilon_C=2\varepsilon_{\rm block}\), completeness of the approximate
candidate list is guaranteed by
\(\varepsilon_{\rm sel}\ge4\varepsilon_{\rm block}\).  Combining this
bound with
\eqref{eq:global-certificate} makes the phrase
``\(\varepsilon\)-optimal'' precise.  Floating-point comparisons can
certify only an approximate tie set unless interval arithmetic or exact
arithmetic separates the candidate-cost intervals.

If the law approximation index is \(N\), it is enough for
\(\varepsilon_{\rm sel,N}+2\varepsilon_{C,N}\to0\).  The approximate cuts
then satisfy the hypothesis of \Cref{cor:argmin}, so numerical inexactness
does not destroy consistency.

\section{A four-scenario loss model \texorpdfstring{\(\{0,3,u,v\}\)}
{with four severities}}\label{sec:fourpoint}

Consider four equally likely scenarios with loss severities
\(0,3,u,v\), represented by the law
\begin{equation}\label{eq:four-law}
 \mu_{u,v}=\frac14(\delta_0+\delta_3+\delta_u+\delta_v).
\end{equation}
Sorting removes the need to divide the \((u,v)\)-plane into all possible
orderings of \(0,3,u,v\).  Let
\[
 y_1\le y_2\le y_3\le y_4
\]
be the ordered multiset.  Under quadratic loss, the three possible
nontrivial prefix costs are
\begin{align}
 C_1&=\frac1{12}\left[
 (y_3-y_2)^2+(y_4-y_2)^2+(y_4-y_3)^2\right],
 \label{eq:C1-four}\\
 C_2&=\frac18\left[(y_2-y_1)^2+(y_4-y_3)^2\right],
 \label{eq:C2-four}\\
 C_3&=\frac1{12}\left[
 (y_2-y_1)^2+(y_3-y_1)^2+(y_3-y_2)^2\right].
 \label{eq:C3-four}
\end{align}

\begin{proposition}[Four-point specialization]\label{prop:four-rule}
For \eqref{eq:four-law} and \(G(u)=u^2\), let
\[
 J(u,v):=\{j\in\{1,2,3\}:y_j<y_{j+1}\},
\qquad
 J^\star(u,v):=\argmin_{j\in J(u,v)}C_j.
\]
For every \(j^\star\in J^\star(u,v)\), a displayed cutoff and the two
levels are
\begin{equation}\label{eq:four-cut}
 c^\star=\frac{y_{j^\star}+y_{j^\star+1}}2,
\end{equation}
and
\begin{equation}\label{eq:four-levels}
 a^\star=\frac1{j^\star}\sum_{i=1}^{j^\star}y_i,\qquad
 b^\star=\frac1{4-j^\star}\sum_{i=j^\star+1}^4y_i.
\end{equation}
The full optimal cutoff set is
\[
 \bigcup_{j\in J^\star(u,v)}[y_j,y_{j+1}].
\]
If support values coincide, indices with \(y_j=y_{j+1}\) are
inadmissible; equivalently, collapse the duplicate atom and add its
probabilities before running the weighted algorithm.
\end{proposition}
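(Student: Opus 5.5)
The plan is to reduce the statement to \Cref{prop:finite-exact} applied to the collapsed law, and then evaluate the quadratic block costs explicitly.

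\emph{Collapsing duplicates.} Let the distinct values among \(y_1\le\cdots\le y_4\) be \(x_1<\cdots<x_m\), with weights equal to \(1/4\) times their multiplicities. A threshold partition of the collapsed support into a prefix \(\{x_1,\ldots,x_k\}\) and a suffix is the same event as the partition of the ordered multiset into \(\{y_1,\ldots,y_j\}\) and \(\{y_{j+1},\ldots,y_4\}\). Here \(j\) is the unique index with \(y_j=x_k<x_{k+1}=y_{j+1}\). This sets up a bijection between the split indices \(k\in\{1,\ldots,m-1\}\) and the admissible set \(J(u,v)\). Indices with \(y_j=y_{j+1}\) would split an atom, which no measurable threshold does; this is why they are inadmissible. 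Under the bijection the gap \([x_k,x_{k+1}]\) is exactly \([y_j,y_{j+1}]\).

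\emph{Block costs and levels.} For an admissible \(j\), the lower block is \(\tfrac14\sum_{i\le j}\delta_{y_i}\), possibly with repeated points. Under squared loss its center is the conditional mean \(a^\star\) in \eqref{eq:four-levels}, and likewise \(b^\star\) for the upper block, by \Cref{prop:conditional-centers} and \Cref{tab:centers}. The cost \(\cV_G\) of a block of \(k\) equal-weight points \(z_i\) is
\[
\tfrac14\sum_i(z_i-\bar z)^2=\tfrac{1}{4k}\sum_{i<l}(z_i-z_l)^2 ,
\]
by the pairwise form of the variance. For \(k=1\) this gives \(0\), for \(k=2\) it gives \(\tfrac18(z_1-z_2)^2\), and for \(k=3\) it gives \(\tfrac1{12}\sum_{i<l}(z_i-z_l)^2\). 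Adding the two block costs yields \eqref{eq:C1-four}--\eqref{eq:C3-four}, so these \(C_j\) coincide with the prefix costs \eqref{eq:Cj-general} of the collapsed law.

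\emph{Optimality.} When \(m\ge2\), \Cref{prop:finite-exact} shows that the one-regime cost \(C_0\) is strictly beaten by some split. The minimum is therefore attained on \(J(u,v)\). Every cutoff in \([y_j,y_{j+1}]\) with \(j\in J^\star\) is optimal, and conversely every optimal cutoff lies in one of these gaps. The midpoint \eqref{eq:four-cut} lies in the gap, so it is a valid displayed cutoff.

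\emph{Degenerate case.} The step needing care is the edge case in which all four values coincide. This cannot happen, since \(0\ne3\) forces \(m\ge2\), so \(J(u,v)\ne\emptyset\). I would therefore only note this explicitly. I expect the duplicate bookkeeping in the bijection to be the only real obstacle.
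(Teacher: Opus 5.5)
Your proposal is correct and follows essentially the same route as the paper: reduce to \Cref{prop:finite-exact} for the collapsed law, and obtain \eqref{eq:C1-four}--\eqref{eq:C3-four} from the pairwise variance identity with outer weight \(1/4\). As in the paper, the levels are the conditional means, and the distinct fixed values \(0\) and \(3\) rule out the one-regime candidate, so only the closed gaps of minimizing indices are optimal. Your explicit bijection between collapsed split indices and \(J(u,v)\) spells out what the paper compresses into its one-line remark about prefixes of the ordered multiset and atom-splitting.
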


\begin{proof}
Every threshold-admissible lower regime is a prefix of the ordered
multiset.  A split through equal adjacent values would divide one atom and
is therefore inadmissible.  For \(k\) equally weighted observations, the
identity
\[
 \sum_{i=1}^k(z_i-\bar z)^2
 =\frac1k\sum_{1\le i<\ell\le k}(z_i-z_\ell)^2
\]
and the outer weight \(1/4\) give
\eqref{eq:C1-four}--\eqref{eq:C3-four}.  The conditional means give
\eqref{eq:four-levels}.  \Cref{prop:finite-exact} proves global optimality
and shows that a minimizing split contributes its full closed support gap.
The midpoint is only a deterministic display representative.  The fixed
values \(0\) and \(3\) make the law nondegenerate, so the one-regime
candidate is never optimal: isolating an extreme support atom at which a
one-regime center has nonzero residual strictly lowers its cost.  Thus the
three displayed split costs exhaust all optima.
\end{proof}

Inside any cone on which the ordering of \((0,3,u,v)\) is fixed, each
\(y_i\) is affine in \((u,v)\).  Pairwise tie boundaries
\(C_i=C_j\) are therefore algebraic curves of degree at most two.  Sorting
is shorter and less error-prone than deriving a separate objective in each
ordering cone.

\begin{figure}[t]
\centering
\includegraphics[width=\textwidth]{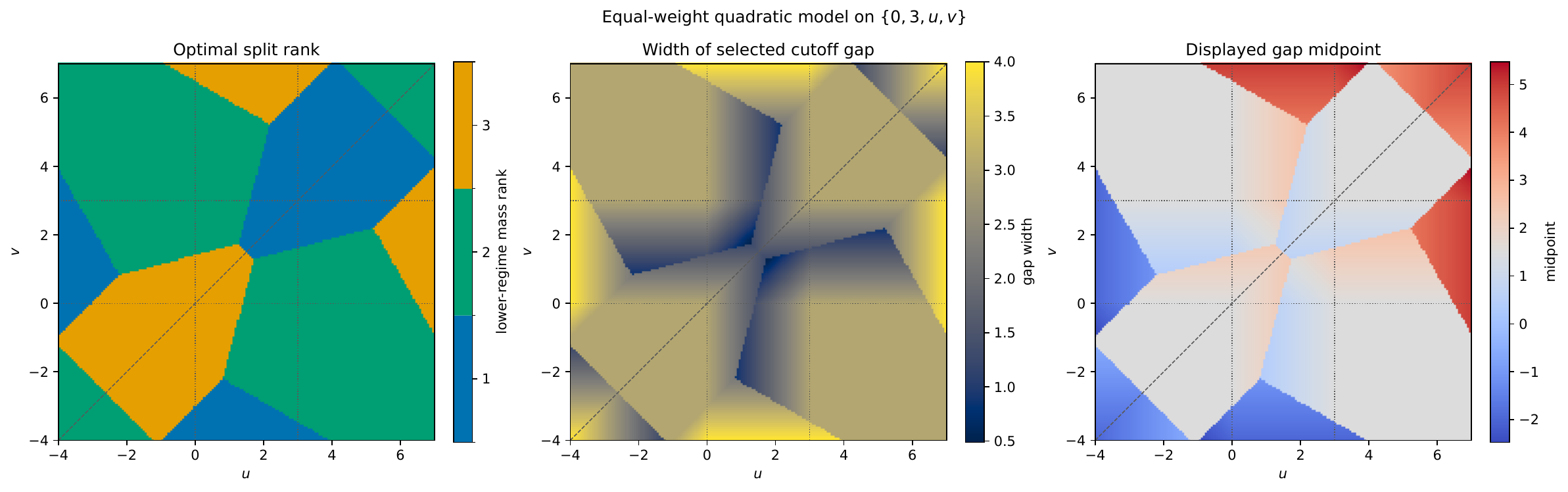}
\caption{Equal-weight quadratic four-point atlas on
\([-4,7]^2\), evaluated on a \(181\times181\) grid that contains the
collision coordinates \(0\) and \(3\) exactly.  Left: number of scenarios in
the displayed lower cell.  Middle: width of its full optimal cutoff gap.
Right: the midpoint used only for display.  On ties within numerical
tolerance, the image selects the
smallest split rank, while the mathematical summary contains the closed gap
for every minimizing rank.  Duplicate values are collapsed before
optimization.  The diagonal and the horizontal and vertical lines at \(0\)
and \(3\) are structural guides; cells are aligned with the sampled grid
nodes.  Other visible boundaries are raster approximations to quadratic tie
curves.}
\label{fig:four-atlas}
\end{figure}

\subsection{Interactive companion and reproducibility}

The self-contained file \texttt{code/four\_point\_explorer.html} provides:
\begin{enumerate}[label=(\roman*),leftmargin=2.2em]
\item sliders for \(u,v\) and a moving marker in the \((u,v)\)-plane;
\item the sorted support with masses, every admissible split, its full
  cutoff gap, candidate cost, and displayed midpoint;
\item the two center sets and a displayed representative associated with the
  deterministic tie rule;
\item quadratic, absolute, check, asymmetric quadratic, and Huber losses,
  with adjustable \(\tau\) and \(\delta\).
\end{enumerate}
The heatmaps are precomputed for quadratic loss; the detailed candidate
profile is recomputed for the chosen loss.  The Python generator fixes the
grid and writes both the static atlas and the HTML file.

The validation script compares every implemented one-block center against
an independent scalar optimizer, compares the closed four-point formulas
with direct block evaluation, and checks permutation symmetry, duplicate
collapse, score conditions, equivariance, invalid inputs, and tie cases.
Numerical comparisons report approximate tie sets at the stated absolute and
relative tolerances.  Exact algebraic certification would require symbolic or
interval-arithmetic bounds; the guarantee in
\Cref{prop:numerical-certificate} applies once certified block-cost bounds are
supplied.

\subsection{Unequal probabilities}

If \(0,3,u,v\) carry probabilities \(p_0,p_3,p_u,p_v\), sort
value--weight pairs together, collapse duplicates, and evaluate
\eqref{eq:Cj-general}.  No logic changes; only the compact equal-weight
formulas \eqref{eq:C1-four}--\eqref{eq:C3-four} are replaced by weighted
block costs.  Thus the four-point example is a transparent test case for the
general finite-support algorithm.

\section{Discussion}\label{sec:discussion}

The construction is best read as a compact severity model.  It replaces one
unconditional location by two conditional centers and lets the loss law
determine their boundary.  Here \emph{regime} denotes a static severity cell.
The choice of residual loss remains substantive: quadratic loss
emphasizes large errors, check loss yields conditional quantiles, asymmetric
quadratic loss yields conditional expectiles, and Huber loss limits the
influence of extremes.

This interpretation also marks the axiomatic boundary.  The correspondence is
law invariant, cash equivariant, and, for homogeneous losses, equivariant under
positive scaling.  Each fitted encoding is internally ordered in the realized
loss.  Yet reoptimizing the partition can reverse the fitted levels under an
ordered deterioration and can violate subadditivity, even for comonotone
losses.  Accordingly, the construction is used here as a descriptive risk
summary rather than as a coherent monetary risk measure or a set-valued
capital requirement of the usual upper-set kind \citep{HamelHeyde2010}.
Its optimizer multiplicity records tied partitions,
support gaps, and nonunique centers instead.  Likewise, the raw score in
\eqref{eq:joint-score} identifies a Bayes-action correspondence, but inactive
coordinates and boundary labels require the canonicalization and class-wide
uniqueness qualifications stated in \Cref{sec:risk-axioms}.

In an application, the summary should accompany the distribution rather than
replace it.  A displayed midpoint is only one point in an optimal support gap,
not a physical failure threshold or a regulatory limit.  An engineering study
should retain the hazard--fragility--loss chain and its exceedance curve; an
actuarial study should compare the fitted cells with VaR, expected shortfall,
expectiles, and the empirical loss distribution.  These comparisons provide
the context needed to assess practical decision performance.

By retaining the cutoff coordinate and its boundary convention, the stability
analysis complements classical codebook-consistency theory
\citep{AbayaWise1984,Pollard1981,Pollard1982,SabinGray1986,LiuPages2020,
Jaffe2025}.

Several limitations remain.  Optimizers can be nonunique, so reporting one
tie-broken output can hide genuine information.  Weak convergence alone does
not control an unbounded loss, and the required tail condition must be checked.
Epi-convergence gives consistency but no statistical rate or confidence set;
those need a separation condition for the population profile together with a
quantitative empirical or transportation bound.  Finally, two cells may be too
coarse.  With more ordered cells, finite-support dynamic programming is
available \citep{Fisher1958,WangSong2011,NielsenNock2014}, but the boundary-atom
convergence problem becomes multidimensional and is not proved here.

The stability results justify finite-support and empirical approximations when
the laws converge \(\psi\)-weakly: the relaxed profiles epi-converge and the
cutoff sets satisfy an outer-limit theorem.  Convergence of a selected cutoff
and its two levels additionally requires the uniqueness, atom-freeness, and
positive-cell assumptions of \Cref{thm:regime-conv}.  This separates stability
of the optimizer set from convergence of a particular reported summary in the
presence of ties and boundary atoms.

\section*{Data and code availability}
The numerical example is synthetic.  Source code, validation tests, the static
atlas, and a self-contained interactive explorer accompany the manuscript and
reproduce \Cref{fig:four-atlas}.  A public archival DOI should be added to the
accepted version.

\appendix
\section{Deferred proofs}\label{sec:appendix}

This appendix contains the measure-theoretic arguments deferred from the
main text.  The heading of each proof links to its statement; the final
line links back.

\subsection{Proof of \texorpdfstring{\Cref{prop:center-existence}}
{the center-existence proposition}}\label{app:center}

For \(m\in\R\), put
\[
 \Phi_\nu(m):=\int G(x-m)\,\nu(dx).
\]
\Cref{eq:convex-domination} makes \(\Phi_\nu(m)\) finite for every fixed
\(m\).  Convexity of \(G\) makes \(\Phi_\nu\) convex, and Fatou's lemma gives
lower semicontinuity.

Choose a compact interval \(K\) with \(\nu(K)>0\).  Coercivity gives
\[
 \inf_{x\in K}G(x-m)\longrightarrow\infty
 \qquad(|m|\to\infty),
\]
so
\[
 \Phi_\nu(m)\ge\nu(K)\inf_{x\in K}G(x-m)\longrightarrow\infty.
\]
Thus \(\Phi_\nu\) attains its minimum.  Its minimizer set is closed, bounded,
and convex, hence a nonempty compact interval.  If \(G\) is strictly
convex and \(m_1\ne m_2\), strict convexity holds pointwise for
\(G(x-m)\); integrating against the nonzero measure makes \(\Phi_\nu\)
strictly convex, so the minimizer is unique.

It remains to prove the support-hull assertion.  If
\(\alpha=\inf\supp\nu\) is finite and \(m<\alpha\), all residuals \(x-m\)
are positive.  A convex function with unique minimum at zero is strictly
increasing on \((0,\infty)\).  Moving \(m\) to the right by any
\(\varepsilon\in(0,\alpha-m]\) therefore strictly decreases
\(G(x-m)\) for every \(x\in\supp\nu\), and hence strictly decreases
\(\Phi_\nu\).  Such \(m\) cannot be a center.  The case
\(m>\sup\supp\nu\) is symmetric, using strict decrease of \(G\) on
\((-\infty,0)\).

\backlink{prop:center-existence}

\subsection{Proof of \texorpdfstring{\Cref{cor:conditional-order}}
{the conditional-order corollary}}\label{app:conditional-order}

The support of \(\mu_c^{\le,-}\) is contained in \((-\infty,c]\), and
the support of \(\mu_c^{\le,+}\) is contained in \([c,\infty)\).  The
support-hull part of \Cref{prop:center-existence} therefore gives
\[
 m_G^+(\mu_c^{\le,-})\le c\le
 m_G^-(\mu_c^{\le,+}).
\]

If \(c_1<c_2\), a direct comparison of distribution functions shows
\[
 \Law(X\mid X\le c_1)\preceq_{\mathrm{st}}
 \Law(X\mid X\le c_2),
\]
whenever both laws are defined.  Likewise
\[
 \Law(X\mid X>c_1)\preceq_{\mathrm{st}}
 \Law(X\mid X>c_2).
\]
\Cref{prop:center-order} proves monotonicity of both endpoints of both
conditional center sets.

As \(c\to+\infty\),
\(\restr{\mu}{(-\infty,c]}/\mu((-\infty,c])\psito\mu\); this follows
from monotone convergence of the restrictions and their \(\psi\)-moments.
The analogous upper conditional law converges to \(\mu\) as
\(c\to-\infty\).  If \(\mu\) has a unique center,
\Cref{lem:V-continuity} gives convergence of every conditional-center
selection.

\backlink{cor:conditional-order}

\subsection{Proof of \texorpdfstring{\Cref{lem:V-continuity}}
{continuity of one-regime centering}}\label{app:Vcontinuity}

For fixed \(m\), the function \(x\mapsto G(x-m)\) is continuous and,
by \eqref{eq:convex-domination}, bounded by \(C_m\psi(x)\).  Hence
\begin{equation}\label{eq:fixed-m-conv}
 \int G(x-m)\,\nu_n(dx)\longrightarrow
 \int G(x-m)\,\nu(dx).
\end{equation}

If \(\nu\ne0\), choose \(m\in\cM_G(\nu)\).  Then
\[
 \limsup_n\cV_G(\nu_n)
 \le\lim_n\int G(x-m)\,d\nu_n
 =\cV_G(\nu).
\]
If \(\nu=0\), convexity and \(G(0)=0\) give
\[
 G(x)\le\frac1{1+\eta}G((1+\eta)x).
\]
Choosing \(m=0\) and using \(\nu_n\psito0\) yields
\[
 0\le\cV_G(\nu_n)\le\int G(x)\,d\nu_n\longrightarrow0.
\]

Now assume \(\nu\ne0\), and take arbitrary
\(m_n\in\cM_G(\nu_n)\).  The preceding upper bound keeps the optimized
costs bounded.  Choose a compact interval \(K\) with
\(\nu(K)>0\) and \(\nu(\partial K)=0\).  Weak convergence makes
\(\nu_n(K)\) bounded away from zero.  If a subsequence had
\(|m_n|\to\infty\), then
\[
 \cV_G(\nu_n)
 \ge\nu_n(K)\inf_{x\in K}G(x-m_n)\longrightarrow\infty,
\]
a contradiction.  Thus the center sequence is bounded.

Let \((n_k)\) be an arbitrary subsequence.  It has a further subsequence,
not relabeled, with \(m_{n_k}\to m\).  Choose \(M\) containing these
centers.  Uniformly for \(|m'|\le M\),
\eqref{eq:convex-domination} gives
\[
 G(x-m')\le C_M\psi(x).
\]
On each compact \(x\)-interval,
\(G(x-m_{n_k})\to G(x-m)\) uniformly.  Outside a sufficiently large
compact set, uniform \(\psi\)-integrability makes the two translated-loss
tails uniformly small.  Combining the compact convergence, the tail
bound, and \eqref{eq:fixed-m-conv} gives
\[
 \int G(x-m_{n_k})\,d\nu_{n_k}
 \longrightarrow
 \int G(x-m)\,d\nu.
\]
Consequently
\[
 \cV_G(\nu)
 \le\int G(x-m)\,d\nu
 =\lim_k\cV_G(\nu_{n_k})
 \le\cV_G(\nu).
\]
Thus \(m\in\cM_G(\nu)\) and this further subsequence of values converges
to \(\cV_G(\nu)\).  Since every subsequence has such a further
subsequence, the full value sequence converges.  The same argument shows
that every center cluster point lies in \(\cM_G(\nu)\).  If that set is a
singleton, boundedness and uniqueness force convergence of the full center
sequence.

\backlink{lem:V-continuity}

\subsection{Proof of \texorpdfstring{\Cref{lem:moving-trunc}}
{the moving-restriction lemma}}\label{app:movingtrunc}

Let \(f\) be bounded and continuous.  Because \(\mu(\{c\})=0\), for every
\(\varepsilon>0\) one may choose \(\delta>0\) such that
\[
 \mu((c-2\delta,c+2\delta))<\varepsilon
\]
and \(c\pm2\delta\) are continuity points of \(\mu\).  For large \(n\),
\(c_n\in(c-\delta,c+\delta)\).  Continuous upper and lower
approximations to the half-line indicators agree outside
\((c-2\delta,c+2\delta)\).  Weak convergence controls their integrals,
whereas the remaining contribution is at most \(\|f\|_\infty\) times a
mass with limiting upper bound \(\varepsilon\).  Letting
\(\varepsilon\downarrow0\) proves weak convergence of both restrictions.

Apply the same bounded argument to \(\psi\wedge M\).  Uniform
\(\psi\)-integrability of the parent measures passes to every restriction:
\[
 \sup_n\int_{\{\psi>M\}}\psi\,
 d(\restr{\mu_n}{(-\infty,c_n]})
 \le
 \sup_n\int_{\{\psi>M\}}\psi\,d\mu_n
 \longrightarrow0.
\]
The complementary and open-left restrictions satisfy the same inequality.
Letting \(M\to\infty\) proves convergence of the \(\psi\)-moments.

\backlink{lem:moving-trunc}

\subsection{Proof of \texorpdfstring{\Cref{prop:profile-regularity}}
{profile regularity}}\label{app:profile-regularity}

If \(d_k\downarrow c\), continuity from above of the finite measure and
monotone convergence of its \(\psi\)-moments give
\[
 \restr{\mu}{(-\infty,d_k]}\psito
 \restr{\mu}{(-\infty,c]},\qquad
 \restr{\mu}{(d_k,\infty)}\psito
 \restr{\mu}{(c,\infty)}.
\]
\Cref{lem:V-continuity} gives right-continuity of \(H_G^\le\).  If
\(d_k\uparrow c\), the restrictions converge instead to those on
\((-\infty,c)\) and \([c,\infty)\), proving
\[
 H_G^\le(\mu,c-)=H_G^<(\mu,c).
\]
Hence the raw profile is c\`adl\`ag and can jump only where the two
boundary restrictions differ, namely at an atom.

For a c\`adl\`ag real function \(f\), the function
\(\widetilde f(c)=\min\{f(c-),f(c)\}\) is lower semicontinuous: from the
right, both nearby one-sided values tend to \(f(c)\), and from the left
they tend to \(f(c-)\); a general sequence is handled by splitting it into
left and right subsequences.  Thus \(\underline H_G\) is lower
semicontinuous at finite points.  At \(+\infty\), the upper restriction
vanishes in mass and \(\psi\)-moment while the lower restriction converges
to \(\mu\); the case \(-\infty\) is symmetric.  This proves lower
semicontinuity on compact \(\Rbar\).  A finite lower-semicontinuous
function on a compact space attains its minimum, and its argmin set is
compact.

It is also the greatest lower-semicontinuous minorant of the raw profile.
Indeed, if \(h\) is lower semicontinuous and
\(h\le H_G^\le(\mu,\cdot)\), then \(h(c)\le H_G^\le(\mu,c)\).  Along any
\(d_k\uparrow c\), lower semicontinuity and the preceding left-limit identity
also give
\[
 h(c)\le\liminf_k h(d_k)
 \le\liminf_k H_G^\le(\mu,d_k)
 =H_G^\le(\mu,c-).
\]
Thus \(h(c)\le\min\{H_G^\le(\mu,c),H_G^\le(\mu,c-)\}\); the endpoint cases
follow from the endpoint continuity just proved.

\backlink{prop:profile-regularity}

\subsection{Proof of \texorpdfstring{\Cref{cor:profile-modes}}
{the profile-convergence corollary}}\label{app:profile-modes}

Part (a) is \Cref{thm:moving-cut} with a fixed cutoff.  For every \(n,c\),
using one common center on both sides shows
\[
 0\le H_G^\le(\mu_n,c)\le\cV_G(\mu_n).
\]
\Cref{lem:V-continuity} makes the right-hand side uniformly bounded.
The atoms of \(\mu\) are countable, so part (a) gives almost-everywhere
convergence in \(c\).  Dominated convergence on each compact interval
proves part (b).

If \(\mu\) is atomless and uniform convergence on \(\Rbar\) failed, there
would be \(\varepsilon>0\), a subsequence, and \(c_n\in\Rbar\) such that
\[
 |H_G^\le(\mu_n,c_n)-H_G^\le(\mu,c_n)|\ge\varepsilon.
\]
Compactness gives a further subsequence with \(c_n\to c\).  By
\Cref{thm:moving-cut}, the first profile value converges to
\(H_G^\le(\mu,c)\).  The limiting profile is continuous because \(\mu\)
has no atoms, so the second does too, a contradiction.

\backlink{cor:profile-modes}

\subsection{Proof of \texorpdfstring{\Cref{thm:epi}}
{epi-convergence}}\label{app:epi}

\paragraph{Epi-liminf at a finite point.}
Let \(c_n\to c\in\R\), and pass to a subsequence that realizes the lower
limit of \(\underline H_G(\mu_n,c_n)\).  For each \(n\), choose whichever
boundary convention realizes the relaxed value, and denote the lower and
upper restricted measures by \(\nu_n^-,\nu_n^+\).  Then
\[
 0\le\nu_n^\pm\le\mu_n,\qquad
 \nu_n^-+\nu_n^+=\mu_n.
\]
Tightness and uniform \(\psi\)-integrability of \((\mu_n)\) pass to these
submeasures.  After simultaneous extraction,
\(\nu_n^\pm\Rightarrow\nu^\pm\).  Weak convergence of
\(\psi\wedge M\), followed by uniform tail control as \(M\to\infty\),
shows
\[
 \nu_n^\pm\psito\nu^\pm.
\]

Testing \(0\le\nu_n^-\le\mu_n\) against nonnegative continuous functions
and passing to the limit gives \(0\le\nu^-\le\mu\).  For every
\(f\in C_c((-\infty,c))\), the lower restriction eventually agrees with
\(\mu_n\); for every \(f\in C_c((c,\infty))\), it is eventually zero.
Therefore
\[
 \nu^-=\restr{\mu}{(-\infty,c)}+\theta\mu(\{c\})\delta_c
\]
for some \(\theta\in[0,1]\).  Since the two limits sum to \(\mu\),
\[
 \nu^+=\restr{\mu}{(c,\infty)}
 +(1-\theta)\mu(\{c\})\delta_c.
\]
\Cref{lem:V-continuity} gives, along this subsequence,
\[
 \underline H_G(\mu_n,c_n)
 \longrightarrow
 \cV_G(\nu^-)+\cV_G(\nu^+)=:J_c(\theta).
\]

For finite measures \(\alpha,\beta\) and \(t\in[0,1]\),
\begin{align*}
 \cV_G(t\alpha+(1-t)\beta)
 &=\inf_m\left[
 t\int G(x-m)\,d\alpha
 +(1-t)\int G(x-m)\,d\beta\right]\\
 &\ge t\cV_G(\alpha)+(1-t)\cV_G(\beta).
\end{align*}
Thus \(\cV_G\) is concave in its measure argument and \(J_c\) is concave
in \(\theta\).  Hence
\[
 J_c(\theta)\ge\min\{J_c(0),J_c(1)\}
 =\underline H_G(\mu,c),
\]
which is the epi-liminf inequality.

\paragraph{Recovery at a finite point.}
Suppose first that
\(H_G^\le(\mu,c)\le H_G^<(\mu,c)\).  Choose atom-free
\(d_k\downarrow c\).  Right-continuity gives
\[
 H_G^\le(\mu,d_k)\to H_G^\le(\mu,c).
\]
For fixed \(k\), \Cref{thm:moving-cut} gives
\[
 H_G^\le(\mu_n,d_k)\to H_G^\le(\mu,d_k).
\]
Choose the integers \(N_k\) strictly increasing so the absolute difference
is at most \(1/k\) for \(n\ge N_k\).  For \(n\ge N_1\), let \(k(n)\) be the
largest \(k\) with \(N_k\le n\), and define it arbitrarily for the finitely
many earlier indices.  Then \(k(n)\to\infty\), \(d_{k(n)}\to c\), and
\[
 \limsup_n\underline H_G(\mu_n,d_{k(n)})
 \le H_G^\le(\mu,c)
 =\underline H_G(\mu,c).
\]
If the open-left value is smaller, use atom-free \(d_k\uparrow c\) and
the left-limit identity.  Either construction works in a tie.

\paragraph{The two endpoints.}\label{app:epi-endpoints}
If \(c_n\to+\infty\), coercivity of \(\psi\) and uniform
\(\psi\)-integrability give
\[
 \int\psi\,d\nu_n^+\longrightarrow0
\]
for either boundary convention.  The lower restriction then converges
\(\psi\)-weakly to \(\mu\).  By \Cref{lem:V-continuity},
\[
 \underline H_G(\mu_n,c_n)\longrightarrow\cV_G(\mu).
\]
This proves both lower bound
and recovery at \(+\infty\); the argument at \(-\infty\) is symmetric.

\noindent\emph{\hyperref[thm:moving-cut]{Return to the endpoint statement
in \Cref*{thm:moving-cut}.}}

\backlink{thm:epi}

\subsection{Proof of \texorpdfstring{\Cref{cor:argmin}}
{optimizer stability}}\label{app:argmin}

Write \(f_n=\underline H_G(\mu_n,\cdot)\) and
\(f=\underline H_G(\mu,\cdot)\).  Let \(c^\star\) minimize \(f\).
A recovery sequence \(d_n\to c^\star\) gives
\[
 \limsup_n\min f_n\le\limsup_nf_n(d_n)\le\min f.
\]
Choose exact minimizers \(c_n\) of \(f_n\).  Compactness of \(\Rbar\)
gives a convergent subsequence \(c_{n_k}\to\bar c\), and the epi-liminf
inequality gives
\[
 \min f\le f(\bar c)\le
 \liminf_k f_{n_k}(c_{n_k})
 =\liminf_k\min f_{n_k}.
\]
Every subsequence has a further subsequence satisfying this lower bound,
so the minimum values converge.  The same display shows that every cluster
point of exact minimizers lies in \(\argmin f\).

If \(f_n(\widetilde c_n)\le\min f_n+\varepsilon_n\) with
\(\varepsilon_n\to0\), the identical compactness and epi-liminf argument
applies.  When \(\argmin f\) is a singleton, every cluster point is that
point; compactness then forces convergence of the full sequence.

\backlink{cor:argmin}

\subsection{Proof of \texorpdfstring{\Cref{thm:regime-conv}}
{selected-regime stability}}\label{app:regime-conv}

\Cref{cor:argmin} gives \(c_n\to c^\star\).  Let
\(\nu_n^-,\nu_n^+\) be the two restrictions under whichever boundary
convention realizes \(\underline H_G(\mu_n,c_n)\).  Since
\(\mu(\{c^\star\})=0\), \Cref{lem:moving-trunc} applies to either
convention and gives
\[
 \nu_n^-\psito\restr{\mu}{(-\infty,c^\star)},\qquad
 \nu_n^+\psito\restr{\mu}{(c^\star,\infty)}.
\]
Both limits are nonzero.  The assumed uniqueness of their centers and
\Cref{lem:V-continuity} therefore imply
\[
 a_n\to a^\star,\qquad b_n\to b^\star.
\]

Let \(\widetilde r_n=q_{\theta_n}\) denote the exact selected encoding under
the cost-realizing convention.  The right-continuous \(r_n\) in the theorem
is its canonical c\`adl\`ag graph representative; it need not itself be an
optimizer.  The two functions differ only when the convention puts an atom
at \(c_n\) in the lower regime.  The moving-atom mass vanishes:
for each \(\delta>0\), eventually
\[
 \mu_n(\{c_n\})
 \le\mu_n([c^\star-\delta,c^\star+\delta]),
\]
whose limiting upper bound tends to \(\mu(\{c^\star\})=0\) as
\(\delta\downarrow0\).  Thus
\[
 \mu_n(\widetilde r_n\ne r_n)
 \le \mu_n(\{c_n\})\longrightarrow0.
\]
The convention discrepancy is also asymptotically irrelevant for the
objective.  Indeed, \(a_n,b_n,c_n\) eventually lie in a
common compact interval, so continuity of \(G\) gives
\[
 \mu_n(\{c_n\})
 \left|G(c_n-a_n)-G(c_n-b_n)\right|\longrightarrow0.
\]
Midpoint cuts in support gaps have no discrepancy at all.

Parameter convergence gives pointwise convergence at every
\(x\ne c^\star\).  On a compact interval, the parameters and functions
are uniformly bounded, so dominated convergence gives every finite local
Lebesgue-\(L^p\) limit.  Finally, on a compact interval containing
\(c^\star\) in
its interior, choose the piecewise-linear increasing homeomorphism
\(\lambda_n\) that fixes the interval endpoints and maps \(c^\star\) to
\(c_n\).  Then
\[
 \|\lambda_n-\mathrm{id}\|_\infty\to0,\qquad
 \|r_n\circ\lambda_n-r\|_\infty
 \le\max\{|a_n-a^\star|,|b_n-b^\star|\}\to0.
\]
On a compact interval away from \(c^\star\), eventually both functions use
the same plateau and the identity time change suffices.  This is local
\(J_1\) convergence.

\backlink{thm:regime-conv}

\begingroup
\small
\bibliographystyle{plainnat}
\bibliography{references}

@article{AbayaWise1984,
  author  = {Abaya, Efren F. and Wise, Gary L.},
  title   = {Convergence of Vector Quantizers with Applications to Optimal Quantization},
  journal = {SIAM Journal on Applied Mathematics},
  year    = {1984},
  volume  = {44},
  number  = {1},
  pages   = {183--189},
  doi     = {10.1137/0144013}
}

@article{Artzner1999,
  author  = {Artzner, Philippe and Delbaen, Freddy and Eber, Jean-Marc and Heath, David},
  title   = {Coherent Measures of Risk},
  journal = {Mathematical Finance},
  year    = {1999},
  volume  = {9},
  number  = {3},
  pages   = {203--228},
  doi     = {10.1111/1467-9965.00068}
}

@article{ArtsteinWets1988,
  author  = {Artstein, Zvi and Wets, Roger J.-B.},
  title   = {Approximating the Integral of a Multifunction},
  journal = {Journal of Multivariate Analysis},
  year    = {1988},
  volume  = {24},
  number  = {2},
  pages   = {285--308},
  doi     = {10.1016/0047-259X(88)90041-3}
}

@article{ArtsteinWets1994,
  author  = {Artstein, Zvi and Wets, Roger J.-B.},
  title   = {Stability Results for Stochastic Programs and Sensors,
             Allowing for Discontinuous Objective Functions},
  journal = {SIAM Journal on Optimization},
  year    = {1994},
  volume  = {4},
  number  = {3},
  pages   = {537--550},
  doi     = {10.1137/0804030}
}

@article{BanerjeeMcKeague2007,
  author  = {Banerjee, Moulinath and McKeague, Ian W.},
  title   = {Confidence Sets for Split Points in Decision Trees},
  journal = {The Annals of Statistics},
  year    = {2007},
  volume  = {35},
  number  = {2},
  pages   = {543--574},
  doi     = {10.1214/009053606000001415}
}

@article{BuehlmannYu2002,
  author  = {B{\"u}hlmann, Peter and Yu, Bin},
  title   = {Analyzing Bagging},
  journal = {The Annals of Statistics},
  year    = {2002},
  volume  = {30},
  number  = {4},
  pages   = {927--961},
  doi     = {10.1214/aos/1031689014}
}

@article{ArtsteinWets1995,
  author  = {Artstein, Zvi and Wets, Roger J.-B.},
  title   = {Consistency of Minimizers and the {SLLN} for Stochastic Programs},
  journal = {Journal of Convex Analysis},
  year    = {1995},
  volume  = {2},
  number  = {1--2},
  pages   = {1--17},
  url     = {https://journalofconvexanalysis.com/articles/jca02001/}
}

@article{BelliniKlarMuellerRosazza2014,
  author  = {Bellini, Fabio and Klar, Bernhard and M{\"u}ller, Alfred and Rosazza Gianin, Emanuela},
  title   = {Generalized Quantiles as Risk Measures},
  journal = {Insurance: Mathematics and Economics},
  year    = {2014},
  volume  = {54},
  pages   = {41--48},
  doi     = {10.1016/j.insmatheco.2013.10.015}
}

@article{Fisher1958,
  author  = {Fisher, Walter D.},
  title   = {On Grouping for Maximum Homogeneity},
  journal = {Journal of the American Statistical Association},
  year    = {1958},
  volume  = {53},
  number  = {284},
  pages   = {789--798},
  doi     = {10.1080/01621459.1958.10501479}
}

@article{FaugerasPages2024,
  author  = {Faugeras, Olivier P. and Pag{\`e}s, Gilles},
  title   = {Risk Quantization by Magnitude and Propensity},
  journal = {Insurance: Mathematics and Economics},
  year    = {2024},
  volume  = {116},
  pages   = {134--147},
  doi     = {10.1016/j.insmatheco.2024.02.005}
}

@article{FeinbergKasyanovRoyset2023,
  author  = {Feinberg, Eugene A. and Kasyanov, Pavlo O. and Royset, Johannes O.},
  title   = {Epi-Convergence of Expectation Functions under Varying Measures and Integrands},
  journal = {Journal of Convex Analysis},
  year    = {2023},
  volume  = {30},
  number  = {3},
  pages   = {917--936},
  url     = {https://journalofconvexanalysis.com/articles/jca30043/}
}

@article{FoellmerSchied2002,
  author  = {F{\"o}llmer, Hans and Schied, Alexander},
  title   = {Convex Measures of Risk and Trading Constraints},
  journal = {Finance and Stochastics},
  year    = {2002},
  volume  = {6},
  number  = {4},
  pages   = {429--447},
  doi     = {10.1007/s007800200072}
}

@article{Gneiting2011,
  author  = {Gneiting, Tilmann},
  title   = {Making and Evaluating Point Forecasts},
  journal = {Journal of the American Statistical Association},
  year    = {2011},
  volume  = {106},
  number  = {494},
  pages   = {746--762},
  doi     = {10.1198/jasa.2011.r10138}
}

@article{GrayNeuhoff1998,
  author  = {Gray, Robert M. and Neuhoff, David L.},
  title   = {Quantization},
  journal = {IEEE Transactions on Information Theory},
  year    = {1998},
  volume  = {44},
  number  = {6},
  pages   = {2325--2383},
  doi     = {10.1109/18.720541}
}

@article{Huber1964,
  author  = {Huber, Peter J.},
  title   = {Robust Estimation of a Location Parameter},
  journal = {The Annals of Mathematical Statistics},
  year    = {1964},
  volume  = {35},
  number  = {1},
  pages   = {73--101},
  doi     = {10.1214/aoms/1177703732}
}

@article{KoenkerBassett1978,
  author  = {Koenker, Roger and Bassett, Jr., Gilbert},
  title   = {Regression Quantiles},
  journal = {Econometrica},
  year    = {1978},
  volume  = {46},
  number  = {1},
  pages   = {33--50},
  doi     = {10.2307/1913643}
}

@article{Jaffe2025,
  author  = {Jaffe, Adam Quinn},
  title   = {Asymptotic Theory of Geometric and Adaptive {$k$}-Means Clustering},
  journal = {The Annals of Statistics},
  year    = {2025},
  volume  = {53},
  number  = {4},
  pages   = {1559--1586},
  doi     = {10.1214/25-AOS2514}
}

@article{KratschmerSchiedZaehle2014,
  author  = {Kr{\"a}tschmer, Volker and Schied, Alexander and Z{\"a}hle, Henryk},
  title   = {Comparative and Qualitative Robustness for Law-Invariant Risk Measures},
  journal = {Finance and Stochastics},
  year    = {2014},
  volume  = {18},
  number  = {2},
  pages   = {271--295},
  doi     = {10.1007/s00780-013-0225-4}
}

@incollection{Kusuoka2001,
  author    = {Kusuoka, Shigeo},
  title     = {On Law Invariant Coherent Risk Measures},
  booktitle = {Advances in Mathematical Economics},
  editor    = {Kusuoka, Shigeo and Maruyama, Toru},
  publisher = {Springer},
  address   = {Tokyo},
  year      = {2001},
  volume    = {3},
  pages     = {83--95},
  doi       = {10.1007/978-4-431-67891-5_4}
}

@article{LiuPages2020,
  author  = {Liu, Yating and Pag{\`e}s, Gilles},
  title   = {Convergence Rate of Optimal Quantization and Application to the Clustering Performance of the Empirical Measure},
  journal = {Journal of Machine Learning Research},
  year    = {2020},
  volume  = {21},
  number  = {86},
  pages   = {1--36},
  url     = {https://jmlr.org/papers/v21/18-804.html}
}

@article{Lloyd1982,
  author  = {Lloyd, Stuart P.},
  title   = {Least Squares Quantization in {PCM}},
  journal = {IEEE Transactions on Information Theory},
  year    = {1982},
  volume  = {28},
  number  = {2},
  pages   = {129--137},
  doi     = {10.1109/TIT.1982.1056489}
}

@article{Max1960,
  author  = {Max, Joel},
  title   = {Quantizing for Minimum Distortion},
  journal = {IRE Transactions on Information Theory},
  year    = {1960},
  volume  = {6},
  number  = {1},
  pages   = {7--12},
  doi     = {10.1109/TIT.1960.1057548}
}

@article{NeweyPowell1987,
  author  = {Newey, Whitney K. and Powell, James L.},
  title   = {Asymmetric Least Squares Estimation and Testing},
  journal = {Econometrica},
  year    = {1987},
  volume  = {55},
  number  = {4},
  pages   = {819--847},
  doi     = {10.2307/1911031}
}

@article{NielsenNock2014,
  author  = {Nielsen, Frank and Nock, Richard},
  title   = {Optimal Interval Clustering: Application to {Bregman} Clustering and Statistical Mixture Learning},
  journal = {IEEE Signal Processing Letters},
  year    = {2014},
  volume  = {21},
  number  = {10},
  pages   = {1289--1292},
  doi     = {10.1109/LSP.2014.2333001}
}

@article{Otsu1979,
  author  = {Otsu, Nobuyuki},
  title   = {A Threshold Selection Method from Gray-Level Histograms},
  journal = {IEEE Transactions on Systems, Man, and Cybernetics},
  year    = {1979},
  volume  = {9},
  number  = {1},
  pages   = {62--66},
  doi     = {10.1109/TSMC.1979.4310076}
}

@article{Pollard1981,
  author  = {Pollard, David},
  title   = {Strong Consistency of $K$-Means Clustering},
  journal = {The Annals of Statistics},
  year    = {1981},
  volume  = {9},
  number  = {1},
  pages   = {135--140},
  doi     = {10.1214/aos/1176345339}
}

@article{Pollard1982,
  author  = {Pollard, David},
  title   = {Quantization and the Method of $k$-Means},
  journal = {IEEE Transactions on Information Theory},
  year    = {1982},
  volume  = {28},
  number  = {2},
  pages   = {199--205},
  doi     = {10.1109/TIT.1982.1056481}
}

@article{RockafellarUryasev2002,
  author  = {Rockafellar, R. Tyrrell and Uryasev, Stanislav},
  title   = {Conditional Value-at-Risk for General Loss Distributions},
  journal = {Journal of Banking \& Finance},
  year    = {2002},
  volume  = {26},
  number  = {7},
  pages   = {1443--1471},
  doi     = {10.1016/S0378-4266(02)00271-6}
}

@article{RockafellarUryasevZabarankin2006,
  author  = {Rockafellar, R. Tyrrell and Uryasev, Stanislav and Zabarankin, Michael},
  title   = {Generalized Deviations in Risk Analysis},
  journal = {Finance and Stochastics},
  year    = {2006},
  volume  = {10},
  number  = {1},
  pages   = {51--74},
  doi     = {10.1007/s00780-005-0165-8}
}

@book{RockafellarWets1998,
  author    = {Rockafellar, R. Tyrrell and Wets, Roger J.-B.},
  title     = {Variational Analysis},
  publisher = {Springer},
  address   = {Berlin},
  year      = {1998},
  series    = {Grundlehren der mathematischen Wissenschaften},
  volume    = {317},
  doi       = {10.1007/978-3-642-02431-3}
}

@article{SabinGray1986,
  author  = {Sabin, Michael J. and Gray, Robert M.},
  title   = {Global Convergence and Empirical Consistency of the Generalized {Lloyd} Algorithm},
  journal = {IEEE Transactions on Information Theory},
  year    = {1986},
  volume  = {32},
  number  = {2},
  pages   = {148--155},
  doi     = {10.1109/TIT.1986.1057168}
}

@book{Villani2009,
  author    = {Villani, C{\'e}dric},
  title     = {Optimal Transport: Old and New},
  publisher = {Springer},
  address   = {Berlin},
  year      = {2009},
  series    = {Grundlehren der mathematischen Wissenschaften},
  volume    = {338},
  doi       = {10.1007/978-3-540-71050-9}
}

@article{WangSong2011,
  author  = {Wang, Haizhou and Song, Mingzhou},
  title   = {{Ckmeans.1d.dp}: Optimal $k$-Means Clustering in One Dimension by Dynamic Programming},
  journal = {The R Journal},
  year    = {2011},
  volume  = {3},
  number  = {2},
  pages   = {29--33},
  doi     = {10.32614/RJ-2011-015}
}

@book{Whitt2002,
  author    = {Whitt, Ward},
  title     = {Stochastic-Process Limits: An Introduction to Stochastic-Process Limits and Their Application to Queues},
  publisher = {Springer},
  address   = {New York},
  year      = {2002},
  series    = {Springer Series in Operations Research and Financial Engineering},
  doi       = {10.1007/b97479}
}

@article{XueTitterington2011,
  author  = {Xue, Jing-Hao and Titterington, D. Michael},
  title   = {Median-Based Image Thresholding},
  journal = {Image and Vision Computing},
  year    = {2011},
  volume  = {29},
  number  = {9},
  pages   = {631--637},
  doi     = {10.1016/j.imavis.2011.06.003}
}

@book{GrafLuschgy2000,
  author    = {Graf, Siegfried and Luschgy, Harald},
  title     = {Foundations of Quantization for Probability Distributions},
  publisher = {Springer},
  address   = {Berlin},
  year      = {2000},
  series    = {Lecture Notes in Mathematics},
  volume    = {1730},
  doi       = {10.1007/BFb0103945}
}

@article{HennigViroliAnderlucci2019,
  author  = {Hennig, Christian and Viroli, Cinzia and Anderlucci, Laura},
  title   = {Quantile-Based Clustering},
  journal = {Electronic Journal of Statistics},
  year    = {2019},
  volume  = {13},
  number  = {2},
  pages   = {4849--4883},
  doi     = {10.1214/19-EJS1640}
}

@article{KratschmerSchiedZaehle2012,
  author  = {Kr{\"a}tschmer, Volker and Schied, Alexander and Z{\"a}hle, Henryk},
  title   = {Qualitative and Infinitesimal Robustness of Tail-Dependent Statistical Functionals},
  journal = {Journal of Multivariate Analysis},
  year    = {2012},
  volume  = {103},
  number  = {1},
  pages   = {35--47},
  doi     = {10.1016/j.jmva.2011.06.005}
}

@article{RockafellarUryasev2013,
  author  = {Rockafellar, R. Tyrrell and Uryasev, Stanislav},
  title   = {The Fundamental Risk Quadrangle in Risk Management, Optimization and Statistical Estimation},
  journal = {Surveys in Operations Research and Management Science},
  year    = {2013},
  volume  = {18},
  number  = {1--2},
  pages   = {33--53},
  doi     = {10.1016/j.sorms.2013.03.001}
}

@article{WangLiHaerdle2022,
  author  = {Wang, Bingling and Li, Yingxing and H{\"a}rdle, Wolfgang Karl},
  title   = {{K}-Expectiles Clustering},
  journal = {Journal of Multivariate Analysis},
  year    = {2022},
  volume  = {189},
  pages   = {104869},
  doi     = {10.1016/j.jmva.2021.104869}
}

@article{CascosMolchanov2007,
  author  = {Cascos, Ignacio and Molchanov, Ilya},
  title   = {Multivariate Risks and Depth-Trimmed Regions},
  journal = {Finance and Stochastics},
  year    = {2007},
  volume  = {11},
  number  = {3},
  pages   = {373--397},
  doi     = {10.1007/s00780-007-0043-7}
}

@article{JouiniMeddebTouzi2004,
  author  = {Jouini, Ely{\`e}s and Meddeb, Moncef and Touzi, Nizar},
  title   = {Vector-Valued Coherent Risk Measures},
  journal = {Finance and Stochastics},
  year    = {2004},
  volume  = {8},
  number  = {4},
  pages   = {531--552},
  doi     = {10.1007/s00780-004-0127-6}
}

@article{HamelHeyde2010,
  author  = {Hamel, Andreas H. and Heyde, Frank},
  title   = {Duality for Set-Valued Measures of Risk},
  journal = {SIAM Journal on Financial Mathematics},
  year    = {2010},
  volume  = {1},
  number  = {1},
  pages   = {66--95},
  doi     = {10.1137/080743494}
}

@article{FisslerZiegel2016,
  author  = {Fissler, Tobias and Ziegel, Johanna F.},
  title   = {Higher Order Elicitability and {Osband's} Principle},
  journal = {The Annals of Statistics},
  year    = {2016},
  volume  = {44},
  number  = {4},
  pages   = {1680--1707},
  doi     = {10.1214/16-AOS1439}
}

@article{FisslerFrongilloHlavinovaRudloff2021,
  author  = {Fissler, Tobias and Frongillo, Rafael and
             Hlavinov{\'a}, Jana and Rudloff, Birgit},
  title   = {Forecast Evaluation of Quantiles, Prediction Intervals, and
             Other Set-Valued Functionals},
  journal = {Electronic Journal of Statistics},
  year    = {2021},
  volume  = {15},
  number  = {1},
  pages   = {1034--1084},
  doi     = {10.1214/21-EJS1808}
}

@article{Eubank1988,
  author  = {Eubank, R. L.},
  title   = {Optimal Grouping, Spacing, Stratification, and Piecewise
             Constant Approximation},
  journal = {SIAM Review},
  year    = {1988},
  volume  = {30},
  number  = {3},
  pages   = {404--420},
  doi     = {10.1137/1030092}
}

@article{Flury1990,
  author  = {Flury, Bernard A.},
  title   = {Principal Points},
  journal = {Biometrika},
  year    = {1990},
  volume  = {77},
  number  = {1},
  pages   = {33--41},
  doi     = {10.1093/biomet/77.1.33}
}

@article{LiFlury1995,
  author  = {Li, Luning and Flury, Bernard},
  title   = {Uniqueness of Principal Points for Univariate Distributions},
  journal = {Statistics \& Probability Letters},
  year    = {1995},
  volume  = {25},
  number  = {4},
  pages   = {323--327},
  doi     = {10.1016/0167-7152(94)00237-3}
}

@article{MeaseNair2006,
  author  = {Mease, David and Nair, Vijayan N.},
  title   = {Unique Optimal Partitions of Distributions and Connections to
             Hazard Rates and Stochastic Ordering},
  journal = {Statistica Sinica},
  year    = {2006},
  volume  = {16},
  number  = {4},
  pages   = {1299--1312},
  url     = {https://www3.stat.sinica.edu.tw/statistica/j16n4/j16n412/j16n412.html}
}

@misc{NistorPopescu2026,
  author        = {Nistor, Mihaela-Adriana and Popescu, Ionel},
  title         = {A Property of Log-Concave and Weakly-Symmetric
                   Distributions for Two Step Approximations of Random Variables},
  year          = {2026},
  eprint        = {2603.12767},
  archivePrefix = {arXiv},
  primaryClass  = {math.PR},
  note          = {arXiv:2603.12767 [math.PR]},
  doi           = {10.48550/arXiv.2603.12767},
  url           = {https://arxiv.org/abs/2603.12767}
}

@article{Tarpey1994,
  author  = {Tarpey, Thaddeus},
  title   = {Two Principal Points of Symmetric, Strongly Unimodal Distributions},
  journal = {Statistics \& Probability Letters},
  year    = {1994},
  volume  = {20},
  number  = {4},
  pages   = {253--257},
  doi     = {10.1016/0167-7152(94)90012-4}
}

@article{VogelLachout2003,
  author  = {Vogel, Silvia and Lachout, Petr},
  title   = {On Continuous Convergence and Epi-Convergence of Random
             Functions. Part {II}: Sufficient Conditions and Applications},
  journal = {Kybernetika},
  year    = {2003},
  volume  = {39},
  number  = {1},
  pages   = {99--118},
  url     = {https://dml.cz/handle/10338.dmlcz/135510}
}
\endgroup

\end{document}